\newcolumntype{x}[1]{>{\centering\arraybackslash}p{#1}}
\theoremstyle{plain} { \theorembodyfont{\rmfamily}

}
\newtheorem{proposition}{Proposition}[section]
\newtheorem{lemma}{Lemma}[section]
\newtheorem{theorem}{Theorem}[section]
\newcommand{\qed}{\hfill Q.E.D}
\def\Halmos{\mbox{\quad$\square$}}
\newenvironment{proof}{\vspace{1ex}\noindent{\bf Proof}\hspace{0.5em}}
    {\Halmos\vspace{1ex}}
\newcommand{\cF}{\mathcal{F}}
\newcommand{\cT}{\mathcal{T}}
\newcommand{\cP}{\mathcal{P}}
\newcommand{\bZ}{\mathbb{Z}}
\newcommand{\bR}{\mathbb{R}}
\newcommand{\bE}{\mathbb{E}}
\newcommand{\bN}{\mathbb{N}}
\newcommand{\prob}{\mathbb{P}}
\newcommand{\vecx}{\mathbf{x}}
\newcommand{\vecy}{\mathbf{y}}
\begin{document}
\title{When to Quit Gambling, if You Must!\thanks{We thank Nick Barberis for a long list of constructive comments on a previous version of the paper that have led to a much improved version.}
}

\author{Sang Hu
\thanks{School of Data Science, The Chinese University of Hong Kong, Shenzhen, China 518172. Email: \texttt{husang@cuhk.edu.cn}.
This author would like to acknowledge the funding of National Natural Science of China (Grant No. 11901494).}
\and Jan Ob{\l}{\'o}j
\thanks{Mathematical Institute, the Oxford-Man Institute of Quantitative Finance and St John's College, University of Oxford, Oxford, UK. Email: \texttt{Jan.Obloj@maths.ox.ac.uk}.
Part of this research was completed whilst this author was visiting CUHK and he is grateful for the support from the host. He also gratefully acknowledges support from ERC Starting Grant {\sc RobustFinMath} 335421.}
\and Xun Yu Zhou
\thanks{Department of Industrial Engineering and Operations Research, Columbia University, New York, New York 10027. Email: \texttt{xz2574@columbia.edu}.
This author gratefully acknowledges financial supports through start-up grants at both University of Oxford and Columbia University, and through Oxford--Nie Lab for Financial Big Data and the Nie Center for Intelligent Asset Management.}
}

\maketitle

\begin{abstract}
We develop an approach to solve \citet{Barberis2012:Casino}'s casino gambling model in which a gambler whose preferences are specified by the cumulative prospect theory (CPT) must decide when to stop gambling by a prescribed deadline. We assume that the gambler can assist their decision using an independent randomization, and explain why it is a reasonable assumption. The problem is inherently time-inconsistent due to the probability weighting in CPT, and we study both precommitted and na\"ive stopping strategies. We turn the original problem into a computationally tractable mathematical program, based on  which we derive an optimal precommitted rule which is randomized and Markovian. The analytical treatment enables us to make several predictions regarding a gambler's behavior, including that with randomization they may enter the casino even when allowed to play only once, that whether they will play longer once they are granted more bets depends on whether they are in a gain or at a loss, and that it is prevalent that a naivit\'e never stops loss.

\bigskip

{\bf Key words:} {casino gambling; cumulative prospect theory; optimal stopping; probability weighting; time inconsistency; randomization; finite time horizon; Skorokhod embedding; potential function.}
\end{abstract}

\section{Introduction}\label{se:Introduction}

\citet{Barberis2012:Casino} proposes a casino gambling model in the framework of
\cite{TverskyKahneman1992:CPT}'s cumulative prospect theory (CPT) to study the optimal timing to  quit gambling and leave the casino. The author has derived two key economic insights: (1) a CPT gambler may be willing to enter the casino even though its bets offer neither positive expected values nor skewness because, by implementing an appropriate stopping strategy, he would be able to build a positively
skewed final winning amount that would be favored by the underlying probability weighting in CPT; (2) there is an inherent time-inconsistency due to the dynamically changing strength of probability weighting on a same event: the gambler may deviate completely from his initial stopping strategy as he gambles along, and his eventual stopping behavior depends on whether he is aware of this time-inconsistency and whether he is able to commit his original plan.\footnote{\citet{Barberis2012:Casino} discusses three types of gamblers, following the original classification of \cite{Strotz1955}: a {\it na\"ive} gambler who is unaware of the time-inconsistency and changes his strategy all the time; a {\it precommitted} gambler  who is aware of time-inconsistency and can commit to his initial plan; and a {\it sophisticated} gambler who is aware of time-inconsistency yet unable to commit, and at each time takes the future selves' disobedience into account when devising an optimal strategy.}

It is, however, not an objective of \citet{Barberis2012:Casino} to develop a { general} approach to {\it solve} the casino model he puts forward. \citet{Barberis2012:Casino}  acknowledges that
the nonlinear probability weighting involved in CPT
makes it ``very difficult" to solve the problem analytically, and ``the problem has no known analytical
solution for general $T$" (p. 42), where $T$ is an exogenously given number of bets the gambler can maximally have.
Instead, \citet{Barberis2012:Casino} uses an {\it exhaustive search} to find a solution; namely, he enumerates all the possible {\it Markovian} stopping strategies, calculates the CPT value of each of them and finds the one that achieves the highest CPT value as the optimal strategy. As one would expect, this approach works only for smaller $T$, as the number of admissible Markovian strategies is {\it exponential} in $T^2$.\footnote{The number of nodes is $\frac{T(T+1)}{2}$  in a binomial tree of horizon $T$, and at each node there is a binary choice of \{stop, continue\}. Hence the total number of strategies is $2^{\frac{T(T+1)}{2}}$.}
 \citet{Barberis2012:Casino} solves the problem with $T=5$.\footnote{We ran exhaustive search on a desktop with Intel Core i5-4590/CPU 3.30GHz/RAM 8.00GB for different $T$'s while keeping the other parameters same as \citet{Barberis2012:Casino}'s. The running times for $T = 5, 6, 7 $ were 39 seconds, 771 seconds and 27 hours, respectively. We were unable to obtain the solution for $T=8$ due to out of storage, with the running time estimated to be 300 days.}

Naturally, to better understand the implications of a model it is important to have a {\it systematic approach}
to solve it, not necessarily in an analytically closed form, but in a computationally efficient way.\footnote{Consider, e.g., the simplex method for linear programs or the dynamic programming formulation for optimal control problems.} Not only can we then obtain optimal solutions for arbitrary values of parameters, but we may gain (likely more profound) economic insights from the model by post-optimality analyses such as comparative statics.
The main technical hurdle to solve the casino model is probability weighting, as pointed out by \citet{Barberis2012:Casino}. The two main approaches in the classical optimal stopping theory -- dynamic programming (variational inequalities) and martingale method -- both fail under probability weighting: the former does because of the time-inconsistency, and the latter does because of the absence of a ``tower property'' with respect to the {\it weighted} probability.

\citet{HeHuOblojZhou2014:OptimalCasinoBettingwhyluckycoinandgoodmemoryareimportant,HeHuOblojZhou2017:Twoexplicitembedding,HeHuOblojZhou2015:OptimalCasinoBettingstrategiesofprecommittedandnaivegamblers} are probably the first series of papers that aim at an {\it analytical treatment} of the casino model, albeit in the {\it infinite} time horizon.\footnote{Here by ``analytical treatment" we mean an optimization analysis not based on heuristics or on brute force such as an exhaustive search.}  The main idea of these papers consists of two deeply intertwined steps: (1) search the optimal {\it probability  distribution} of the final winning/losing amount upon leaving the casino instead of the optimal {\it time} to leave; (2) once the optimal distribution is found, recover the optimal time that generates it.\footnote{This idea was first put forth by \cite{XuZhou2012:OptimalStoppingunderProbabilityDistortion} for a continuous-time optimal stopping model featuring probability weighting. There is considerable difficulty to adapt this idea to the discrete-time setting.} Both steps call for a complete characterization of the set of all the admissible distributions, and the second step is the discrete-time version of the eminent {\it Skorokhod embedding theorem} which in the casino setting is solved in \citet{HeHuOblojZhou2017:Twoexplicitembedding}. The main thrust to make this idea work is to permit {\it randomization}, namely the gambler can flip an independent,  possibly biased,  coin to assist his decision each step of the way.
The probabilities of the head of the coin are {\it endogenous} and dynamically changing; thus they are {\it part} of the final solution.\footnote{Mathematically, randomization {\it convexifies} the aforementioned set of admissible distributions. Hence, \cite{HeHuOblojZhou2014:OptimalCasinoBettingwhyluckycoinandgoodmemoryareimportant,
HeHuOblojZhou2015:OptimalCasinoBettingstrategiesofprecommittedandnaivegamblers} use
randomization  as a {\it technical} tool to make the Skorokhod embedding work, but fall short of explaining, economically,  why people would randomize and how exactly they do it. The present paper offers  discussions on these issues; see Subsection \ref{randomize}.}
 The randomization of decisions is a key feature when studying agents with CPT preferences, as discussed independently by \cite{HendersonHobsonTse14}.
\citet{HeHuOblojZhou2014:OptimalCasinoBettingwhyluckycoinandgoodmemoryareimportant, HeHuOblojZhou2015:OptimalCasinoBettingstrategiesofprecommittedandnaivegamblers} also allow path-dependent strategies, that is, the stopping decision is made based on the whole betting history instead of just the current winning/losing amount. They further show that allowing path-dependent strategies or randomized ones {\it strictly} improve the optimal CPT values. Based on  these analyses, \citet{HeHuOblojZhou2015:OptimalCasinoBettingstrategiesofprecommittedandnaivegamblers}
turn the casino model into an infinite dimensional mathematical program that can be solved fairly efficiently. Most of the gambler's behaviors -- those of a precommitter and of a naivet\'e -- implied from the solutions reconcile qualitatively with \citet{Barberis2012:Casino}'s results; but there are also new findings. For example,
it is revealed that, for most empirically relevant CPT  parameter estimates, a precommitted gambler lets gain run while stops loss, but
 a na\"ive one almost surely does not stop at {\it any} loss level.

As noted, \cite{HeHuOblojZhou2014:OptimalCasinoBettingwhyluckycoinandgoodmemoryareimportant,
HeHuOblojZhou2015:OptimalCasinoBettingstrategiesofprecommittedandnaivegamblers}
deal with the infinite horizon gambling model.
There are important reasons to study the {\it finite} horizon model under CPT preferences, despite the existing results for the infinite horizon counterpart. Conceptually, the finite horizon problem approximates the reality much better, as a gambler clearly will not be able to play arbitrarily and indefinitely long. Also, the original work of \citet{Barberis2012:Casino} considers $T=5$ and hence we need to solve the finite horizon model in order to be able to make a direct comparison. It is worth noting that solutions to the finite horizon case can {\it not} be recovered from those of the infinite horizon case by a simple truncation: if
 $\tau$ is optimal for the latter, then, typically, $\tau\land T$ will not be optimal for the former.

Methodologically, the finite horizon case is significantly more complex. It is well acknowledged that optimal stopping in a finite horizon is {\it fundamentally} more difficult than its infinite horizon counterpart, mainly because value function of the former has both time and spatial variables while the latter has only spatial variables. 
In the infinite time horizon setting in which the accumulated winning/losing amount
is modelled by a symmetric random walk $S$, \citet{HeHuOblojZhou2017:Twoexplicitembedding} show that for any centered probability measure $\mu$ on the set of integers $\mathbb{Z}$, there exists a {\it randomized} stopping time $\tau$ such that $ S_\tau$'s distribution is $\sim \mu$.\footnote{In the terminology of  Skorokhod embedding theorem, we say $\tau$ {\it embeds} $\mu$ in $S$.} As discussed previously, this is the key theoretical underpinning for the new approach. Unfortunately, this result is no longer true if the stopping time is  constrained  by a pre-specified deadline. Indeed, {\it additional} conditions are required for measures that can be embedded by
uniformly bounded stopping times. One of the contributions of this paper is to identify explicitly these conditions, which in turn enables us to
reformulate the original casino model into a mathematical program whose number of constraints is of the order of $T$ and, hence, can be efficiently solved.

Once we have an algorithm to solve the gambling model for {\it any} parameter values, we will then be able to first compare our results with those of \citet{Barberis2012:Casino}'s. In particular, we compute for exactly the same case that is solved and discussed in \citet{Barberis2012:Casino} with  $T=5$. The respective stopping strategies for a precommitter are identical except in two time--state instances in which our decisions are to stop with very small probabilities (0.00864 and 0.0368 respectively) whereas Barberis' are just to continue. Qualitatively, both strategies are of the so-called {\it loss-exit} type, namely, they continue in gains but stop after having accumulated sufficient amounts of losses. With randomization, our optimal CPT value  improves, if slightly, over Barberis'. 
Likewise,  the respective na\"ive strategies are the same save for one time--state instance in which ours is  to stop with a probability of 0.179 while Barberis' is to continue. Our solution, however, enables us to look beyond the relatively short horizon  of $T=5$. Indeed, we carry out numerical experiments for different values of $T$ up to $T=20$, and discover that the interplay between the utility function, probability weighting and loss aversion dictates various gambling behaviors.

Note that our analytical treatment relies on the introduction of randomization in our model, as randomization convexifies the optimization problem. \citet{Barberis2012:Casino} does not allow randomization, for which our approach would fail. However, our solution would provide a well-founded relaxation {\it heuristic} for solving a casino model without randomization: we first relax the problem by introducing randomization, and then, for each time-state pair, round up or round down the probability of stopping to 1 (which means stop) or to 0 (which means continue).

Our approach makes it possible to analyze and understand the impacts of some key attributes of the model, which we believe is the most important contribution of this paper. For example, \citet{Barberis2012:Casino} argues that 
a gambler may be willing to enter a casino because, by implementing a loss-exit strategy, he may be able to generate a positively skewed
probability distribution of the final accumulated gain/loss which has a positive
 CPT preference value. However, he will  need to spend time building  such
a skewed distribution, which requires a sufficiently large $T$. We show, however, that for the same gambler who would have demanded a long horizon for agreeing to enter the casino, will enter even if he is allowed to play only {\it once} (i.e., $T=1$), provided that he can flip a coin. The reason for this is that, with
randomization, the gambler can {\it design} a coin {\it right away} with the desired skewed distribution,
saving all the time otherwise needed to reach that distribution. Another
insight is about the value of time: how much is time on your hands worth?
Specifically, we examine the question of what a gambler would do should he be allowed to stay one more period than previously
agreed. Would he always take advantage of this extended time horizon and {\it actually} play
the additional round? It turns out that there is no uniform answer to the question -- it depends crucially on
whether the gambler is currently in a gain or at a loss.

We also study the behaviors of a na\"ive gambler with various parameter specifications
and a longer time horizon ($T=20$). We find that, unless  he does not enter the casino, his behavior is
consistently of gain-exit type, i.e., he stops gain but lets loss run, reminiscent of the {\it disposition effect} in security trading \citep{OdeanT:98de}.
In particular, he never stops loss and gambles ``until the bitter end".
This gamble-until-the-bitter-end behavior is derived by
\citet{EbertStrack2012:UntilTheBitterEnd} in a
model in which a na\"ive  gambler can construct {\it arbitrarily} small random payoffs. Because he prefers ``skewness in the small", he  never stops gambling.
\cite{HendersonHobsonTse14}, employing the approach developed  in \citet{XuZhou2012:OptimalStoppingunderProbabilityDistortion}, investigate a stylized continuous-time model and show that a na\"ive gambler may stop  with a positive probability if she is allowed to randomize, which complements and counters the findings in \citet{EbertStrack2012:UntilTheBitterEnd}.
Both \citet{EbertStrack2012:UntilTheBitterEnd} and \cite{HendersonHobsonTse14} rely on the crucial feature of their models that allows the gambler to construct arbitrarily small random payoffs. This feature is absent
in our discrete-time model, in which the gambler cannot construct strategies with arbitrarily small random payoffs due to  the minimal stake size fixed to be \$1.
Hence, their results are not applicable to our setting. Our finding therefore suggests that the gamble-until-the-bitter-end phenomenon
is probably more prevalent of a naivit\'e's behavior.

The paper proceeds  as follows. In Section \ref{se:TheModel}, we  formulate a  casino gambling model under CPT as an optimal stopping problem and discuss why we allow randomization in our model.
In Section \ref{se:RandomizedRootstoppingtime}, we develop the key step in our approach to solve the gambling model: characterizing the set of probability distributions of all possible accumulated
winning/losing amounts upon leaving the casino. In Section \ref{se:finiteprogram}, we present a mathematical program that is equivalent to the casino model, and then
report the results of a numerical example which is studied in \citet{Barberis2012:Casino}. We discuss about various implications and predictions of our model in Section \ref{se:Discussion}. Finally, we conclude the paper by Section \ref{se:concludingremark}. Proofs are placed in Appendices.

\section{The Model}\label{se:TheModel}
In this section we first highlight the key ingredients of \citet{TverskyKahneman1992:CPT}'s CPT, then formulate the casino gambling model in a finite time horizon as an optimal stopping problem, and finally discuss about the reasons why we make randomization available in our model.

\subsection{Cumulative prospect theory}
In CPT, a {\it utility} (or {\it value})  {\it function} $u(\cdot)$ depends on  a {\it reference point} $k$ in wealth that divides {\it gains} and {\it losses}. 
An agent derives the utility from gains and losses, rather than from the absolute amount of wealth itself. The utility function is
      \begin{align*}
      u(x) = \begin{cases}
      u_+(x-k) ,  & x \geq k,\\
      -\lambda u_-(k-x) , &  x < k,
      \end{cases}
      \end{align*}
where $u_+(\cdot)$ and $u_-(\cdot)$ are both {\it concave} functions and $\lambda > 1$. This renders an overall {\it S-shaped} utility function $u(\cdot)$ that is concave (risk-averse) in the gain region $x \geq k$ and convex (risk-loving) in the loss region $x < k$. Moreover,
$\lambda >1$ yields that, for the same magnitude of a gain and a loss, the agent is more sensitive to the latter, a notion termed {\it loss aversion}. \citet{TverskyKahneman1992:CPT} propose the following parametric form of $u(\cdot)$:
\begin{align}\label{eq:utility}
        u(x) = \begin{cases}
        (x-k)^{\alpha_+} , & x \geq k,\\
        -\lambda (k-x)^{\alpha_-} , & x < k,
        \end{cases}
      \end{align}
where $0 < \alpha_\pm \leq 1$ and $\lambda > 1$; see the left panel of Figure \ref{fig:distortion_utility} for an illustration of this type of functions.

In CPT  there are also {\it probability weighting}  (or {\it distortion}) {\it functions} $w_+(\cdot)$ and $w_-(\cdot)$ applied to gains and losses respectively. 
An {\it inverse S-shaped} weighting function is first concave  and then convex in the domain of probabilities. Such a weighting function overweights both tails of a probability distribution, reflecting  the exaggeration of extremely small probabilities of extremely large gains and losses. \citet{TverskyKahneman1992:CPT} suggest a parametric  form of a weighting function $w(\cdot)$:
      \begin{align}\label{eq:distortion}
        w(p) = \frac{p^\delta}{(p^\delta+(1-p)^\delta)^{\frac{1}{\delta}}},
      \end{align}
where $0 <\delta \leq 1$; see the right panel of Figure \ref{fig:distortion_utility} for an illustration. Note that $\delta = 1$ means that no weighting  is applied.

      \begin{figure}
      \begin{minipage}[t]{0.5\textwidth}
      \includegraphics[width = \textwidth]{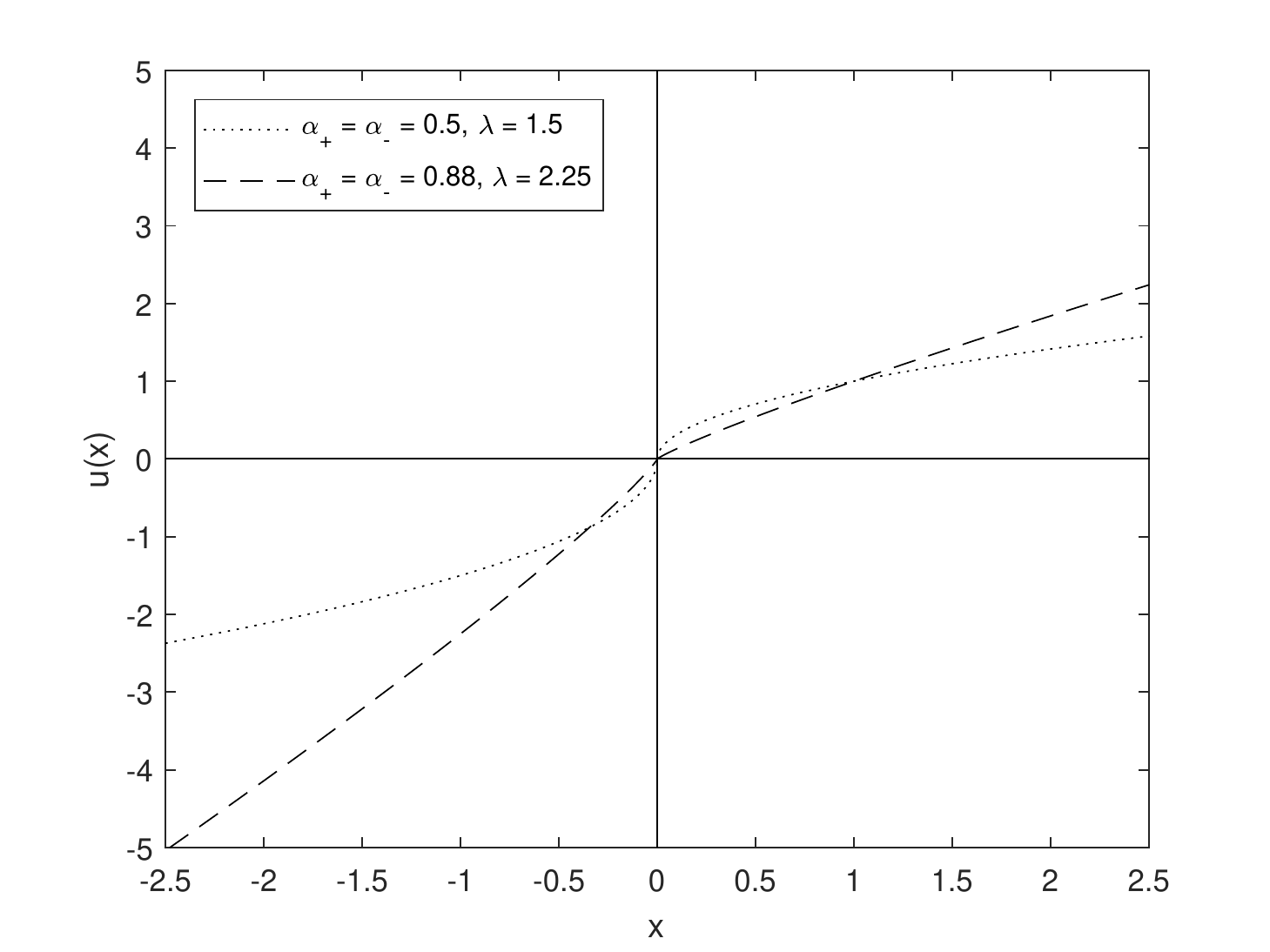}
      \end{minipage}
      \begin{minipage}[t]{0.5\textwidth}
      \includegraphics[width = \textwidth]{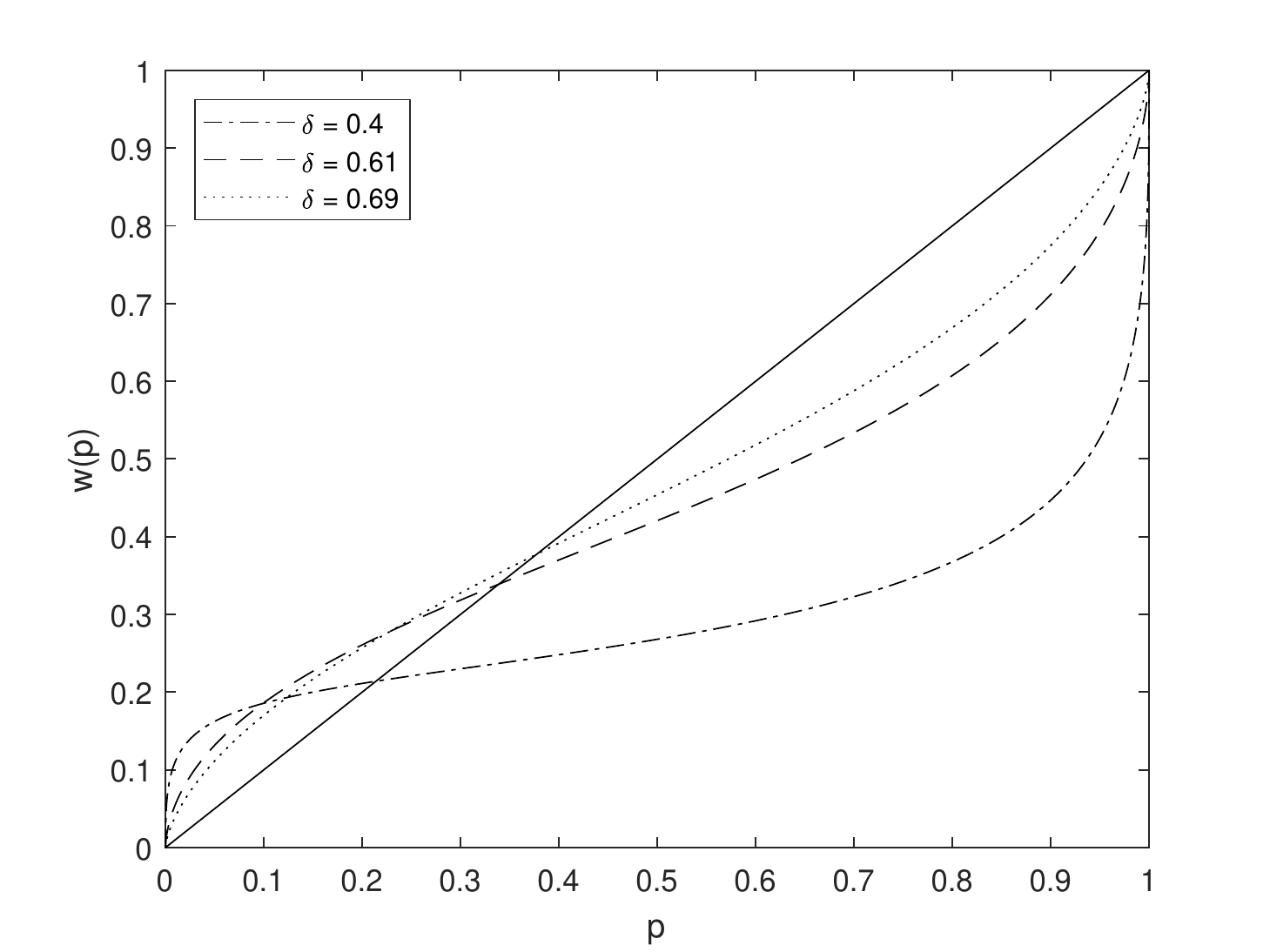}
      \end{minipage}
      \caption{The left panel graphs two S-shaped utility functions \eqref{eq:utility} with $\alpha_+ = \alpha_- = 0.5, \lambda = 1.5$ and $\alpha_+ = \alpha_- = 0.88, \lambda = 2.25$, respectively. The right panel depicts three inverse S-shaped probability weighting functions \eqref{eq:distortion} with $\delta = 0.4$, $\delta = 0.61$, and $\delta = 0.69$, respectively.}
      \label{fig:distortion_utility}
      \end{figure}

\subsection{Formulation of a casino gambling model}

We now reformulate \citet{Barberis2012:Casino}'s model of casino gambling in a finite time horizon $[0,T]$, where $T \in \mathbb{Z}^{+}:=\{1,2,3,...\}$ is given. The gambling process proceeds as follows. At time 0, the gambler is offered a fair bet, e.g., one with a roulette wheel: win or lose \$1 with equal probability.\footnote{As in \citet{Barberis2012:Casino}, we assume in this paper that the gamble is fair. It will not affect the main economic findings and implications of our  results.
A model of unfair games is more technical, and is left for a future study.}  If the gambler decides not to play the bet, then he will not even enter the casino. If the gambler enters and takes the bet, then the bet outcome is played out at time 1, leading to either a win or a loss of \$1 at time 1. At that time the gambler is offered the same bet again and he decides whether to play. If he declines the bet, then the game is over and the gambler leaves the casino with \$1 gain or loss. This process continues in the same fashion until time $T$: the bet is offered and played out repeatedly until either the first time the gambler declines the bet, or at time $T$ when the gambler {\it must} quit gambling and leave. 
The {\it accumulated} gain/loss process can be represented as a binomial tree; see
Figure \ref{fig:gainlossprocess}.  Therein, each node is marked by a pair $(t,x)$, where $t\in\bN:=\{0,1,2, \ldots\}$ stands for the time and $x\in \mathbb{Z}:=\{0,\pm 1,\pm 2,\ldots\}$ the amount of cumulative gains or losses. For example, the node $(2,-2)$ signifies a cumulative loss of \$2 at  time 2. The process has a terminal time $T$ but the gambler may quit at some earlier time $\tau\leq T$.

\begin{figure}
\centering
\includegraphics[width = 100mm]{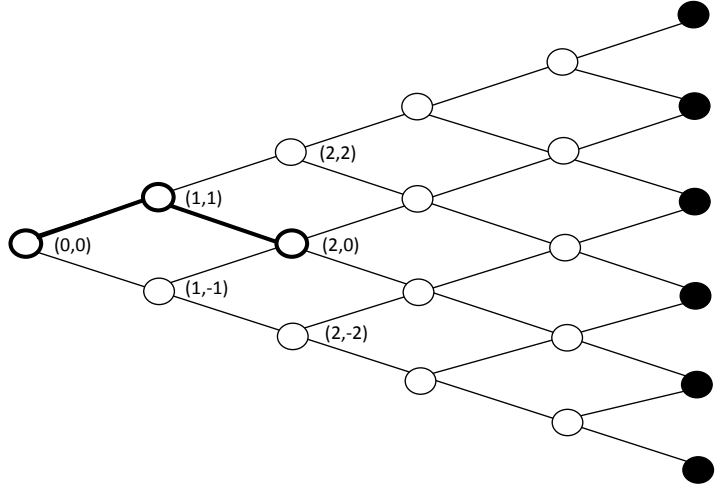}
\caption{\small
The gain/loss  binomial tree with $T = 5$. 
The gambler must  leave the casino by time 5, which is represented by the black nodes.}\label{fig:gainlossprocess}
\end{figure}



The gain/loss binomial tree $S=(S_t : t \in \bN)$  is a {\it standard symmetric random walk} (SSRW)  defined on a filtered probability space $(\Omega, \cF, \prob; (\cF_t)_{t\in \bN})$. We assume the probability space is rich enough to support an $\cF_0$-measurable random variable $\xi$ that is uniformly distributed on $[0,1]$ and {\it independent} of $S$. 

Suppose the gambler quits gambling at a random time $\tau\in[0,T]$. Then, with the reference point being his {\it initial} wealth before he enters the casino,
the CPT value of his wealth  upon leaving is
    \begin{equation}\label{prob:InfiniteDimProgramming:original}
    \begin{aligned}
    V(S_\tau) : = & \sum_{n=1}^T u_+(n)\left[w_+\left(\mathbb{P}(S_\tau \geq n)\right)-w_+\left(\mathbb{P}(S_\tau \geq n+1)\right)\right]\\
    & \quad - \lambda \sum_{n=1}^T u_-(n)\left[w_-\left(\mathbb{P}(S_\tau \leq -n)\right)-w_-\left(\mathbb{P}(S_\tau \leq -n-1)\right)\right].
    \end{aligned}
    \end{equation}
    Throughout this paper we assume that both $u_+(\cdot)$ and $u_-(\cdot)$ are concave and both $w_+(\cdot)$ and $w_-(\cdot)$ are inverse S-shaped.
The gambler needs to determine the optimal time to quit  and leave the casino: such a stopping (exit) strategy $\tau$ is made at $t=0$ to  maximize $V(S_\tau)$ among all admissible strategies. Note that, due to probability weighting, the problem is inherently {\it time-inconsistent}; so $\tau$ is optimal only at $t=0$ in the sense of a {\it precommitted} strategy; it may no longer be optimal from the vantage point of any later time $t>0$.

We now define precisely the set of {\it admissible} stopping strategies
\begin{align*}
{\mathcal{T}}_T := \left\{ \tau\in[0,T]: \tau \text{ is an } (\cF_t)_{t\in \bN}\text{-stopping time}\right\}.
\end{align*}
So a decision whether or not to quit at time $t\in[0,T]$ depends on all the information up to $t$.
In particular, path-dependent strategies are admissible.
 Moreover, $\cF_0$ -- and hence all $\cF_t$ -- contains the information about $\xi$, a uniform random variable independent of $S$. Using $\xi$, we can define countably many binary random variables which are mutually independent and also independent of $S$. In consequence, an admissible strategy may involve randomization by tossing a (generally biased) coin. In the next subsection we will outline the
rationale behind allowing randomized  strategies.

The gambler's problem is
\begin{equation}\label{originaloptimization}
\begin{array}{rl}
\underset{\tau \in {\mathcal{T}}_T}{\max} & V(S_\tau) \;.
\end{array}
\end{equation}

\subsection{Randomization}\label{randomize}

In our model (\ref{originaloptimization}), the filtration $(\cF_t)_{t\in \bN}$ includes the information based on a uniform random variable that is {\it independent} of the underlying random walk. This means that we allow the gambler to {\it assist} his decision by flipping an independent,   most likely biased, coin at each node.\footnote{Note any random variable taking a continuum of values, including the Bernoulli random variable, can be generated from a uniform random variable.}
We now discuss the rationale behind making  this randomization available in our model.


First of all, randomization  is related  to the accommodation of path-dependence.   It is practically more reasonable, and indeed necessary,  to consider   path-dependent strategies than mere Markovian ones. How has the gambler arrived at a current amount (say \$500) -- whether he has won big first and then lost most of them, or he has gradually accumulated this amount by many small wins -- clearly  might affect his decision.
As a matter of fact, almost all our decisions in life are made based on all the information, past and present, rather than just on the current state of affair.
After all, being Markovian is just a mathematical assumption and convenience that aims to dramatically reduce the dimension of the underlying problem or, in a continuous-time setting, turns an infinite dimensional problem into a finite dimensional one.
Path-dependence  is also a standard formulation in optimal stopping theory (see, e.g., \citealp*{Shiryaev:78osr}) and indeed in general stochastic control theory (see, e.g., \citealp*{YZ1999}). Now, \citet{HeHuOblojZhou2014:OptimalCasinoBettingwhyluckycoinandgoodmemoryareimportant}
shows that any non-randomized, path-dependent stopping time is equivalent to a randomized, {\it Markovian} stopping time in the sense that  both attain the same CPT value.\footnote{While this result has been obtained for the infinite horizon model, the underlying argument is exactly the same for the finite horizon case. The intuition is that, due to the independent increments of a random walk, considering all the past information can be achieved by
randomizing at the current state.} As a result, we can consider randomization {\it in lieu of} considering the past information.

However, \citet{HeHuOblojZhou2014:OptimalCasinoBettingwhyluckycoinandgoodmemoryareimportant} also show the converse is {\it not} true, namely, a randomized Markovian strategy may not be replicated by a non-randomized, path-dependent strategy, and the optimal CPT value among the former type may be {\it strictly} greater than that among the latter type. The authors attribute this to the lack of quasi-convexity of CPT preference (in contrast to the classical expected utility theory preference). This property was also exploited by \cite{HendersonHobsonTse14} to complement and counter the findings in \citet{EbertStrack2012:UntilTheBitterEnd}.

So, what are the  {\it other} reasons why  a gambler may want to randomize, beyond and independent of  replacing path-dependence and maximizing CPT preference?
Indeed, preference for randomization is observed in daily life and in different cultures, such as last-minute deals by flight booking apps, ``sushi omakase" (you entrust yourself to a sushi chef to choose the ingredients and presentations of your sushi plate), and ``fukubukuro" (grab bags filled with unknown and random contents). The practice of ``drawing divination sticks", popular in Chinese
culture even today, is a vivid example of seeking randomization in addition to religious reasons. When people are reluctant or unable to make their own decisions
on important matters (usually marriage, school, or even home move), they
go
to a temple, pray and draw divination sticks, and follow whatever words on the sticks tell them to do.

There are also rich literatures in experimental psychology and economics that document extensive experiments about individuals {\it deliberately} randomizing  when making decisions. \cite{AgranovOrtoleva2017:StochasticChoice}
report on experiments in which subjects
who face identical questions repeated  three times in a row often switch
between their answers, and a significant portion of them are even  willing to pay for a coin flip to choose answers for them. \cite{DwengerEtal2013:FlippingACoin}
study a clearing house data for university admissions
in Germany, where applicants
submit multiple rankings of the universities they wish to attend. The authors  find that a significant fraction of students report
contradictory  rankings without any rational reasons.


 The  psychological literature has put forward  various theories to explain the preference for randomization, such as
responsibility aversion \citep{LKP2011}, decision avoidance \citep{Anderson2003}, and regret theory \citep{ZR2007}. In the aforementioned example of divination sticks, prayers delegate their decisions to a god, a benefit of which is to release themselves from making the decisions on their own and hence relieve themselves from regret should the choice turn out to be  bad.
In the economics literature,  \cite{DSW2004} use the general ``utility of gambling" to explain
randomization.\footnote{The theory of utility of gambling can also be used to explain why a gambler is willing to play a bet that has unfavorable average return; see   \citet[p. 38]{Barberis2012:Casino}. Here, the utility of gambling is applied to a different phenomenon, namely, the desire to randomize.} This type of utilities either violate some basic axioms underlining classical utility theory such as second-order stochastic dominance and betweenness \citep{CH1994,Blavatskyy2006:ViolationBetweenness} (precisely the same reason why randomization strictly improves CPT value), or prefer irrational diversification or hedging \citep{B2002}, or weight more on fairness than on outcomes \citep{KKT1986,BBO2005}.

In summary, a gambler may have various independent reasons to perform randomization while gambling, which may be only partly relevant, or completely irrelevant, to his CPT preference. That is why we introduce an {\it independent} binary random variable to capture such a desire for randomization. Finally, the easy availability for flipping a biased coin nowadays also makes the inclusion of randomization more plausible. For complex gambles such as stock trading, one can easily simulate the outcomes of any randomization in a computer. For literal casino gamble, the gambler can bring in a smartphone where apps are available to simulate coin flips with any user-defined probabilities.

\section{Characterization of Stopped State Distributions}\label{se:RandomizedRootstoppingtime}

As explained earlier, the main thrust of our approach to solving Problem \eqref{originaloptimization} is to change its decision variable from
the stopping time $\tau$ to the {\it distribution} of the stopped state $S_\tau$.
A key step is therefore to characterize the admissible set of these distributions.
Moreover, once an optimal distribution is obtained there needs to be a way to
recover the stopping time that generates this distribution. These two questions are intertwined and will actually be solved together. This section addresses
them.

%

Denote by $\cP(\bR)$ the set of probability measures $\mu$ on $\bR$  and by $\cP_0(\bR)$ the subset of $\cP(\bR)$ whose elements have finite first moments and are centered: $\int |x| \mu(dx)<\infty$ and  $\int x \mu(dx)=0$. Denote by $\cP_0(\bZ)=\{\mu \in \cP_0(\bR): \mu(\bZ)=1\}$ the subset of $\cP_0(\bR)$ supported on integers.

For $\mu \in \cP_0(\bR)$,  define a function
\begin{align*}
U_\mu(x) := \int_{\bR} |x-y|\mu(dy),\;\; x\in \bR,
\end{align*}
which is called the \emph{potential of }$\mu$.\footnote{Note that our definition here is the negative of the usual definition of  potential.} For $\mu\in \cP_0(\bZ)$, $U_\mu$ is a linear interpolation of the points $\{U_\mu(k): k\in \bZ\}$. The following are evident:
\begin{equation}\label{eq:potential_rep}
 \mu(\{x\})=\frac{U_\mu(x+1)+U_\mu(x-1)}{2}-U_{\mu}(x),\quad U_\mu(x)=-2x\mu([x,\infty))+x+2\sum_{y\geq x} y\mu(\{y\}).
\end{equation}
Potential function uniquely determine probability measure, namely,  two measures are identical if and only if their potential functions are identical; see \cite{Obloj2004:SkorokhodEmbedding}.
Finally, for any stopping time $\tau$, with a slight abuse of notation we simply write $U_{S_\tau}$ for the potential of the distribution of $S_\tau$, when well defined.

We can use a sequence of piecewise linear functions, called {\it evolutional functions},  to approach a potential function. Indeed, given $\mu\in \cP_0(\bZ)$, we define recursively the following sequence of functions:
\begin{equation}\label{eq:differenceequation}
\begin{split}
U_0^\mu (x) &:= |x|,\\
U_t^\mu(x) &:= \frac{U_{t-1}^\mu(x-1) + U_{t-1}^\mu(x+1)}{2} \wedge U_{\mu}(x), \quad t=1,2,...,\;\;x \in \mathbb{Z}.
\end{split}
\end{equation}
We then extend each $U_t^\mu$ to non-integers $x\in \bR$ by linear interpolation. When $\mu$ is fixed, we may drop the superscript $\mu$ and just write $U_t$ for simplicity.
Figure \ref{potentialfunction} illustrates how $U_t$ evolves to $U_\mu$ for an example of $\mu$.

\begin{figure}
  \centering
  \includegraphics[width=150mm]{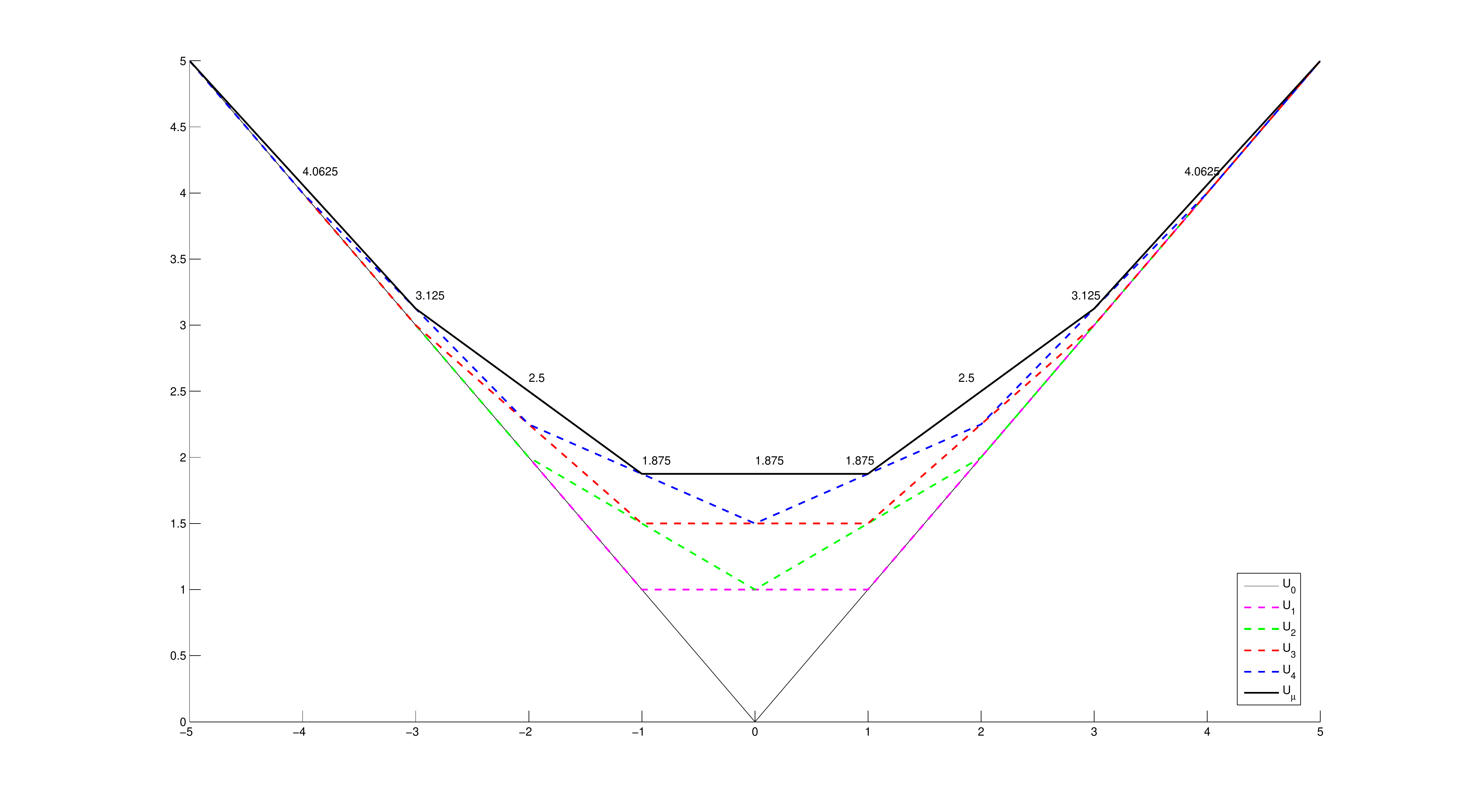}\\
  \caption{\small An illustration of how $U_t$ evolves to $U_\mu$, for $t=0,1,2,3,4$. Here $\mu$ has the distribution $\mu(\{5\}) = \mu(\{-5\}) = 1/32$, $\mu(\{3\}) = \mu(\{-3\}) = 5/32$, $\mu(\{1\}) = \mu(\{-1\}) = 5/16$; otherwise, $\mu(\{x\}) = 0$. }\label{potentialfunction}
\end{figure}


The optimal stopping time we will derive belongs to a special class  of randomized, Markovian
stopping times called the {\it Root stopping times}. The
original version of the  Root stopping times was developed in \citet{Root1969} to solve the classical Skorokhod embedding problem for a Brownian motion $B$ on an infinite time horizon.\footnote{Precisely, an (original)  Root stopping time is the first hitting time of $B$ on an explicitly constructed  region with a barrier in the time-space $\mathbb{R}_+ \times \mathbb{R}$. For any centred $\mu$ on $\mathbb{R}$ with a finite second moment $v$, such a time $\tau_R$ exists which embeds $\mu$ in $B$, i.e., $B_\tau \sim \mu$ and $\mathbb{E}[\tau] = v$; 
see \citet{Root1969,Rost1976} and \citet{Obloj2004:SkorokhodEmbedding}.}
We now develop  the analogous ideas in discrete time for the SSRW.

Consider an integer-valued vector ${\bf b} := (...,b({-1}),b(0),b(1),...)$ where, for any $x\in \bZ$, $b(x)= x + 2k\geq |x|$ with some $k \in \mathbb{Z}$, and  another vector ${\bf r} := (...,r({-1}),r(0),r(1),...)$, where $r(x) \in [0,1]$. 
Given ${\bf b}$ and ${\bf r}$, define the probability distributions of a family of  Bernoulli random variables $\{\xi_{t,x}: t\in \bN,x\in \bZ\}$ as follows: 
\begin{align*}
\begin{cases}
\mathbb{P}(\xi_{t,x} = 0) = 1- \prob(\xi_{t,x} = 1) = 0, & t < b(x), \\
\mathbb{P}(\xi_{t,x} = 0) = 1- \prob(\xi_{t,x} = 1) = r(x), & t = b(x), \\
\mathbb{P}(\xi_{t,x} = 0) = 1- \prob(\xi_{t,x} = 1) = 1, & t > b(x).
\end{cases}
\end{align*}
Graphically,  ${\bf b}$ is a barrier that defines a time-space stopping region
$$ \mathcal{R}_b := \big\{(t,x) : t\in \bN,\;x\in \bZ,\;t \geq b(x) \big\}, $$
 and the components of ${\bf r}$ are the probabilities to stop exactly on the boundary of this stopping region. The randomized Root stopping time is defined as
 \begin{equation}\label{eq:Rootdef}
  \tau_R({\bf b}, {\bf r}): = \inf\left\{t \in \bN: (t,S_t) \in \mathcal{R}_b \text{ and } \xi_{t,S_t} = 0 \right\}.
  \end{equation}
 This stopping time is Markovian, because it depends only on the current state of the random walk $S$. It is randomized because it depends on the outcome of the Bernoulli random variables $\xi_{t,x}$'s.

 Figure \ref{RandomizedRootstoppingtime} illustrates such a stopping time.
The grey boundary divides the area into two subareas: the one on the left hand side has white nodes representing ``continue", and that  on the right hand side consists of black nodes indicating ``stop".
 Stopping at a grey node $(t,x)$  is randomized with $r(x)$ being the probability of stopping.

\begin{figure}
  \centering
  \includegraphics[width=100mm]{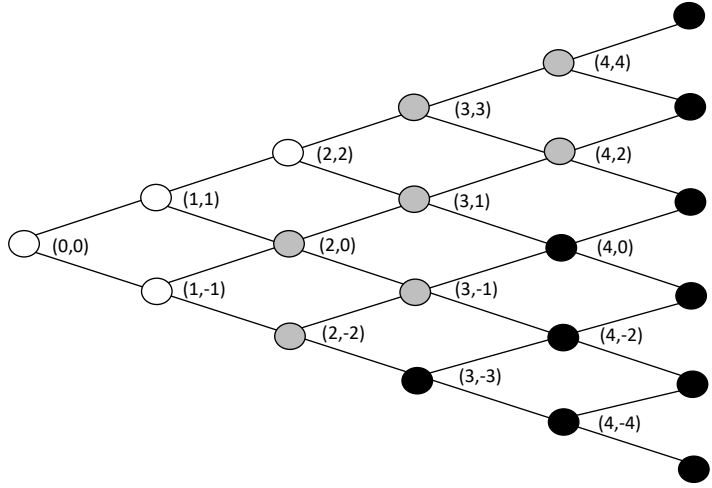}\\
  \caption{\small An example of the  Root stopping time with $T = 5$. Black nodes mean ``stop'', white nodes mean ``continue'', and grey nodes mean ``randomize''. The boundary $\bf b$ is given as follows: $b(4) = 4$, $b(3) = 3$, $b(2) = 4$, $b(1) = 3$, $b(0) = 2$, $b(-1) = 3$, $b(-2) = 2$, $b(-3) = 3$, $b(-4) = 4$.}\label{RandomizedRootstoppingtime}
  \end{figure}

The following theorem is one of the main results of the paper that provides a theoretical foundation for the numerical algorithm we are going to present to solve our  casino gambling model.
It characterizes the admissible set of stopped distributions under stopping times in ${\cal T}_T$, and reveals that the set is the same as that of stopped distributions using only randomized Root stopping times. As a consequence, any admissible stopping strategy is always dominated by a randomized Root stopping time.


\begin{theorem}\label{coro:Rootst}
  Let $T\geq 1$, $\mu \in \cP_0(\bZ)$ such that $\mu([-T,T]) = 1$. Then there exists a stopping time $\tau \in {\mathcal{T}}_T$ such that $S_\tau\sim \mu$ if and only if
\begin{equation}\label{3}
U_{\mu}(x) \leq \frac{U_{T-1}^\mu(x+1)+U_{T-1}^\mu(x-1)}{2},\;\; x = -(T-2), -(T-4), ..., T-4, T-2.
\end{equation}
Moreover, in this case there exists a randomized Root stopping time $\tau_R({\bf b}, {\bf r})\in {\mathcal{T}}_T$ such that $S_{\tau_R({\bf b}, {\bf r})} \sim \mu$.
\end{theorem}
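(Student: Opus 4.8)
\emph{Strategy.} I would carry the whole argument on the level of potentials, and treat the evolutional functions $U^\mu_t$ of \eqref{eq:differenceequation} as the potentials of the random walk stopped by the ``fastest'' (Root) rule. Two elementary facts drive everything. Writing $A$ for the averaging operator $(Af)(x)=\tfrac12\bigl(f(x-1)+f(x+1)\bigr)$, one has $U_{A\rho}=AU_\rho$ (averaging a measure commutes with taking its potential), and, by the first identity in \eqref{eq:potential_rep}, $(AU_\rho)(x)-U_\rho(x)=\rho(\{x\})$ (the discrete Laplacian of a potential returns the measure). For any stopping time $\tau$ set $\nu_t:=\mathrm{Law}(S_{\tau\wedge t})$ and split $\nu_{t-1}=\sigma+\alpha$ into its already-stopped part $\sigma$ and its still-alive part $\alpha$. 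Since stopped mass is frozen while alive mass takes one averaging step, $\nu_t=\sigma+A\alpha$, whence the one-step identity $U_{\nu_t}(x)=U_{\nu_{t-1}}(x)+\alpha(\{x\})$: the stopped potential at $x$ grows exactly by the alive mass sitting at $x$, and therefore only at times $t$ with $t\not\equiv x \pmod 2$.

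\emph{Necessity via an optimality bound.} First I would show that for \emph{every} $\tau\in\mathcal T_T$ with $S_\tau\sim\mu$ one has $U_{\nu_t}\le U^\mu_t$ for all $t$, by induction on $t$. The base case is $U_{\nu_0}=|\cdot|=U^\mu_0$. For the step, the induction hypothesis $U_{\nu_{t-1}}\le U^\mu_{t-1}$ and monotonicity of $A$ give $AU^\mu_{t-1}-U_{\nu_t}\ge AU_{\nu_{t-1}}-U_{\nu_t}=AU_\sigma-U_\sigma=\sigma(\{\cdot\})\ge0$, so $U_{\nu_t}\le AU^\mu_{t-1}$; combined with $U_{\nu_t}\le U_\mu$ (which holds because $S_{\tau\wedge t}=\mathbb E[S_\tau\mid\mathcal F_{\tau\wedge t}]$ is dominated by $S_\tau$ in convex order), taking the minimum yields $U_{\nu_t}\le AU^\mu_{t-1}\wedge U_\mu=U^\mu_t$. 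Evaluating at $t=T$, where $\tau\le T$ forces $\nu_T=\mu$, gives $U_\mu\le U^\mu_T\le U_\mu$, i.e.\ $U^\mu_T=U_\mu$. It remains to see that this global equality is the same as the finite list \eqref{3}. Since $U^\mu_T=AU^\mu_{T-1}\wedge U_\mu$ and always $U^\mu_T\le U_\mu$, the equality $U^\mu_T=U_\mu$ is equivalent to $AU^\mu_{T-1}(x)\ge U_\mu(x)$ for all $x$; the points $|x|\ge T$ are automatic because a short induction gives $U^\mu_t(x)=|x|=U_\mu(x)$ there, and the off-parity points reduce, by convexity of $U_\mu$, to the stated parity-$T$ points $x=-(T-2),\dots,T-2$ of \eqref{3} once the neighbouring parity-$T$ values are known to have filled. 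I would isolate this parity/boundary reduction as a separate lemma.

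\emph{Sufficiency and the Root representation.} Conversely, assume \eqref{3}, equivalently $U^\mu_T=U_\mu$. The function $t\mapsto U^\mu_t(x)$ is nondecreasing (immediate induction using monotonicity of $A$ and of $\wedge$) and increases only at off-parity times. I would read off the barrier by letting $b(x)$ (necessarily of the parity of $x$) be the first time at which mass present at $x$ must be frozen to prevent $U^\mu_{\cdot}(x)$ from overshooting $U_\mu(x)$ at the next step, and letting $r(x)\in[0,1]$ be the fraction frozen there, chosen so that the mass left alive raises the potential to exactly $U_\mu(x)$. With $(\mathbf b,\mathbf r)$ so defined, the same one-step identity shows $U_{S_{\tau_R(\mathbf b,\mathbf r)\wedge t}}=U^\mu_t$ for all $t$: strictly inside the barrier the potential evolves by free averaging, and on the frontier the fraction $r(x)$ realises precisely the $\wedge U_\mu$ cap. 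Because $U^\mu_T=U_\mu$, the identity forces the total alive mass $\sum_x\alpha_T(\{x\})=\sum_x\bigl(U^\mu_{T+1}(x)-U^\mu_T(x)\bigr)=0$, so $\tau_R\le T$ almost surely, i.e.\ $\tau_R\in\mathcal T_T$; and $U_{S_{\tau_R}}=\lim_t U^\mu_t=U^\mu_T=U_\mu$, so $S_{\tau_R}\sim\mu$ since potentials determine measures. This delivers the ``if'' direction and the randomized-Root representation at once.

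\emph{Main obstacle.} The potential algebra (commuting with $A$, Laplacian-returns-measure) and the convex-order bound are comparatively routine; the delicate part is the Root construction of the third paragraph. I expect the real work to be (i) checking that the prescription ``continue strictly inside the barrier, randomize with probability $r(x)$ on the frontier, stop strictly beyond it'' is a \emph{consistent} first-hitting rule — a path that declines the toss at $(b(x),x)$ and steps out must land in $\mathcal R_b$, which requires the barrier regularity $b(x\pm1)\le b(x)+1$ on the relevant side — and (ii) verifying that the single partial-freeze fraction $r(x)$ on the frontier reproduces $U^\mu_t$ exactly despite the parity constraint that the walk visits $x$ only at on-parity times. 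Matching this discrete, parity-restricted freezing to the recursion \eqref{eq:differenceequation}, together with the parity/boundary reduction linking \eqref{3} to $U^\mu_T=U_\mu$, is where I anticipate the bulk of the technical effort.
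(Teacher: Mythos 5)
Your proposal is correct and takes essentially the same route as the paper: your one-step identity is Proposition \ref{prop:st}(iii), your induction $U_{\nu_t}\le U^\mu_t$ (via $U_{\nu_t}\le AU_{\nu_{t-1}}$ plus the convex-order bound $U_{\nu_t}\le U_\mu$) is the final assertion and (iii)$\Rightarrow$(i) of Proposition \ref{thm:rootstoppingtime}, your parity/convexity reduction identifying the global equality $U^\mu_T=U_\mu$ with the finite list \eqref{3} is the paper's own proof of the theorem, and your barrier-and-partial-freezing rule is exactly the paper's randomized Root construction \eqref{eq:Rootstboundary}--\eqref{eq:Rootstrandomization}. The two verifications you flag as the ``main obstacle'' (consistency of the frontier rule and the fact that the single freeze fraction $r(x)$ reproduces $U^\mu_t$ under the parity constraint) are precisely what the inductive case analysis in the proof of (i)$\Rightarrow$(ii) of Proposition \ref{thm:rootstoppingtime} carries out, so your outline matches the paper's proof in both structure and substance.
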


\section{A Mathematical Program}\label{se:finiteprogram}

Theorem \ref{coro:Rootst}
hints that we can, instead of endeavoring to find the stopping time $\tau$
in Problem \eqref{originaloptimization}, try to find the probability distribution $\mu$ of the stopped state $S_\tau$. Namely we change decision variable from $\tau$ to $\mu$ for Problem \eqref{originaloptimization}. The resulting problem is a (nonlinear) {\it mathematical  program} (i.e., a constrained optimization problem) with the condition (\ref{3}) translating into
certain constraints.

 Moreover, once we solve this problem and find the optimal distribution $\mu$, then it follows from Theorem \ref{coro:Rootst} that there exists a randomized Root stopping time $\tau_R({\bf b}, {\bf r}) $ that achieves the same
stopped distribution and, hence, solves \eqref{originaloptimization}. Furthermore, based on the proof of Theorem \ref{coro:Rootst} (see Appendix A), we can devise an algorithm to find $({\bf b}, {\bf r})$ and, consequently, $\tau_R({\bf b}, {\bf r}) $.
 We now formulate  the mathematical program and provide its solution algorithm.

\subsection{The mathematical program formulation and solution}

Given $\tau\in {\cal T}_T$, let $\mu \sim S_\tau$. Define
two $T$-dimensional vector variables, $\vecx := (x_1,x_2,...,x_T)$ and $\vecy := (y_1,y_2,...,y_T)$, where
$x_n = \mu([n,T])$, $y_n = \mu([-T,-n])$, $n=1,2,...,T$. Clearly, $\vecx$ and $\vecy$ are gambler's  decumulative gain distribution and cumulative loss distribution, respectively.
Then the original objective function \eqref{prob:InfiniteDimProgramming:original} is equivalent to, as a function of $(\vecx,\vecy)$,
\begin{align}\label{eq:mainvalueinxy}
\mathbb{U}(\vecx,\vecy):=& \sum_{n=1}^T \left[u_+(n)-u_+(n-1)\right]w_+(x_n) - \lambda\sum_{n=1}^T \left[u_-(n)-u_-(n-1)\right]w_-(y_n).
\end{align}
 Naturally,  we must have $1 \geq {x_1} \geq {x_2} \geq {...} \geq {x_T} \geq 0$, $1 \geq {y_1} \geq {y_2} \geq {...} \geq {y_T} \geq0$, $x_1 + y_1 \leq 1$. On the other hand, $\mu$ has zero expectation due to optional sampling theorem; so
\begin{align*}
0 & = \sum_{n = -T}^T n \mu(\{n\}) = \sum_{n = 1}^T n \mu(\{n\}) - \sum_{n = 1}^{T} n \mu(\{-n\}) = \sum_{n = 1}^{T} \mu([n,T]) - \sum_{n = 1}^{T} \mu([-T,-n]) \\
& = \sum_{n=1}^T {x_n} -\sum_{n=1}^T {y_n}.
\end{align*}
In summary, the following constraints are required for the probability distribution
of $S_\tau$ where $\tau\in{\cal T}_T$:
\begin{align}\label{eq:generalinequalities}
  \begin{cases}
1 \geq {x_1} \geq {x_2} \geq {...} \geq {x_T} \geq0,\\
1 \geq {y_1} \geq {y_2} \geq {...} \geq {y_T} \geq0,\\
x_1 + y_1 \leq 1,\\
\sum_{n=1}^T {x_n}=\sum_{n=1}^T {y_n} .
  \end{cases}
\end{align}

Moreover, Theorem \ref{coro:Rootst} necessitates 
condition   (\ref{3}), which constitutes a family of inequalities on $\mu$'s potential function and the corresponding evolutional functions, which will later be translated into constraints on $S_\tau$'s distribution functions.
Here, let us illustrate (\ref{3}) for each of   $T=1,2,\ldots, 5$. To ease notation, we will suppress the superscript $\mu$ on the evolutional functions. For $T = 1$, the condition is
satisfied automatically for any $\mu \in \cP_0(\bZ)$ with $\mu([-T,T]) = 1$.
For $T = 2$, (\ref{3}) amounts to
$U_{\mu}(0) \leq \frac{U_1(1) + U_1(-1)}{2} = 1$.
For $T = 3$, (\ref{3}) reduces to
\begin{align*}
\max\left\{U_{\mu}(1),U_\mu(-1)\right\} \leq \frac{2 + \min\{U_{\mu}(0),1\}}{2}.
\end{align*}
For $T = 4$, (\ref{3}) is equivalent to
\begin{align*}
\begin{cases}
U_{\mu}(2) \leq \frac{3 + U_3(1)}{2}, & U_3(1) = \min\left\{U_{\mu}(1), \frac{2 + \min(U_{\mu}(0),1)}{2}\right\},\\
U_{\mu}(0) \leq \frac{U_3(1) + U_3(-1)}{2},  & U_3(-1) = \min\left\{U_{\mu}(-1), \frac{2 + \min(U_{\mu}(0),1)}{2}\right\},\\
U_{\mu}(-2) \leq \frac{3 + U_3(-1)}{2}.
\end{cases}
\end{align*}
For $T = 5$, (\ref{3}) specializes to
\begin{align*}
\begin{cases}
U_{\mu}(3) \leq \frac{4 + U_4(2)}{2}, & U_4(2) = \min\left\{U_{\mu}(2),\frac{3 + \min\left\{U_{\mu}(1), \frac{2 + \min\{U_{\mu}(0),1\}}{2}\right\}}{2}\right\},\\
U_{\mu}(1) \leq \frac{U_4(2) + U_4(0)}{2}, & U_4(0) = \min\left\{U_{\mu}(0), \frac{\min\left(U_{\mu}(1), \frac{2 + \min(U_{\mu}(0),1)}{2} \right) +\min\left(U_{\mu}(-1), \frac{2 + \min(U_{\mu}(0),1)}{2}\right)}{2}\right\}, \\
U_{\mu}(-1) \leq \frac{U_4(-2) + U_4(0)}{2}, & U_4(-2) = \min\left\{U_{\mu}(-2), \frac{3 + \min\left(U_{\mu}(-1), \frac{2 + \min(U_{\mu}(0),1)}{2} \right)}{2}\right\}, \\
U_{\mu}(-3) \leq \frac{4 + U_4(-2)}{2}.
\end{cases}
\end{align*}

The following lemma, which follows a direct, if somewhat lengthy, computation, expresses $U_\mu(n)$ and, consequently, the constraints (\ref{3}), in terms of $\vecx$ and $\vecy$.
\begin{lemma}
For $n \in \mathbb{Z} \cap [-T,T]$,
\begin{align*}
U_\mu(n) =
\begin{cases}
2\sum_{j = n+1}^{T} x_j + n, &  n \geq 0, \\
2 \sum_{j = |n|+1}^{T} y_j + |n|, &  n < 0 .
\end{cases}
\end{align*}
\end{lemma}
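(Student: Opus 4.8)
The plan is to compute the potential directly from its definition $U_\mu(n)=\expect|n-X|$, where $X\sim\mu$ is supported on $\bZ\cap[-T,T]$ and centered, and to rewrite everything in terms of the tail probabilities $x_j=\prob(X\geq j)=\mu([j,T])$ and $y_j=\prob(X\leq -j)=\mu([-T,-j])$. I would establish the formula for $n\geq 0$ first and then recover the case $n<0$ by a reflection argument rather than redoing the algebra.

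For $n\geq 0$ I would start from the second representation in \eqref{eq:potential_rep}, namely $U_\mu(n)=-2n\,\mu([n,\infty))+n+2\sum_{y\geq n}y\,\mu(\{y\})$, which already bakes in the centering hypothesis $\int y\,\mu(dy)=0$. Since $\mu$ is supported on $[-T,T]$, for $n\geq 1$ we have $\mu([n,\infty))=x_n$ and $\mu(\{y\})=x_y-x_{y+1}$ for $1\leq y\leq T$ with the convention $x_{T+1}=0$. The one genuinely mechanical step is to evaluate $\sum_{y=n}^{T}y(x_y-x_{y+1})$ by summation by parts, which telescopes to $n x_n+\sum_{j=n+1}^{T}x_j$. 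Substituting back, the two $\pm 2n x_n$ terms cancel and I obtain $U_\mu(n)=n+2\sum_{j=n+1}^{T}x_j$; the case $n=0$ is the same computation with the boundary term carrying the factor $0$. An alternative route that makes the role of centering explicit is the elementary layer-cake identity $|n-X|=\sum_m\big[\ind\{X\leq m<n\}+\ind\{n\leq m<X\}\big]$, giving $U_\mu(n)=\sum_{m<n}\prob(X\leq m)+\sum_{m\geq n}\prob(X\geq m+1)$; truncating to the support, the second sum is immediately $\sum_{j=n+1}^{T}x_j$, while the first splits into $\sum_{j=1}^{T}y_j$ (from $m\in[-T,-1]$) plus $n-\sum_{j=1}^{n}x_j$ (from $m\in[0,n-1]$), and invoking $\sum_{j=1}^{T}y_j=\sum_{j=1}^{T}x_j$ collapses the whole thing to $n+2\sum_{j=n+1}^{T}x_j$.

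For $n<0$ I would apply the $n\geq 0$ formula to the reflected law $\tilde\mu$ of $-X$. Its upper tails are $\prob(-X\geq j)=\prob(X\leq -j)=y_j$, it is again centered and supported on $[-T,T]$, and $U_{\tilde\mu}(m)=\expect|m+X|=U_\mu(-m)$. Taking $m=|n|\geq 0$ then yields $U_\mu(n)=|n|+2\sum_{j=|n|+1}^{T}y_j$, which is the second branch of the claim.

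The statement is a direct computation with no conceptual obstacle, consistent with its billing as ``direct, if somewhat lengthy.'' The only thing to watch is bookkeeping: truncating the infinite sums correctly against the finite support $[-T,T]$, translating $\prob(X\leq m)$ into $x$'s or $y$'s according to the sign of $m$, and remembering that the centering constraint $\sum_n x_n=\sum_n y_n$ is precisely what forces the additive constant to come out as $n$ (resp.\ $|n|$) rather than a tail-dependent quantity.
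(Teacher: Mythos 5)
Your proof is correct, and it is precisely the ``direct, if somewhat lengthy, computation'' that the paper alludes to but never writes out: starting from the second representation in \eqref{eq:potential_rep} (which already encodes the centering), evaluating $\sum_{y=n}^{T} y\,(x_y-x_{y+1})$ by summation by parts, and handling $n<0$ by reflecting the law so that the upper tails of $-X$ become the $y_j$'s. Both of your routes (the Abel-summation route and the layer-cake identity combined with the balance condition $\sum_{j=1}^{T}x_j=\sum_{j=1}^{T}y_j$) check out, so there is nothing to reconcile against the paper, which states the lemma without proof.
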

%
%
%
To illustrate, take $T = 5$. Then
\begin{align*}
\begin{array}{lll}
&U_\mu(3) = 2 \sum_{n = 4}^{5} x_n + 3 \;,
&U_\mu(-3) = 2 \sum_{n = 4}^{5} y_n + 3\;, \\
&U_\mu(1) = 2 \sum_{n = 2}^{5} x_n + 1 \;,
&U_\mu(-1) = 2 \sum_{n = 2}^{5} y_n + 1 \;.
\end{array}
\end{align*}


We are now ready to formulate the mathematical program that is equivalent to the original stopping problem \eqref{originaloptimization}.
Define
\begin{align*}
{\bf A} =
\begin{bmatrix}
1 & 0 & \dots & 0 \\
-1 & 1 & \ddots & \vdots \\
0 & -1 & \ddots & 0 \\
\vdots & \ddots & \ddots & 1 \\
0 & \dots & 0 & -1
\end{bmatrix}_{(T+1) \times T},
\quad
{\bf c} =
\begin{bmatrix}
1 \\
0 \\
0 \\
\vdots \\
0
\end{bmatrix}_{(T+1) \times 1},
\quad
{\bf 1} =
\begin{bmatrix}
1 \\
1 \\
\vdots \\
1
\end{bmatrix}_{T \times 1},
\quad
{\bf e_j} =
\begin{bmatrix}
\vdots \\
1 \\
0 \\
\vdots
\end{bmatrix}_{T \times 1},
\end{align*}
along with a set of functions $f^m_n : \mathbb{R}^T \times \mathbb{R}^T \to \mathbb{R}$, $m = 1,...T$, $n = 1,...2T+1$, in the following way.  For $\vecx, \vecy \in \mathbb{R}^T$, let
$$f^m_1(\vecx,\vecy) = f^m_{2T+1}(\vecx,\vecy) \equiv T,\;\;m = 1,...T,$$ $$f^1_n(\vecx,\vecy) \equiv |n - (T+1)|,\;\;n = 2,...2T,$$
 and for $m = 2,3,...T$:
\begin{align*}
f^m_n(\vecx,\vecy) =
 \begin{cases}
  \min\left( \frac{f^{m-1}_{n-1}(\vecx,\vecy) + f^{m-1}_{n+1}(\vecx,\vecy)}{2}, 2 \sum_{j = T+2 - n}^T {\bf e_j}' \vecy + (T+1) - n \right), \; n = 2,3,...T,\\
  \min\left( \frac{f^{m-1}_{n-1}(\vecx,\vecy) + f^{m-1}_{n+1}(\vecx,\vecy)}{2}, 2 \sum_{j = n - T}^T {\bf e_j}' \vecx + n - (T+1) \right),\;n = T+1,...2T-1, 2T.
 \end{cases}
\end{align*}
Then, the mathematical  program is
\begin{equation}\label{finitedimensionprogram}
\begin{array}{cl}
  \underset{\vecx,\vecy}{\max} \quad & \mathbb{U}(\vecx,\vecy) \;, \\
 \text{subject to } & {\bf A} {\vecx} \le {\bf c}, \; {\bf A} {\vecy} \le {\bf c}, \; {\bf e_1}' {\vecx} + {\bf e_1}' {\vecy} \le 1, \; {\bf 1}' {\vecx} - {\bf 1}' {\vecy} = 0,\\
 & \frac{ f^T_{n-1}(\vecx,\vecy) + f^T_{n+1}(\vecx,\vecy)}{2} \ge 2 \sum_{j = T+2 - n}^T {\bf e_j}' \vecy + (T+1) - n \\
 & \quad\quad\quad \text{ for } n = 2k+1, k =1,2,..., n \le T, \\
 & \frac{ f^T_{n-1}(\vecx,\vecy) + f^T_{n+1}(\vecx,\vecy)}{2} \ge 2 \sum_{j = n - T}^T {\bf e_j}' \vecx + n - (T+1) \\
 & \quad\quad\quad \text{ for } n = 2T-2k+1, k = 1,2,..., n \ge T+1.
\end{array}
\end{equation}

%
%
%

The number of decision variables ($\vecx$ and $\vecy$) and the number of constraints in (\ref{finitedimensionprogram}) are both {\it linear} in $T$; hence the complexity of the problem is manageable. Moreover, there are standard solvers to solve this type of mathematical program.\footnote{In the following numerical experiments, we employ nonlinear optimization solver `fmincon' from {\bf MATLAB} Optimization Toolbox, on a desktop with Intel Core i5-4590/CPU 3.30GHz/RAM 8.00GB. For the \citet{Barberis2012:Casino}'s parameters $\alpha_+ = \alpha_- = 0.95$, $\delta_+ = \delta_- = 0.5$, $\lambda = 1.5$ with $T=5,6,7,8$, MATLAB uses 205 seconds, 220 seconds, 280 seconds, 350 seconds respectively. (Compare with those of the brute force reported in Footnote 3.)
 The running times for $T=10, 20,30,40,50$ are 9.35 minutes, 29 minutes, 89 minutes, 3.5 hours, 7.17 hours, respectively.}

The running times for $T = 5, 6, 7 $ were 39 seconds, 771 seconds and 27 hours, respectively. We were unable to obtain the solution for $T=8$ due to out of storage, with the running time estimated to be 300 days.

Once we solve this problem to get optimal  $(\vecx^*,\vecy^*)$,  we then run the    following algorithm to find the optimal randomized Root stopping time:

{\noindent \bf Step 1 }
Given $\mu^* \equiv (\vecx^*,\vecy^*)$, compute the corresponding potential function:
$ U_{\mu^*}(n) = 2\sum_{j = n+1}^{T} x_j^* + n$ for $n \geq 0$, and $U_{\mu^*}(n) = 2 \sum_{j = |n|+1}^{T} y_j^* + |n|$ for $n < 0$.
Then, compute its evolutional functions $U_t^{\mu^*}$ by \eqref{eq:differenceequation}, $t = 0,1,...T$.

{\noindent \bf Step 2 }
Compute the boundary ${\bf b}$ that separates the ``continue'' region from the ``stop'' region:
$b(n) = \inf\{t \geq |n|, t \in \mathbb{Z}: U_{t+1}^{\mu^*}(n) = U_{\mu^*}(n)\}$,
$n \in [-T,T] \cap \mathbb{Z}$.
(The constraints in \eqref{finitedimensionprogram} guarantees that the set involved is non-empty and $b(n) \le T$ $\forall n \in [-T,T] \cap \mathbb{Z}$.)

{\noindent \bf Step 3 }
Compute the probability ${\bf r}$ to stop at the boundary: $$r(n) = \frac{U_{b(n)}^{\mu^*} (n-1) + U_{b(n)}^{\mu^*}(n+1) - 2 U_{\mu^*}(n)}{ U_{b(n)}^{\mu^*}(n-1) + U_{b(n)}^{\mu^*}(n+1) - 2 U_{b(n)}^{\mu^*}(n)},\;\;n \in [-T,T] \cap \mathbb{Z}.$$

{\noindent \bf Step 4 }
Construct $\tau_R({\bf b}, {\bf r})$ according to  \eqref{eq:Rootdef}. 

\subsection{A numerical example}\label{numerical}

We present an example to illustrate the solution procedure, using the same
parameters as in \citet{Barberis2012:Casino} with  $T=5$, $\alpha_+ = \alpha_- = 0.95$, $\delta_+ = \delta_- = 0.5$, $\lambda = 1.5$.\footnote{More examples with much longer time horizons will be presented in the next section.} Solving the corresponding
mathematical program  for the optimal distribution $\mu^*$ yields
\begin{align*}
& x_1^* = 0.1875, \; x_2^* = 0.1273, \; x_3^* = 0.1227, \; x_4^* = 0.03152, \;  x_5^* = 0.03098, \\
& \quad y_1^* = 0.5, \; y_2^* = 0, \; y_3^* = 0, \; y_4^* = 0, \; y_5^* = 0.
\end{align*}
The corresponding potential function $U_{\mu^*}$ is
\begin{align*}
& U_{\mu^*}(0) = 1, \; U_{\mu^*}(1) = 1.625, \; U_{\mu^*}(2) = 2.3704, \; U_{\mu^*}(3) = 3.125, \; U_{\mu^*}(4) = 4.06196, \\
& \quad U_{\mu^*}(n) = |n| \text{ for } n \geq 5 \text{ and } n \leq -1.
\end{align*}
Figure \ref{potentialfunctionexample} illustrates how $U_{\mu^*}$ is achieved by the evolutional functions within five steps.
\begin{figure}
  \centering
  \includegraphics[width=150mm]{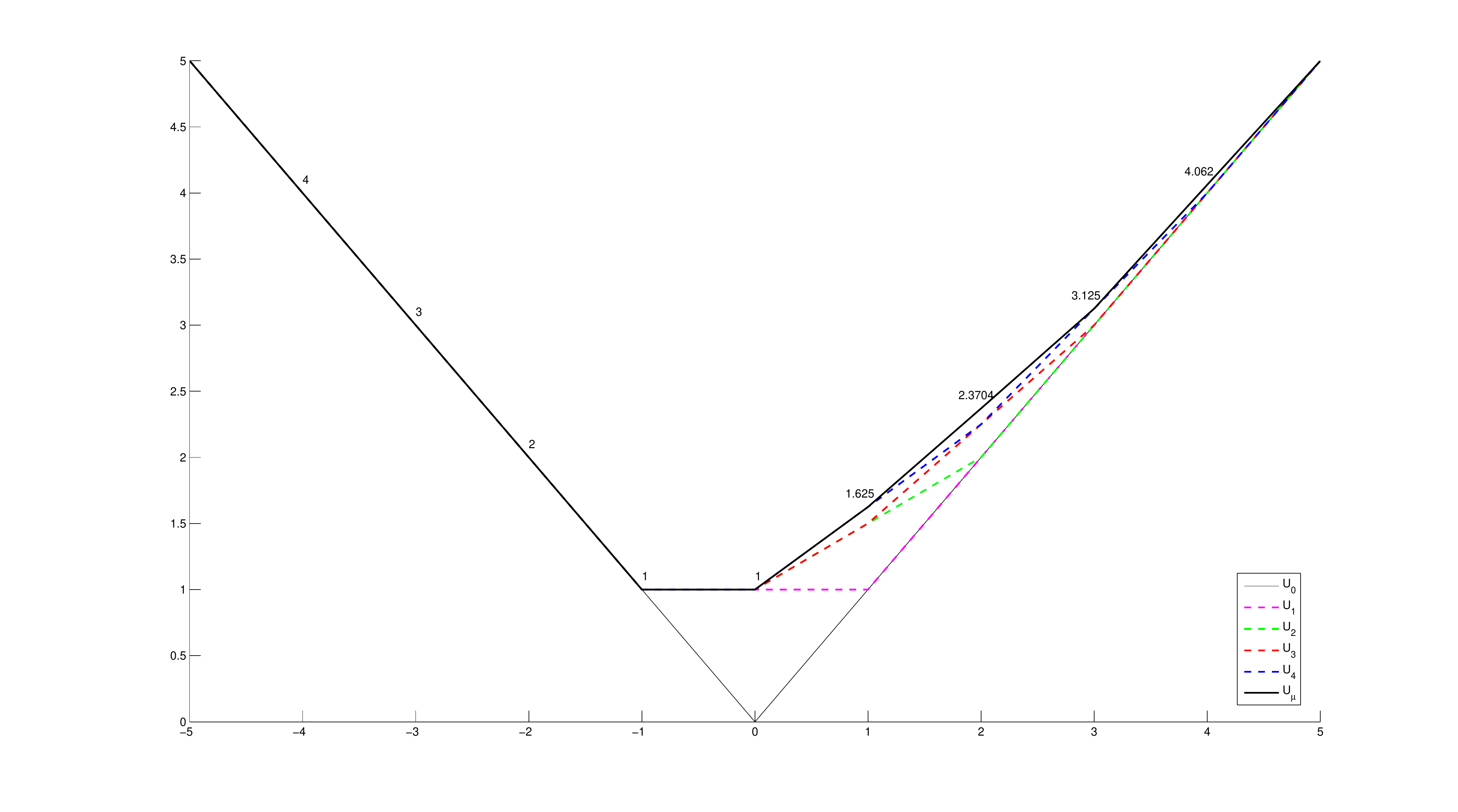}\\
  \caption{\small $U_{\mu^*}$ is achieved within five steps, where the optimal distribution $\mu^*$ is the solution to (\ref{finitedimensionprogram}) for $T = 5$, $\alpha_+ = \alpha_- = 0.95$, $\delta_+ = \delta_- = 0.5$, $\lambda = 1.5$.}\label{potentialfunctionexample}
\end{figure}

We then apply the algorithm previously presented to recover the optimal randomized Root stopping time $\tau^*$ from the optimal distribution $\mu^*$, with $S_{\tau^*} \sim \mu^*$.
The strategy, which is optimal at $t=0$ (only) and implemented by the precommitted gambler, is drawn in the left panel of Figure \ref{Optimalstrategy}. Note that black nodes mean ``stop'', white ones mean ``continue'', and grey ones mean ``randomization''. The number above a grey node is the probability to stop.

\begin{figure}
  \centering
  \begin{minipage}[t]{0.49\textwidth}
     \includegraphics[width=\textwidth]{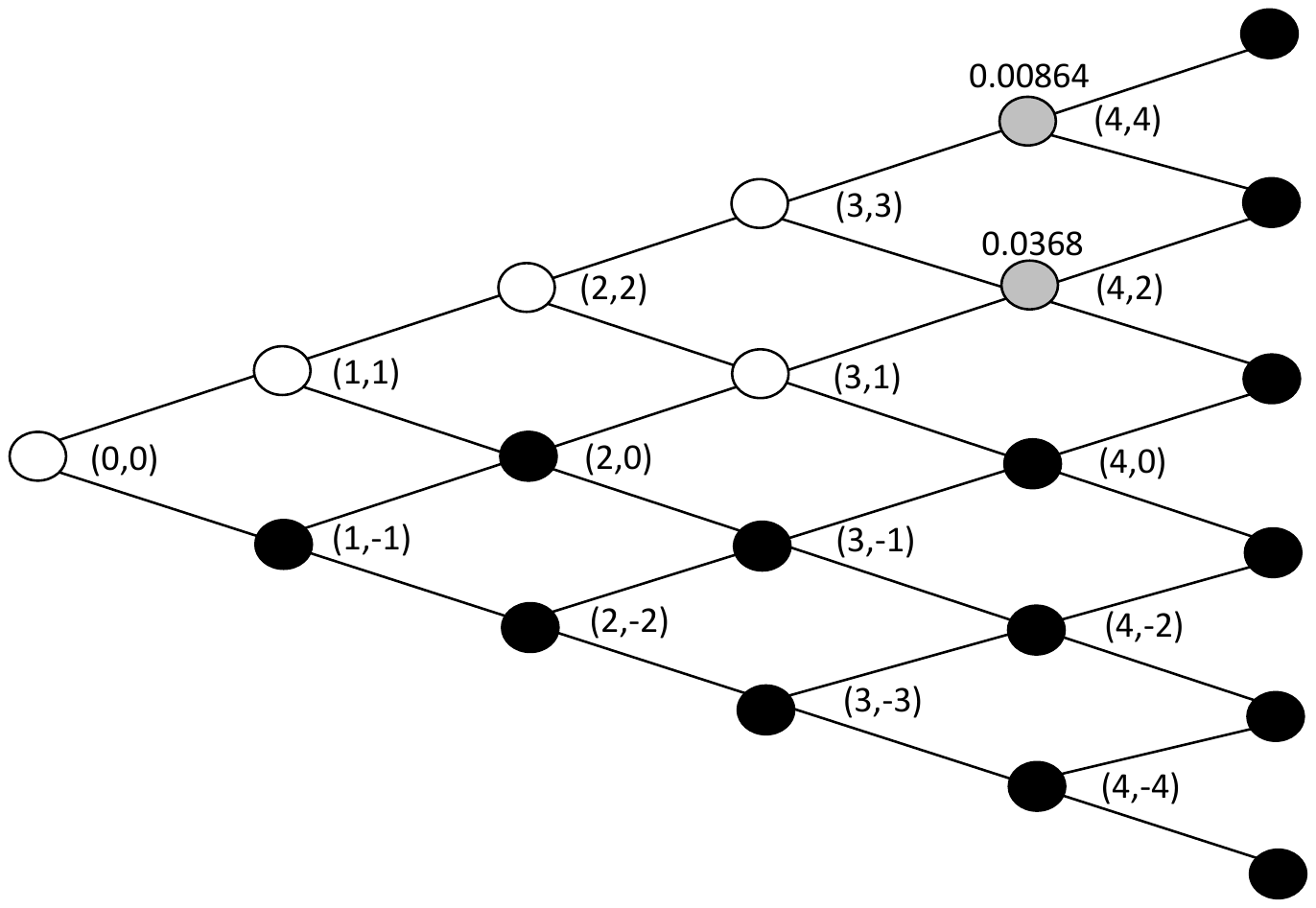}
  \end{minipage}
  \begin{minipage}[t]{0.49\textwidth}
     \includegraphics[width=\textwidth]{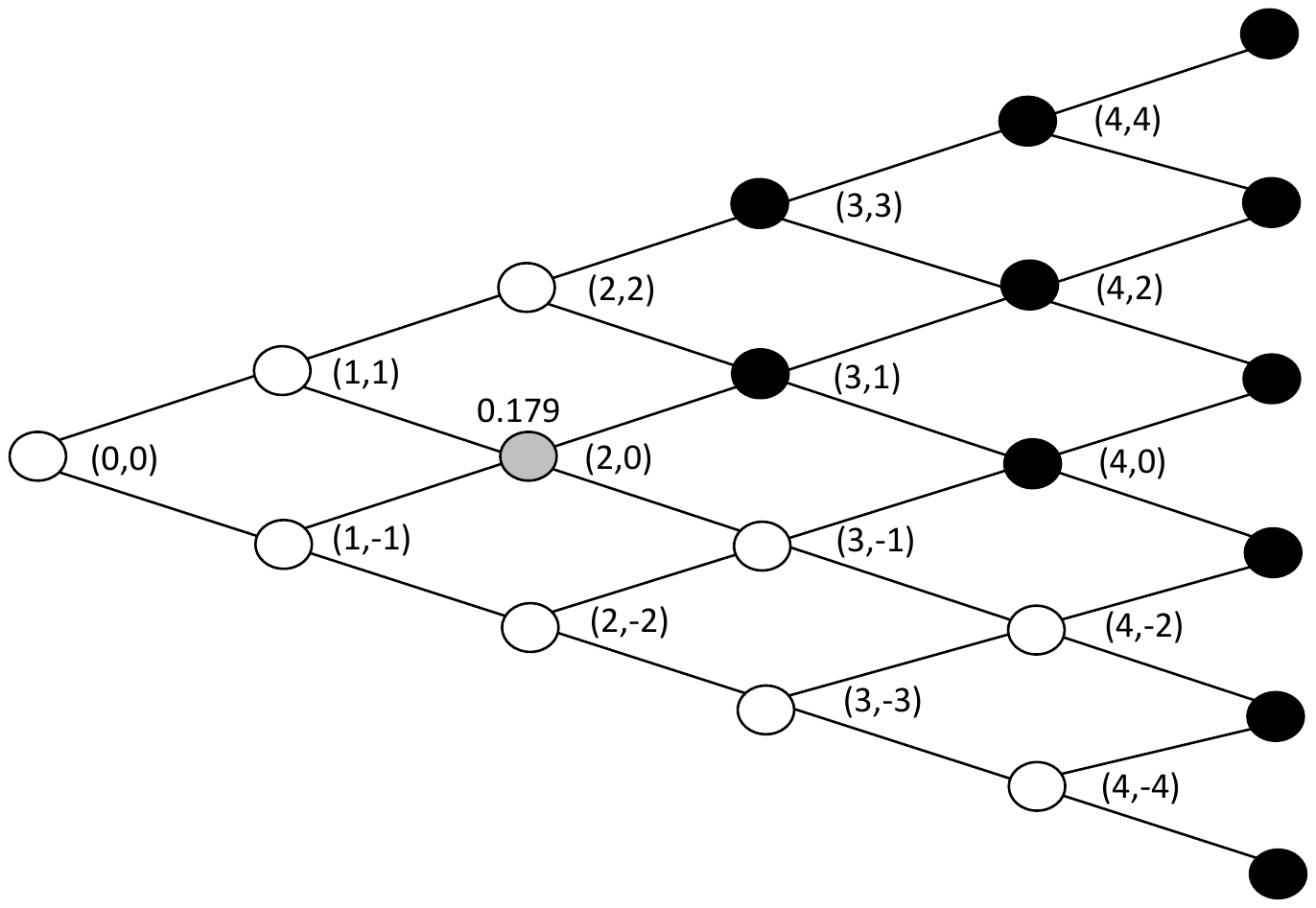}
  \end{minipage}
  \caption{\small The left panel shows the precommitter's strategy and the right panel shows the naivet\'e's strategy, for $T = 5$, $\alpha_+ = \alpha_- = 0.95$, $\delta_+ = \delta_- = 0.5$, $\lambda = 1.5$. Black nodes mean ``stop'', white nodes mean ``continue'', and grey nodes mean ``randomize''. The numbers above the grey node stand for the probability to stop. While the precommitter is mainly to continue in gains and stop in losses, the naivet\'e's behavior  is almost completely reversed.}     \label{Optimalstrategy}
\end{figure}

The main feature of this precommitted optimal strategy is to continue in the gain domain and to stop in the loss domain until $T=5$, except at time 4 where there are positive probabilities to stop in gains.
In particular, randomization takes place at nodes $(4,4)$ and  $(4,2)$, with the (very small) probabilities to stop equal to 0.00864 and 0.0368, respectively. The CPT value of this  randomized strategy is 0.3369592.
Compared with \citet{Barberis2012:Casino} where the CPT value is 0.3369398, the optimal non-randomized, Markovian  strategy has white nodes at $(4,4)$ and $(4,2)$, instead of grey nodes that involve randomization.\footnote{These are the only two nodes that are different between \citet{Barberis2012:Casino} and the present paper.
Note they occur at  $T-1$ and when there are sufficient gains. The intuition why
the gambler randomizes at these two nodes will be explained in Subsection \ref{omr} when we investigation the situation when the horizon is extended from $T$ to $T+1$.}    In summary,
allowing  path-dependent and randomized strategies does indeed improve optimal CPT values (albeit only slightly in this {\it particular} instance) over non-randomized, Markovian ones.\footnote{This improvement can be {\it significant} with other parameter specifications. For example, with $T=2$ and $(\alpha_\pm, \delta_\pm, \lambda) = (0.9, 0.5, 1.25)$,  the optimal CPT value among non-randomized, path-independent strategies  is 0.058069135, and that among randomized ones is 0.065696808, representing a 13\% increase. When $T=2$ and $(\alpha_\pm, \delta_\pm, \lambda) = (0.5, 0.5, 1)$, the corresponding figures are 0.0253839 and 0.0492624, representing a 94\% increase.
}

 Moreover, one can
achieve this improved  optimal value by implementing a Markovian randomization, with the overall strategy  very similar qualitatively to \citet{Barberis2012:Casino}'s -- both are of the loss-exit type.\footnote{We have also revisited the $T=6$ example considered also in
\citet{HeHuOblojZhou2014:OptimalCasinoBettingwhyluckycoinandgoodmemoryareimportant}. In that paper, a randomized strategy, found by trial and error, leads to the value function $V=0.250702$ compared with $V=0.250440$ for the best non-randomized strategy. Using our algorithm, we see that the best randomized strategy actually gives $V=0.257483$. It is still a loss-exit type and it stops at node $(0,0)$ with probability $0.201$, node $(5,1)$ with $0.436$, node $(5,3)$ with $0.0292$ and node $(5,5)$ with $0.0113$.}

While the precommitted gambler follows through the optimal strategy originally determined  at time 0, a {\it na\"ive} gambler thought he would do the same but in actuality
constantly deviates from previously planned strategies. More precisely, at {\it any} time $t>0$, a naivet\'e {\it re-considers} the optimal stopping problem starting from $t$, devises a precommitted strategy but carries it out  for only {\it one} period (because he will re-optimize again at the next time instant). Here, we assume
this gambler keeps his initial wealth at time 0 as the reference point.\footnote{This is also the assumption made in \cite{Barberis2012:Casino} when analyzing a na\"ive gambler's behavior. It is both natural and plausible that a gambler remembers the initial amount of cash he brought into the casino and always compares wins and losses against that amount.}
The na\"ive gambler's strategy can be computed by deriving all the time-$t$ precommitted strategies, $t=0,1,...,T$, implementing each of them for just one period, and then ``pasting" them together.
As a result, his {\it actual} quitting strategy could be drastically  different from the precommitted one, the one he originally planned before he enters the casino;
 see the right panel of Figure \ref{Optimalstrategy}. There, the only
 node calling for randomization is now (2,0), with a probability of 0.179 to quit.\footnote{This is also the only node that makes our na\"ive strategy different from
 \cite{Barberis2012:Casino}'s in which the node (2,0) is white meaning ``continue";
 see the right panel of Figure 4 therein.}
 Comparing the two strategies depicted  in Figure \ref{Optimalstrategy}, we find that the na\"ive strategy is not only significantly different from the precommitted one, but indeed almost completely {\it opposite} in character: the latter is mainly to continue in the gain domain and to stop in the loss domain, while the former is reversed. For a discussion on experimental evidence on the dramatic departure of the actual gambling behaviors from the planned ones,
 see \citet[Section 4.3]{Barberis2012:Casino}.
 \cite{HIIW} presents strong evidence from lab and field that supports the inconsistent dynamic framework of \cite{Barberis2012:Casino}.
  In the context of stock trading, such a
 na\"ive behavior -- the tendency of selling winners too soon and keeping losers too long -- is widely observed especially for retail investors, and is termed the {\it disposition effect} by \cite{OdeanT:98de}. 

\section{Discussions}\label{se:Discussion}

\subsection{To enter or not to enter: the power of randomization}\label{enterornot}

One of the main takeaways of \cite{Barberis2012:Casino} is that CPT offers an explanation why a gambler would be willing to enter a casino even if the bets there have neither
skewness nor positive expected values. By implementing a loss-exit strategy, namely keep gambling when winning
but stop gambling when accumulating a sufficient loss, he envisions a  positively skewed probability distribution of the accumulated gain/loss at the exit time which
is favored by the CPT preference. However, he would need a sufficiently {\it long} time period to build such a skewed distribution in order to have a {\it positive}
CPT value to justify the entry (recall that the CPT value of not playing at all is zero). For the case of a piece-wise power utility function \eqref{eq:utility} and an inverse S-shaped weighting function \eqref{eq:distortion},
\citet[Proposition 1]{Barberis2012:Casino} provides a sufficient condition for this to happen.\footnote{\citet[Proposition 1]{Barberis2012:Casino} is stated for a na\"ive gambler. However, the result holds for a precommitter as well because both gamblers face the same problem at $t=0$.} Moreover, for the parameter values $\alpha_+ = \alpha_- = 0.88$, $\delta_+ = \delta_- = 0.65$, $\lambda = 2.25$, this sufficient condition translates into $T\geq 26$; see
\citet[Corollary 1]{Barberis2012:Casino}.\footnote{These  parameter values
are  close to those given by \cite{TverskyKahneman1992:CPT}, i.e., $\alpha_+ = \alpha_- = 0.88$, $\delta_+ = 0.61$, $\delta_- = 0.69$, $\lambda = 2.25$.
If we apply the exact \cite{TverskyKahneman1992:CPT} parameter values to \citet[Proposition 1]{Barberis2012:Casino}, then the corresponding  $T \geq 20$.
Such a shorter period is expected because the probability weighting in gains is stronger than that in losses with \cite{TverskyKahneman1992:CPT}'s parameters; thus
it takes less time to build the desired positively skewed distribution with a positive CPT value.
}

However, with randomization allowed, the gambler may be willing to enter the casino even if he is allowed to play only {\it once} (i.e., $T=1$).

\begin{proposition}\label{prop:onehrand}
Suppose $T=1$. If $\lim_{p \to 0} [w_+'(p)/w_-'(p)] > \lambda[u_-(1)/u_+(1)]$, then the optimal CPT value is strictly
positive.
\end{proposition}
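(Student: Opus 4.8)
The plan is to reduce the $T=1$ problem to a one-parameter optimization over the stopped distribution and then examine the behavior of the objective as the stopping probability tends to zero. First I would note that, by Theorem~\ref{coro:Rootst}, any centered $\mu \in \cP_0(\bZ)$ supported on $\{-1,0,1\}$ is attainable (condition \eqref{3} being vacuous for $T=1$). Centering forces the two tail masses to coincide, so such a measure is necessarily of the form $\mu(\{1\}) = \mu(\{-1\}) = p$, $\mu(\{0\}) = 1-2p$ for $p \in [0,1/2]$. Thus the optimization in \eqref{originaloptimization} collapses to maximizing over the single parameter $p$.

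Next I would substitute this family into the CPT functional \eqref{prob:InfiniteDimProgramming:original}. Using $w_\pm(0)=0$ and the fact that, for $T=1$, the only active terms sit at level $n=1$, the objective becomes
\[
V(p) = u_+(1)\, w_+(p) - \lambda\, u_-(1)\, w_-(p), \qquad p \in [0,1/2].
\]
At $p=0$ (not entering the casino) this gives $V(0)=0$, so it suffices to exhibit some $p>0$ with $V(p)>0$.

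The crux is a local analysis near $p=0$. Since $w_\pm(0)=0$ and $w_-$ is strictly increasing (hence $w_-(p)>0$ and $w_-'(p)>0$ for small $p>0$), I would form the ratio
\[
\frac{V(p)}{w_-(p)} = u_+(1)\,\frac{w_+(p)}{w_-(p)} - \lambda\, u_-(1).
\]
Applying L'H\^opital's rule to the $0/0$ form $w_+(p)/w_-(p)$ yields $\lim_{p\to 0^+} w_+(p)/w_-(p) = \lim_{p\to 0^+} w_+'(p)/w_-'(p)$, which is exactly the quantity in the hypothesis. Hence
\[
\lim_{p\to 0^+} \frac{V(p)}{w_-(p)} = u_+(1)\lim_{p\to 0^+}\frac{w_+'(p)}{w_-'(p)} - \lambda\, u_-(1) > 0
\]
precisely under the stated assumption. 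Since $w_-(p)>0$ for small $p>0$, this forces $V(p)>0$ on some interval $(0,\varepsilon)$, and therefore the optimal value $\max_{p\in[0,1/2]} V(p)$ is strictly positive.

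The main obstacle I anticipate is the justification of the L'H\^opital step: for the inverse-S weighting functions the derivatives $w_\pm'(p)$ typically blow up as $p\to 0^+$, so one cannot simply differentiate $V$ at $0$ and read off a sign. I would address this by working throughout with the \emph{ratio} $V(p)/w_-(p)$ rather than $V$ itself, which keeps every quantity finite, and by invoking the $0/0$ version of L'H\^opital — valid since $w_-'(p)\neq 0$ near $0$ — to pass from the ratio of derivatives appearing in the hypothesis to the ratio of the functions. The degenerate case where the limit in the hypothesis equals $+\infty$ is handled identically, as $V(p)/w_-(p)$ then diverges to $+\infty$ and is in particular eventually positive.
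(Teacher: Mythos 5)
Your proposal is correct and takes essentially the same approach as the paper: both reduce the $T=1$ problem to the one-parameter family of symmetric stopped distributions and maximize $V(q) = u_+(1)\,w_+(q) - \lambda\, u_-(1)\, w_-(q)$ over $q \in [0,1/2]$, then establish positivity for small $q>0$ from the hypothesis. The only (immaterial) difference is the closing calculus step: the paper signs the derivative $V'(q) = u_+(1)\,w_+'(q) - \lambda\, u_-(1)\, w_-'(q)$ directly for small $q$ (the hypothesis is already a statement about derivative ratios, so no L'H\^opital detour is needed) and concludes $V$ is strictly increasing from $V(0)=0$, whereas you convert the derivative-ratio hypothesis into a function-ratio limit via L'H\^opital, which is valid but slightly roundabout.
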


Recall that in our model, randomization is available; so the optimal CPT value being strictly positive means that the gambler will enter the casino, possibly tossing a coin
to decide whether to actually play (the only) one round of bet.

What if $T\geq 2$?
Naturally, as $T$ increases, the optimal CPT values increase. Figure \ref{pic3} graphs the optimal CPT values for $T = 1,2,...,20$ with the \cite{TverskyKahneman1992:CPT} estimates. Therefore, if the gambler will enter
the casino for $T=1$ with a given set of parameters, so will he for $T\ge2$ with the same parameters. As a consequence, Proposition \ref{prop:onehrand} holds for $T\geq 2$ as well.

\begin{figure}
  \centering
    \includegraphics[width=100mm]{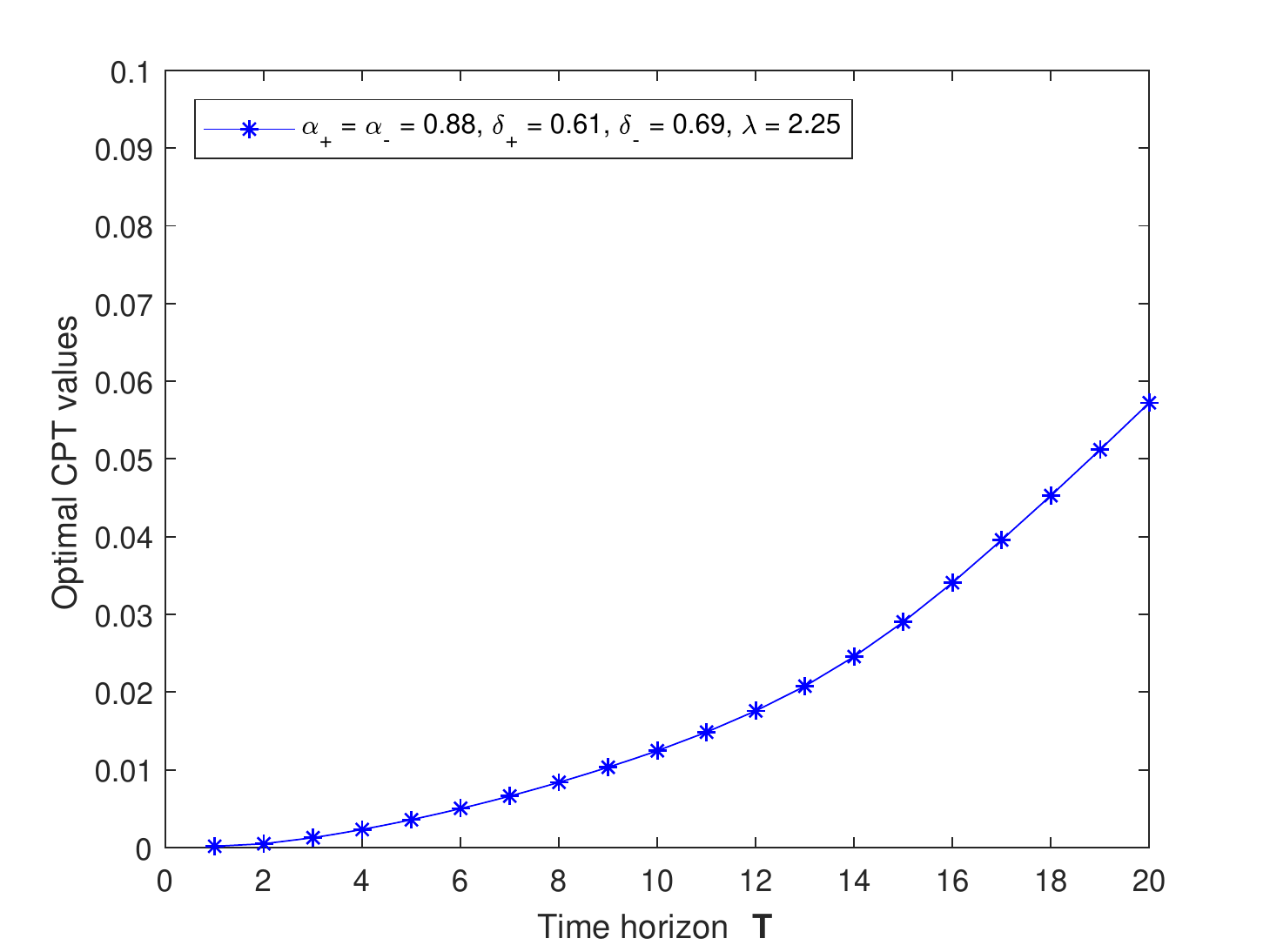}
  \caption{\small Optimal CPT values for $T = 1,2,...20$ under the parameter values of \cite{TverskyKahneman1992:CPT}, i.e., $\alpha_+ = \alpha_- = 0.88$, $\delta_+ = 0.61$, $\delta_- = 0.69$, $\lambda = 2.25$.}\label{pic3}
\end{figure}

It is straightforward to show that for the weighting function \eqref{eq:distortion}, when $\delta_+ < \delta_-$ (which is the case with
\cite{TverskyKahneman1992:CPT}'s estimates), we have $\lim_{p \to 0} [w_+'(p)/w_-'(p)] = +\infty$. Hence, Proposition \ref{prop:onehrand} yields that, as long as the loss-aversion degree $\lambda$ is finite, a randomized gambling strategy is always preferred to non-gamble, even when $T=1$.

The intuition of Proposition \ref{prop:onehrand} is as follows. The condition
$\lim_{p \to 0} [w_+'(p)/w_-'(p)] > \lambda[u_-(1)/u_+(1)]$ means that exaggeration  of big gains outweighs exaggeration  of big losses and loss aversion {\it combined}; so the gambler assigns a positive CPT value to a sufficiently positively skewed distribution. Without randomization, it would take time to build such a distribution. With randomization, however, the gambler can design a coin {\it right away} with such a distribution, saving him all the time otherwise needed. In other words, a coin toss can be used to supersede all the time-consuming (and perhaps clever) maneuvers  to reach the desired distribution. Note that even though randomization still gives rise to a {\it symmetric} distribution of gains and losses and hence the loss aversion seemingly would prevent the gambler from entering, the sufficiently unequal levels of probability weighting on gains and losses, as stipulated by the condition $\lim_{p \to 0} [w_+'(p)/w_-'(p)] > \lambda[u_-(1)/u_+(1)]$, yields the contrary.

On the other hand, the effectiveness of randomization crucially depends on the chosen parameters. If the degree of probability weighting in gains is equal to or less than that in losses, and the level of loss-aversion  is sufficiently large so that
$\lambda [u_-(1)/u_+(1)]>1$, then the above proposition does not apply because $\lim_{p \to 0} [w_+'(p)/w_-'(p)] \le 1 < \lambda [u_-(1)/u_+(1)]$. For example, for utility function \eqref{eq:utility} and probability weighting function \eqref{eq:distortion}, let $\alpha_+ = \alpha_- = 0.88$, $\delta_+ = \delta_- = 0.65$, $\lambda = 2.25$, the values used in \citet[Corollary 1]{Barberis2012:Casino}. In this case, we find a positive CPT value of {\it randomized} strategies only when the time horizon is at least   $T=25$, slightly shorter than that ($T = 26$) presented in \citet[Corollary 1]{Barberis2012:Casino} using a special {\it non-randomized} loss-exit strategy.
The corresponding optimal precommitted strategy is also of a loss-exit type: stop once losing \$1, and continue with possible randomization when wining.
If we further let the probability weighting  in gains be weaker than that in losses, e.g., $\delta_+ = 0.69$ and $\delta_- = 0.61$, then a positive preference value is found only at $T \ge 39$.

\subsection{Alpha, delta, and lambda}\label{adl}

There are three components in the risk/loss preferences under CPT: the utility function, the probability weighting and the loss aversion. They are intertwined and
compete with each other in determining the overall preference and dictating the final behavior. In this subsection, we study the roles they play in the case when
the utility function is \eqref{eq:utility} and the  weighting function is \eqref{eq:distortion}, with parameters $\alpha_\pm$, $\delta_\pm$ and $\lambda$.

Others being kept unchanged, the effect of each  of these parameters  is as follows: a smaller $\alpha_+$ implies a higher degree of risk-aversion in gains and a smaller $\alpha_-$ implies a higher degree of risk-seeking in losses, a smaller $\delta_\pm$ yields a higher level of probability weighting in gains/losses, and a smaller $\lambda$ indicates a smaller extent of
loss aversion.
To understand  the overall impact of these parameters on exit decisions, we first fix $\lambda$ and consider four sets of scenarios: large $\alpha_\pm$ and small $\delta_\pm$; small $\alpha_\pm$ and large $\delta_\pm$; small $\alpha_\pm$ and small $\delta_\pm$; and large $\alpha_\pm$ and large $\delta_\pm$. Then we examine the effect of $\lambda$.
In the following discussions we fix $T = 10$.


The left panel of  Figure \ref{tree_10_l} draws the optimal precommitted strategy when $\alpha_\pm = 0.95$, $\delta_\pm = 0.5$ and $\lambda = 1.5$. These are the same parameters as in the numerical example presented in Subsection \ref{numerical}, except now we have a much longer horizon. Again,  black nodes mean ``stop'', white nodes ``continue'', grey nodes ``randomize'', and the number above the grey node is the probability to stop. This  strategy is mainly to continue or toss a coin in gains until the final time and to stop in losses, which is thus a loss-exit one. The intuition is as follows. This is the case where $\alpha_{\pm}$ are relatively large (lower risk-aversion/-seeking) and $\delta_{\pm}$ relatively small (heavier probability weighting). In the gain region, the stronger exaggeration of the small probability of winning a large amount outweighs the weaker risk aversion; hence the gambler is
willing to take more risk and stay longer. In the loss region,  the stronger exaggeration of the small probability of losing a large amount, together with the loss aversion,  outweighs the weaker risk-seeking appetite and prompts the gambler to
play safe and quit earlier.

\begin{figure}
  \centering
  \begin{minipage}[t]{0.49\textwidth}
    \includegraphics[width=\textwidth]{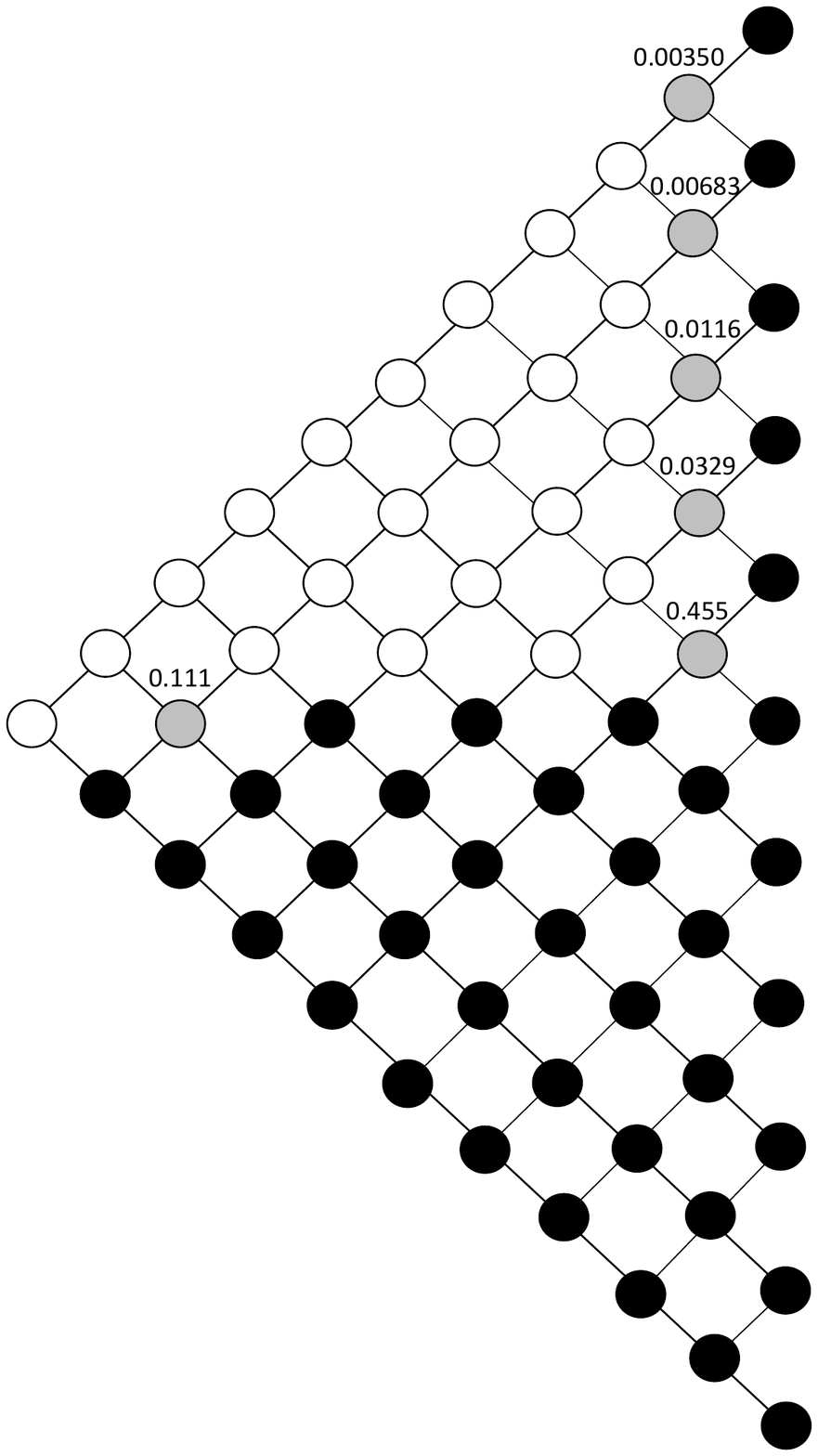}
  \end{minipage}
  \begin{minipage}[t]{0.49\textwidth}
    \includegraphics[width=\textwidth]{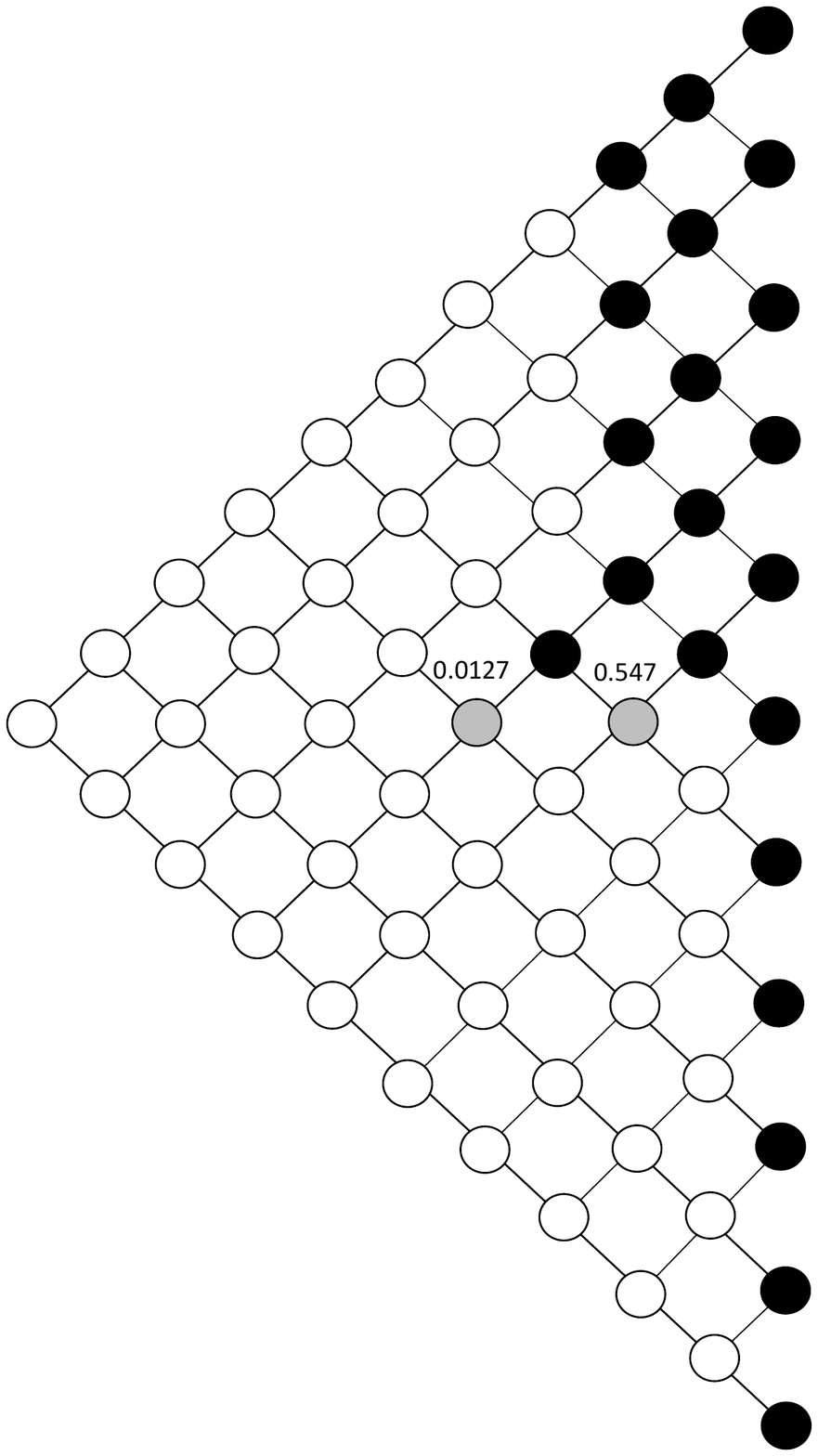}
  \end{minipage}
  \caption{\small The precommitted (left panel) and na\"ive (right panel) strategies for $T = 10$, $\alpha_\pm = 0.95$, $\delta_\pm = 0.5$, $\lambda = 1.5$. Black nodes are ``stop'', white nodes are ``continue'', and grey nodes are ``randomize''. The numbers above the grey nodes are the probabilities to stop.}\label{tree_10_l}
\end{figure}

The above argument is reversed, leading to
a gain-exit type of strategy,  when $\alpha_{\pm}$ are relatively small and $\delta_{\pm}$ relatively large, such as the one depicted in the left panel of Figure \ref{tree_10_g} where $\alpha_\pm = 0.5$, $\delta_\pm = 0.95$, $\lambda = 1.5$. An interesting small variation of this case is when
probability weighting is absent, i.e., $\alpha_\pm = 0.5$, $\delta_\pm = 1$, $\lambda = 1.5$, in which the optimal CPT value is positive  and the precommitted strategy is still a gain-exit one. Indeed, a positive preference value is found at a much shorter horizon $T=4$ under this group of parameters, and the optimal distribution of $S_\tau$ is left-skewed (which is favored by a strong risk-seeking preference in losses represented by $\alpha_-$).

\begin{figure}
  \centering
  \begin{minipage}[t]{0.49\textwidth}
    \includegraphics[width=\textwidth]{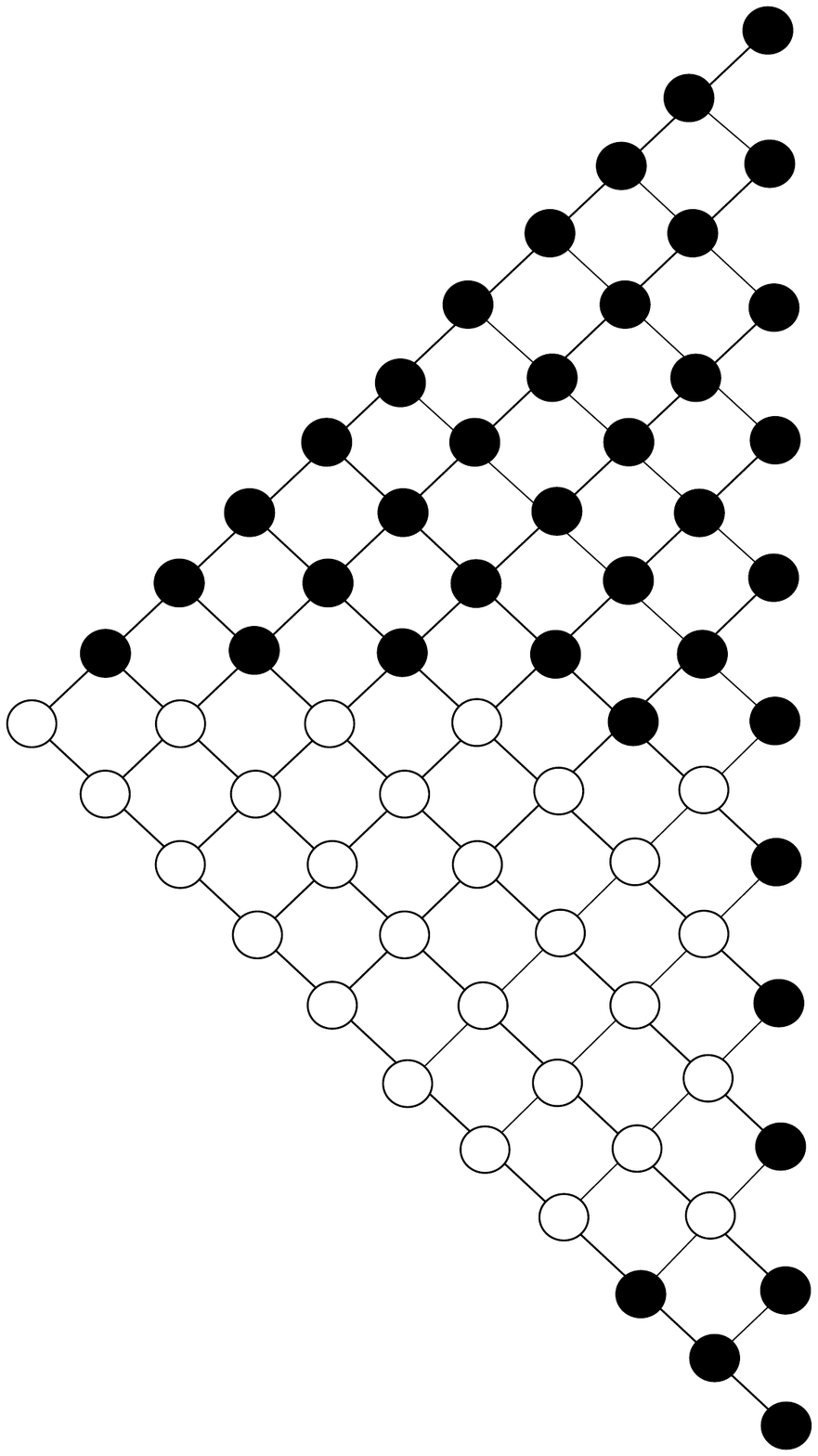}
  \end{minipage}
  \begin{minipage}[t]{0.49\textwidth}
    \includegraphics[width=\textwidth]{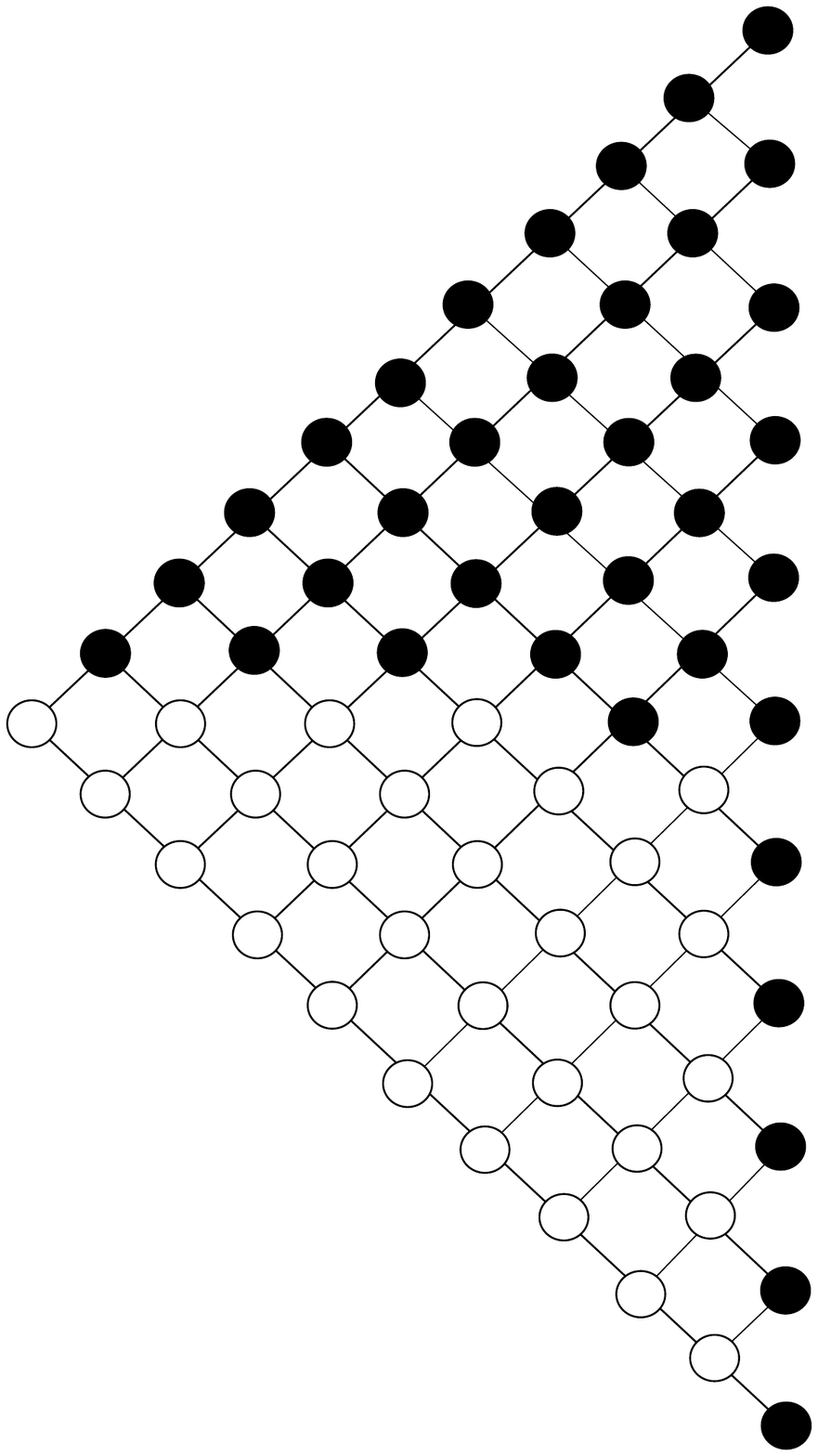}
  \end{minipage}
  \caption{\small The precommitted (left panel) and na\"ive (right panel) strategies for $T = 10$, $\alpha_\pm = 0.5$, $\delta_\pm = 0.95$, $\lambda = 1.5$. Black nodes are ``stop'' and white nodes are ``continue''. There is no grey node.}\label{tree_10_g}
\end{figure}


The left panel of Figure \ref{tree_10_l2} shows the precommitted strategy for the parameter values  $\alpha_\pm =  \delta_\pm = 0.5$ and $\lambda = 1.5$, which is the case of small $\alpha_\pm$ and small $\delta_\pm$. This is still a loss-exit strategy, but the main differences from that visualized by the left panel of Figure \ref{tree_10_l} are that, in the gain region, there are now more black nodes  and the numbers above
the grey nodes are larger, implying a higher likelihood of stop even when the gambler has accumulated a gain. The reason is that with a smaller $\alpha_+$, the exaggeration of the small probability of winning a large amount still outweighs the risk aversion in gains, but with a lesser degree than the previous case.

\begin{figure}
  \centering
  \begin{minipage}[t]{0.49\textwidth}
    \includegraphics[width=\textwidth]{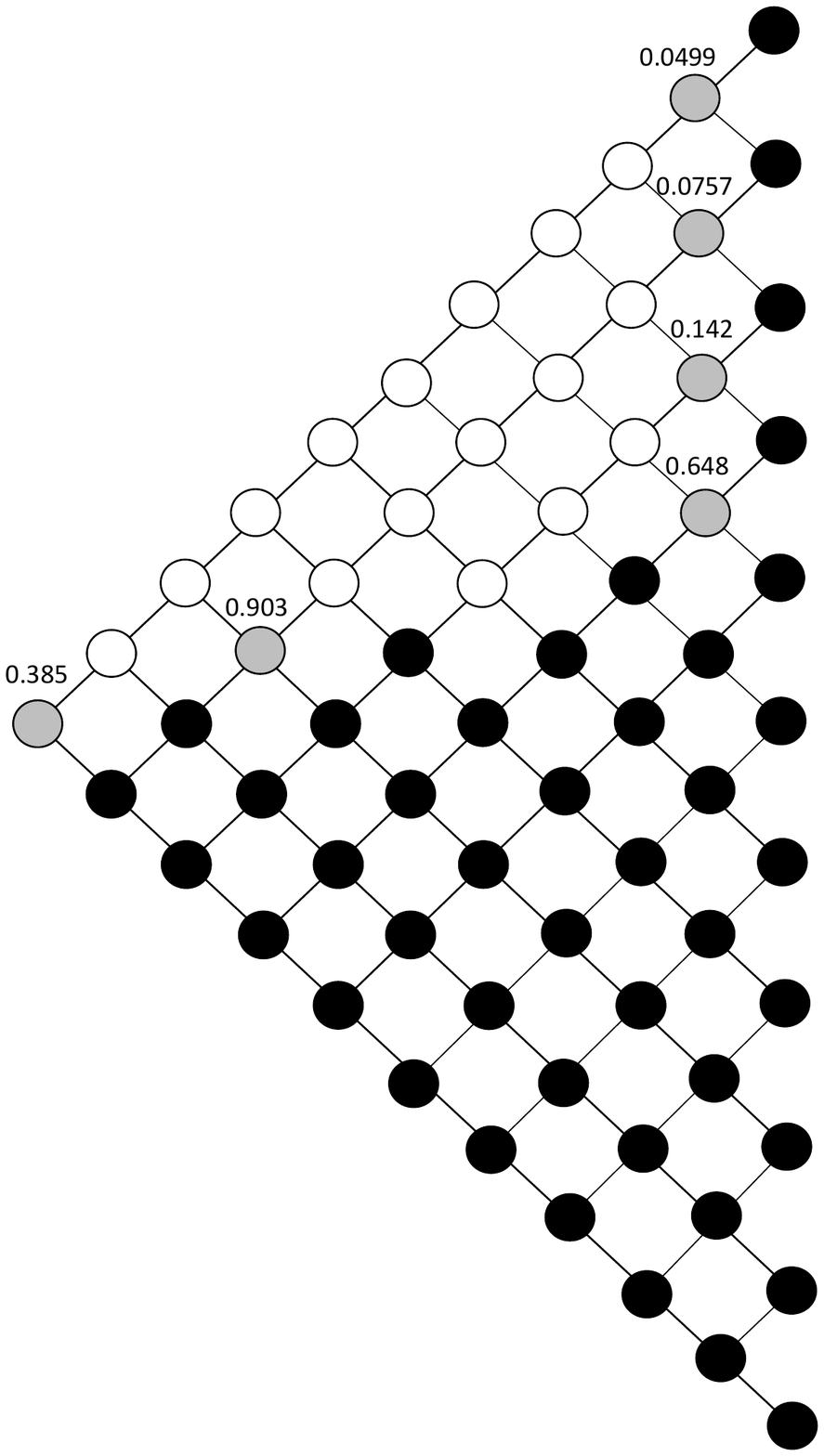}
  \end{minipage}
  \begin{minipage}[t]{0.49\textwidth}
    \includegraphics[width=\textwidth]{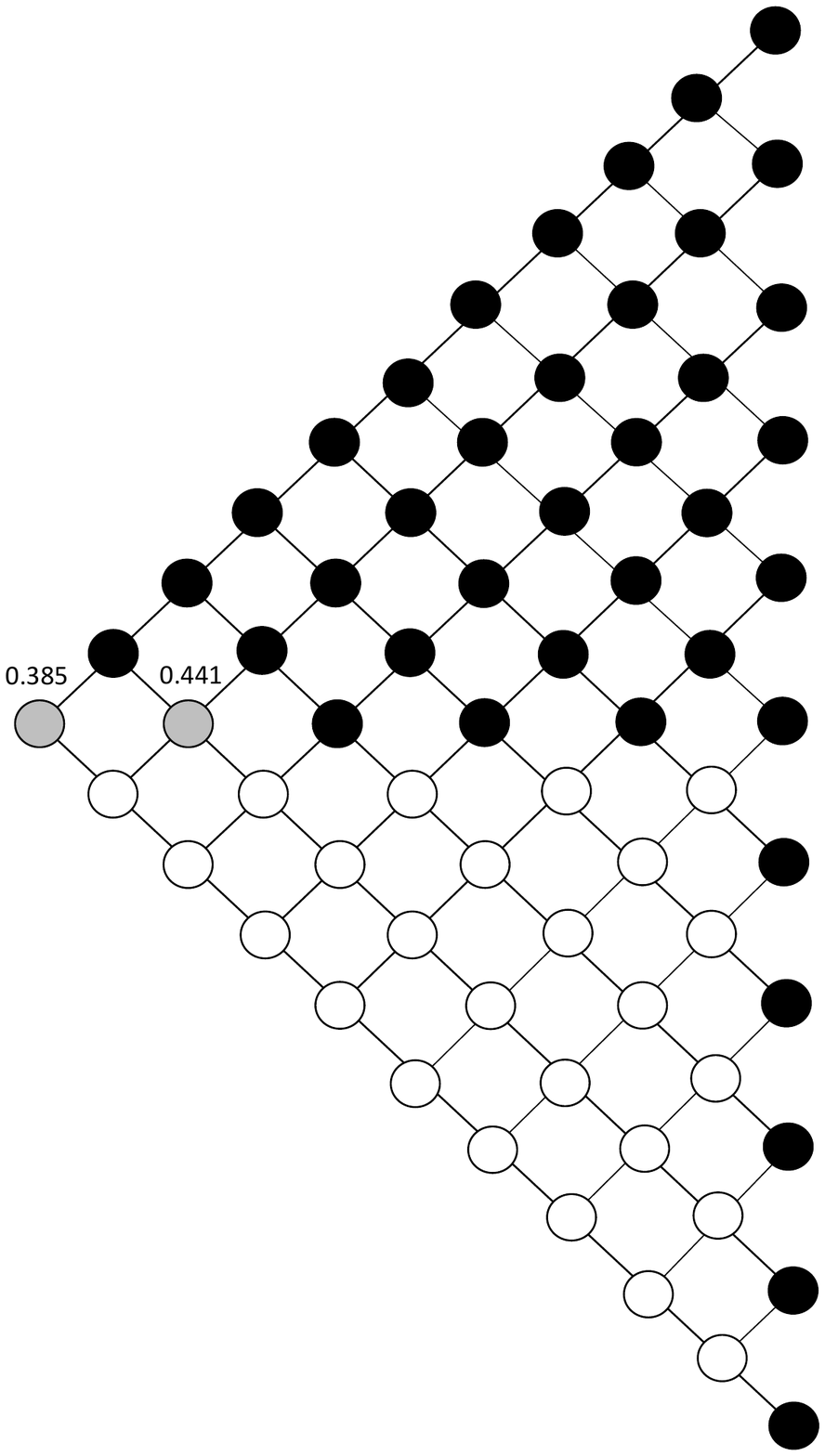}
  \end{minipage}
  \caption{\small The precommitted (left panel) and na\"ive (right panel) strategy for $T = 10$, $\alpha_\pm = \delta_\pm = 0.5$, $\lambda = 1.5$. Black nodes are ``stop'', white nodes are ``continue'', and grey nodes are ``randomize''. The numbers above the grey nodes are the probabilities to stop. }\label{tree_10_l2}
\end{figure}

The last set of parameters are  $\alpha_\pm =\delta_\pm = 0.95$, $\lambda = 1.5$, with which the optimal CPT value is  zero and the gambler will simply not enter the casino. This is because
these parameter values render a risk preference close to both risk-neutral and probability--weighting--free, while  a zero-mean bet and a loss-aversion degree $\lambda>1$ prevent the gambler from  playing the game at all.

The impact of $\lambda$ is more straightforward, which we now examine. For each group of $\alpha_\pm$ and $\delta_\pm$ considered above, we obtain the optimal CPT value by varying $\lambda$ from 1 to 3; see Figure \ref{pic}, the left panel. Quite naturally, each of the optimal CPT  values decreases as $\lambda$ increases,  and three of them hit  zero before $\lambda$ reaches 3. As a result, the gambler will be increasingly reluctant to stay in or even enter the casino as his level of loss aversion increases.

\begin{figure}
  \centering
  \begin{minipage}[t]{0.49\textwidth}
    \includegraphics[width=\textwidth]{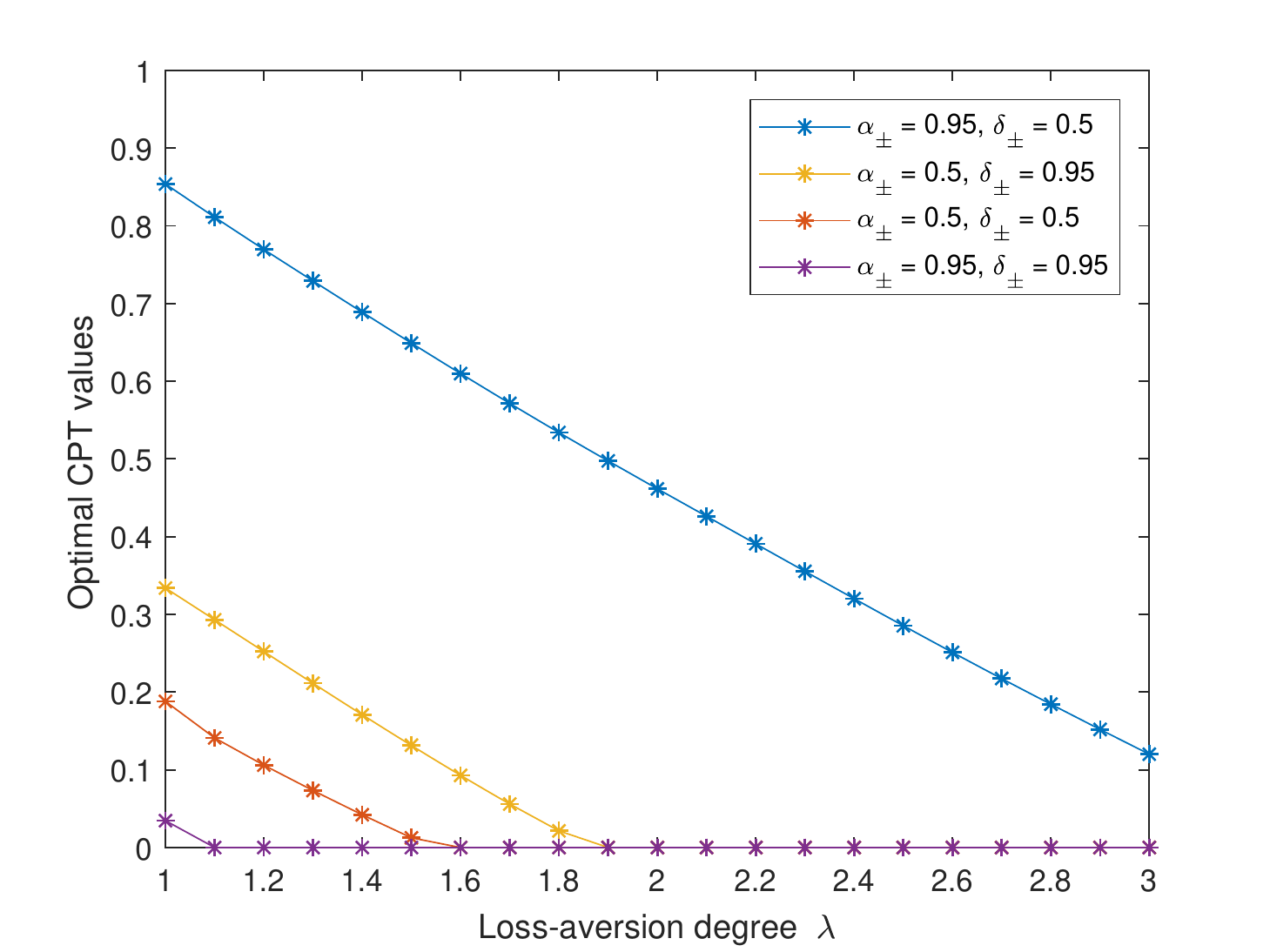}
  \end{minipage}
  \begin{minipage}[t]{0.49\textwidth}
    \includegraphics[width=\textwidth]{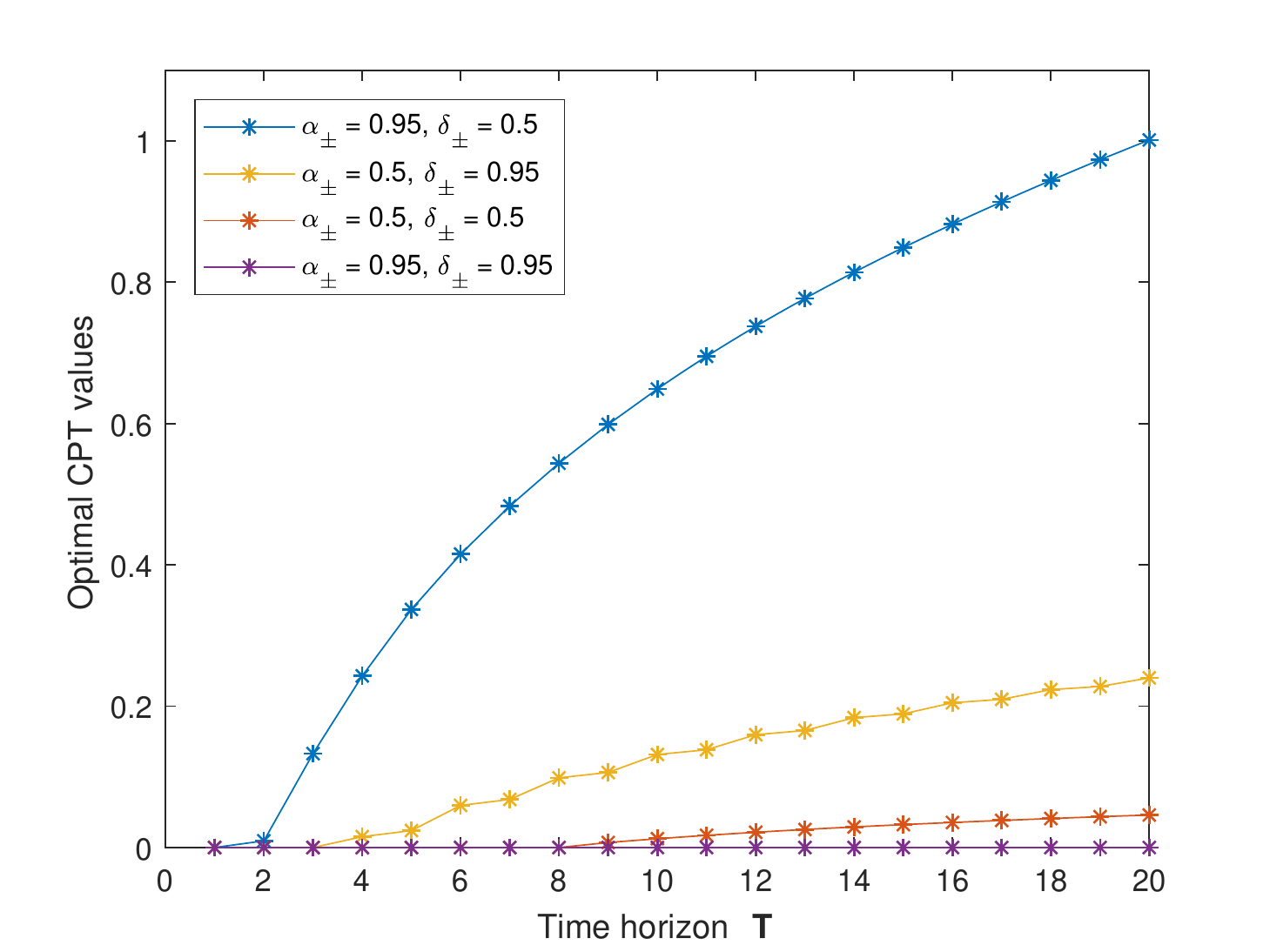}
  \end{minipage}
  \caption{\small Optimal CPT values for $\lambda$ from 1 to 3, while $T = 10$, are shown in the left panel. Optimal CPT values for $T = 1,...20$, while $\lambda = 1.5$, are shown in the right panel. In both panels, $(\alpha_\pm, \delta_\pm) \in \{ (0.95,0.5), (0.5,0.95),(0.5,0.5),(0.95,0.95)\}$. }\label{pic}
\end{figure}

The analysis in this subsection shows that the CPT casino modeling with various
constellations of parameter specifications can predict and explain
a rich array of gambler behaviors. In particular,  whether the strategy is loss-exit or otherwise depends on the interplay between the three intertwining and competing forces represented by $\alpha_{\pm}$, $\delta_{\pm}$, and $\lambda$.


\subsection{One more round?}\label{omr}

With a longer time horizon a precommitter is more likely to obtain a positive CPT preference value and hence more likely  to enter the casino because, trivially, the optimal CPT value for $T$ is no less than that for $T-1$. On the other hand, with a longer time horizon and a loss-exit strategy one can possibly construct a more positively skewed probability distribution of the accumulated gain/loss at the exit time which, under CPT preference, is preferred by the precommitter. Hence, the optimal preference value may {\it strictly} increase as $T$ increases, which is demonstrated in the right panel of Figure \ref{pic}  where the optimal CPT values for $T = 1,2,...20$ under different groups of parameters are plotted.

So, the overall CPT value will be heightened  {\it if} the gambler is told to be granted an {\it additional} round of bet than previously agreed. But would he {\it always} take advantage of this extended time horizon and {\it actually} play the additional round?
It turns out that the answer can be totally different depending on whether the gambler is   in the gain region or in the loss region.

Let the original problem have a horizon $T >0$ and $\tau\in\cT_T$ be a given exiting strategy. Assume $p_T:= \prob(S_{\tau} = T) > 0$ and consider the scenario in which the gambler has reached the upper most node
$(T,T)$ under $\tau$, namely $\tau=T$ and $S_{\tau} = T$.
Now suppose the time horizon is expanded to $T+1$ so the
gambler is allowed to play one more round. Firstly, we are interested in knowing,
{\it given} $\tau=T$ and $S_{\tau} = T$, namely the gambler has already played the originally final bet with the maximal possible accumulated win of $T$,  whether the gambler would {\it actually} take this opportunity and play one more time to possibly achieve a final accumulated gain of $T+1$ or $T-1$.\footnote{Bear in mind all the decisions are made at $t=0$ as we are considering precommitted strategies. So we are studying this problem from the vantage point of $t=0$.} The situation  is illustrated in the left panel of Figure \ref{tree_5}. Recall that randomization is allowed at any time; so let us denote by $r_T \in [0,1]$ the probability to stop at $S_{\tau} = T$, and
by $\tau'$ the strategy appending the original $\tau$ by, given $\tau=T$ and $S_{\tau} = T$, playing one more round with probability $1-r_T$ at time $T$ and finally stopping at time $T+1$.
 Let $q_T = \frac{1-r_T}{2}\in[0,\frac{1}{2}]$.
 The decumulative distribution of $S_{\tau'}$ differs from that of $S_\tau$ only at $\prob(S_{\tau'} \geq T+1) = q_T p_T$ and $\prob(S_{\tau'}\geq T) = (1-q_T)p_T$. The problem now is to choose $q_T$ to maximize $V(S_{\tau'})$ or, equivalently, to maximize
\begin{align*}
w_+(q_T p_T) [u_+(T+1) - u_+(T)] + w_+\big((1-q_T)p_T \big) [u_+(T) - u_+(T-1)].
\end{align*}
For $T$ large enough, both $q_Tp_T$ and $(1-q_T) p_T$ are small enough to fall into the {\it concave} region of the probability weighting function $w_+(\cdot)$. Hence the above is a concave maximization and the following first-order condition is necessary and sufficient for a maximum $q^*_T$: \begin{align*}
0 = \left\{ w_+'(q_T^*p_T) [u_+(T+1) - u_+(T)] - w_+'\big((1-q_T^*)p_T \big) [u_+(T) - u_+(T-1)] \right \}p_T,
\end{align*}
or equivalently,
\begin{equation}\label{qstar}
\frac{w_+'(q^*_T p_T)}{ w_+'\big((1-q^*_T)p_T \big)} = \frac{u_+(T) - u_+(T-1)}{u_+(T+1) - u_+(T)} \;.
\end{equation}
Assuming $u_+(\cdot)$ is {\it strictly}  concave  (e.g.
that given by \eqref{eq:utility}), the right hand side of (\ref{qstar}) is strictly greater than one. Hence, the equation is satisfied by some $q^*_T \in (0, \frac{1}{2})$, but {\it not} $q^*_T=0$ (noting $w_+'(0)=+\infty$) or $q^*_T=\frac{1}{2}$. Recall that $q^*_T=0$ and  $q^*_T=\frac{1}{2}$ correspond to $r_T=1$ and $r_T=0$ respectively.
So, given the gambler has already played until the end  with a sufficiently accumulated gain (so that $p_T$
is sufficiently small), once he is allowed to play (only) one more time he will {\it not} have a black-and-white decision of either ``continue" or ``stop"; rather he will {\it always} engage in randomization to make his decision.\footnote{This also explains  why randomization happens at $T-1$ in the gain region when he has one final bet to play, as the left panels of Figures 6, 8, 10 indicate.}
Moreover, as $T$ increases the right hand side of (\ref{qstar}) decreases; hence
$q^*_T$ increases or $r_T$ decreases. In other words, the more gains accumulated, the more likely the gambler will continue. 

\begin{figure}
  \centering
  \begin{minipage}[t]{0.33\textwidth}
    \includegraphics[width=\textwidth]{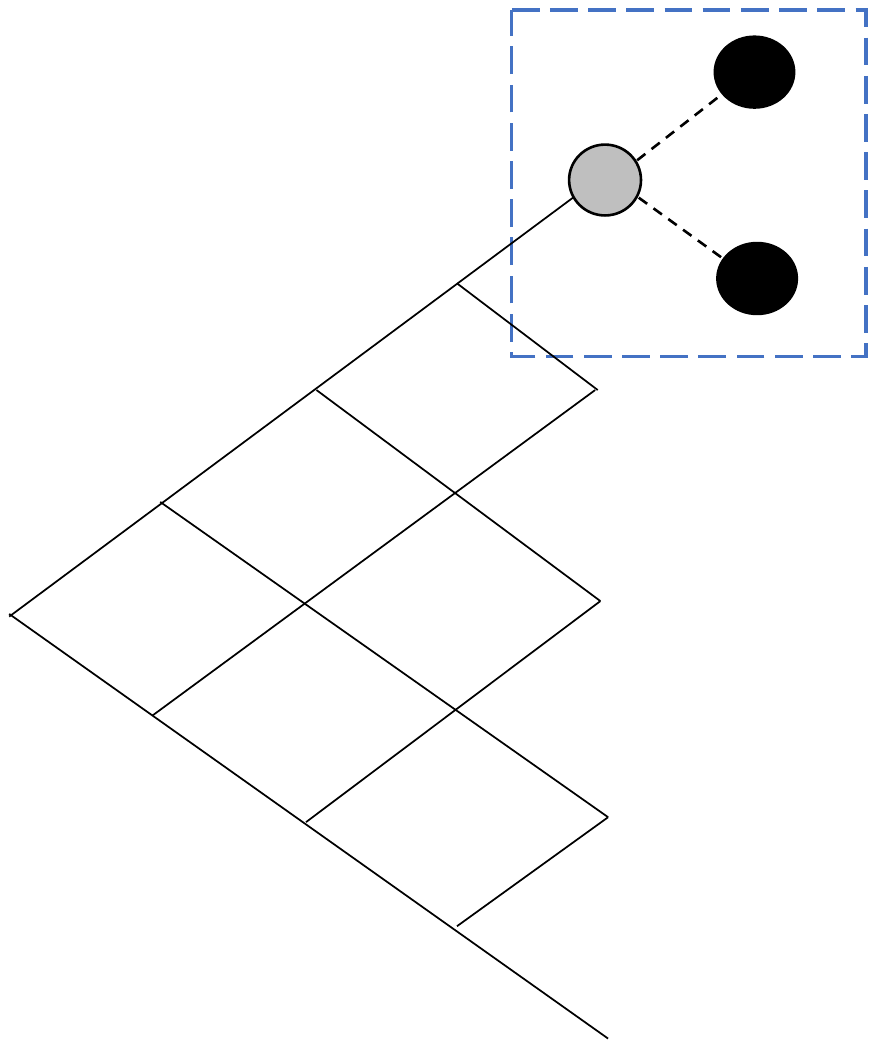}
  \end{minipage}
  \begin{minipage}[t]{0.33\textwidth}
    \includegraphics[width=\textwidth]{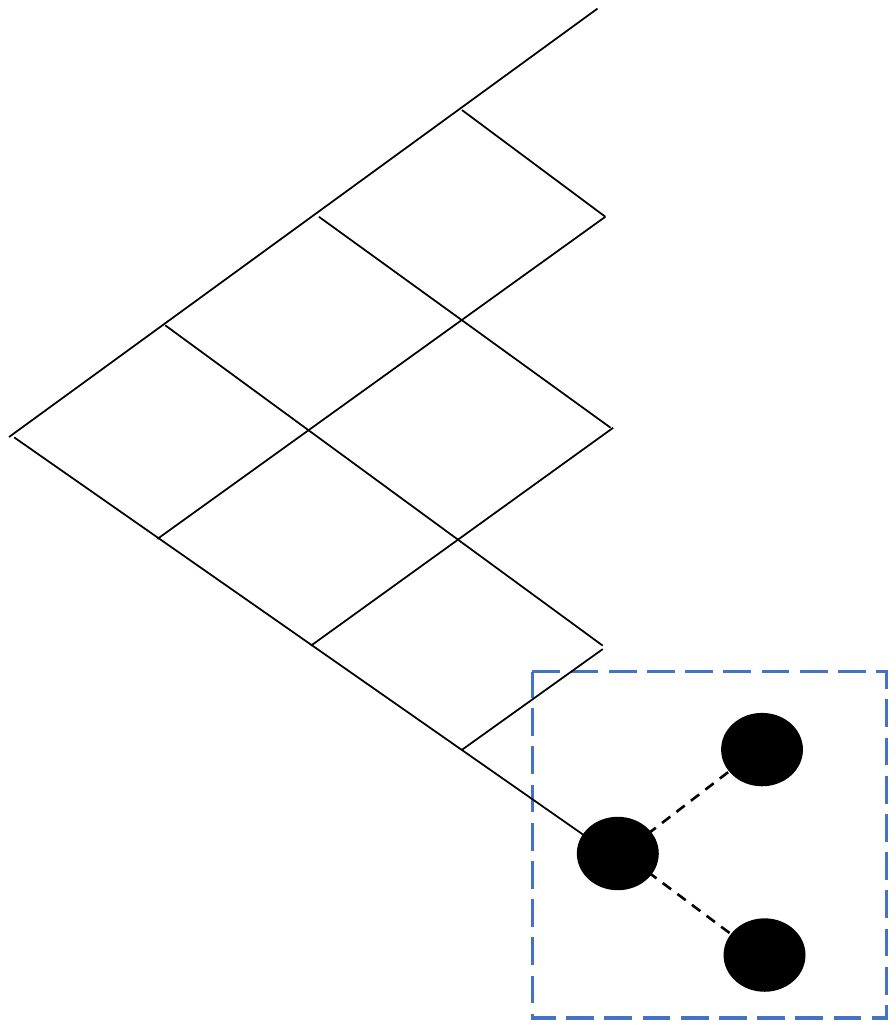}
  \end{minipage}
  \caption{\small If one more bet is allowed  given that the precommitted gambler could have played until the end with a sufficiently accumulated gain (loss), she would randomize (stop) with sufficiently large gain (loss) as shown in left panel (right panel).}\label{tree_5}
\end{figure}

What is the intuition behind these results? Standing at $t=0$, the probability of reaching the top most node and winning sufficiently large is very small; hence the effect of exaggeration of this small probability kicks in. Then, given the opportunity of an extra play, tossing a coin to decide is better than not playing at all, for the same reason as entering the casino even if one is allowed to play only once (see Subsection \ref{enterornot}). Moreover, the more gains the stronger probability weighting, and hence more likely to play.  On the other hand,
playing this additional bet without tossing a coin (i.e., {\it definitely} continuing) is not optimal either because of the {\it strict} risk aversion -- randomization helps trigger   probability weighting in large gains which in turn offsets the risk aversion level.

Next, let us examine whether the gambler would like to take one more step
in the {\it loss} region if the horizon is expanded.
Again, suppose $\tau$ is a given exit strategy for the horizon $T>0$,
%
and denote $ p_{-T} = \prob(S_{\tau} = -T) > 0$. 
Let $q_{-T} = \frac{1-r_{-T}}{2}\in[0,\frac{1}{2}]$, where $r_{-T}$ is the probability to stop at $S_{\tau} = -T$, and
by $\tau''$ the strategy extending the original $\tau$ by, given $\tau=T$ and $S_{\tau} = -T$ (see the right panel of Figure \ref{tree_5} for an illustration), playing one more round with probability $1-r_{-T}$ at time $T$ and stopping at time $T+1$, assuming the horizon is now $T+1$.
Then a similar analysis to the gain case shows that the optimal $q^*_{-T}$ minimizes
\begin{equation}\label{-qstar}
w_-(q_{-T} p_{-T}) [u_-(T+1) - u_-(T)] + w_-\big((1-q_{-T})p_{-T} \big) [u_-(T) - u_-(T-1)].
\end{equation}
Different from the gain region, in the loss region the optimality is achieved by {\it minimizing} a {\it concave} function when $p_{-T}$ is sufficiently small. Hence, the optimal $q^*_{-T}$ is either $0$ or $ \frac{1}{2}$, corresponding to ``stop" or ``continue" respectively. This means that the gambler will {\it not} flip a coin this time. To investigate which is better between ``stop" and ``continue", we calculate the difference between the objective values (\ref{-qstar}) at $q^*_{-T} = 0$ and at $q^*_{-T} = \frac{1}{2}$:
\begin{align*}
& \quad w_-(p_{-T}) [u_-(T) - u_-(T-1)] - w_-(p_{-T}/2) [u_-(T+1) - u_-(T-1)] \\
& = [u_-(T) - u_-(T-1)]  w_-(p_{-T}/2) \left[ \frac{ w_-(p_{-T})}{w_-(p_{-T}/2)} - \frac{u_-(T+1) - u_-(T-1)}{u_-(T) - u_-(T-1)} \right].
\end{align*}
As $T\to \infty$, we have $p_{-T}\to 0$ and, hence,
\begin{align*}
\frac{ w_-(p_{-T})}{w_-(p_{-T}/2)} \to 2^{\delta_-}, \quad \frac{u_-(T+1) - u_-(T-1)}{u_-(T) - u_-(T-1)} = 1 + \frac{u_-(T+1) - u_-(T)}{u_-(T) - u_-(T-1)} \to 2,
\end{align*}
assuming $w_-(\cdot)$ is given by \eqref{eq:distortion} with $0<\delta_-<1$ and $u_-(\cdot)$ has diminishing marginal (dis)utility, namely,
$u_-'(x)\to 0$ as $x\to \infty$ (which holds for \eqref{eq:utility}).
This implies that the value (\ref{-qstar}) at $q^*_{-T} = 0$ is smaller than that at $q^*_{-T} = \frac{1}{2}$, when $T$ is sufficiently large. Consequently, the gambler will choose to stop even if he is offered to play one more round. The intuition is clear:
from the perspective at  $t=0$, the probability of losing  sufficiently big is very small, which is inflated by probability weighting. This inflation outweighs
the risk-seeking in losses because of the diminishing marginal disutility. As a result, the action of stop, which generates zero additional CPT value, is the best because any other action  will only add {\it negative} CPT values.

We have proved the following result.\footnote{We have put the proof of this result here instead of in the appendix, not only because it is relatively elementary, but also because the proof discloses why there are essential differences between the gain and loss regions.}

\begin{theorem}\label{prop:Tone}
Let $\tau \in \cT_{T}$ be a given strategy.
\begin{enumerate}
\item[{\rm (a)}] Assume that $u_+(\cdot)$ is strictly  concave and  $P(S_{\tau} = T) > 0$. Construct a new strategy $\tau' = \tau'(r_T) := \tau + {\bf 1}_{\tau=T,S_{\tau} = T} \xi_{T,T}$, where $r_T\in[0,1]$, $\xi_{T,T}$ is a Bernoulli random variable that is independent of $S=(S_t : t \in \bN)$ and $\tau$, and $\prob(\xi_{T,T} = 0) = r_T = 1-\prob(\xi_{T,T} = 1)$. Then $\tau' \in  {\cT}_{T+1}$ and,  for sufficiently large $T$, there exists $r_T \in (0,1)$ such that $V(S_{\tau'}) > V(S_{\tau})$ .
\item[{\rm (b)}] Assume that $w_-(\cdot)$ is given by \eqref{eq:distortion} with $0<\delta_-<1$, $u_-'(x)\to 0$ as $x\to \infty$, and $P(S_{\tau} = -T) > 0$. Construct a new strategy $\tau'' = \tau''(r_{-T}) := \tau + {\bf 1}_{\tau=T,S_{\tau} = -T} \xi_{T,-T}$, where $r_{-T}\in[0,1]$, $\xi_{T,-T}$ is a Bernoulli random variable that is independent of $S=(S_t : t \in \bN)$ and $\tau$, and $\prob(\xi_{T,-T} = 0) = r_{-T} = 1-\prob(\xi_{T,-T} = 1)$. Then $\tau'' \in  {\cT}_{T+1}$ and,  for sufficiently large $T$,  $V(S_{\tau''}) < V(S_{\tau})$ for all $r_{-T} \in [0,1)$.
\end{enumerate}
\end{theorem}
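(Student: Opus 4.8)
The plan is to treat both statements as \emph{local} perturbations of $\tau$ at a single extreme node and to reduce each to the analysis of a one-variable function. Writing the CPT value in the summation-by-parts form \eqref{eq:mainvalueinxy}, with $x_n=\prob(S\ge n)$ and $y_n=\prob(S\le -n)$, has the advantage that the value is \emph{additive} across the decumulative gain probabilities and the cumulative loss probabilities. Since $\tau'$ (resp.\ $\tau''$) coincides with $\tau$ off the event $\{\tau=T,\,S_\tau=T\}$ (resp.\ $\{\tau=T,\,S_\tau=-T\}$), only the two top (resp.\ bottom) probabilities change, so $V(S_{\tau'})-V(S_\tau)$ and $V(S_{\tau''})-V(S_\tau)$ each depend on a single parameter $q_{T}=\tfrac{1-r_T}{2}$ (resp.\ $q_{-T}=\tfrac{1-r_{-T}}{2}$). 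That $\tau',\tau''\in\cT_{T+1}$ is immediate: each is $\tau$ followed, on the relevant event, by at most one extra step governed by an independent Bernoulli variable carried by the filtration, so both are stopping times bounded by $T+1$. The quantitative engine for ``sufficiently large $T$'' is the elementary bound $p_{\pm T}:=\prob(S_\tau=\pm T)\le \prob(S_T=\pm T)=2^{-T}$, which forces $p_{\pm T}\to 0$; hence for $T$ large every probability argument appearing below lies in the initial \emph{concave} stretch of the relevant inverse S-shaped weighting function.

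For part (a) I would isolate the gain-side change as the function $g(q)=w_+(q p_T)\,[u_+(T{+}1)-u_+(T)]+w_+((1{-}q)p_T)\,[u_+(T)-u_+(T{-}1)]$, noting $V(S_{\tau'})-V(S_\tau)=g(q_T)-g(0)$ because $w_+(0)=0$. Since both arguments lie in the concave region, $g$ is concave on $[0,\tfrac12]$, so a maximiser is characterised by the first-order condition \eqref{qstar}. Three ingredients then pin the maximiser strictly inside $(0,\tfrac12)$ and force strict improvement: strict concavity of $u_+$ makes the right-hand side of \eqref{qstar} exceed $1$ and also gives $g'(\tfrac12)<0$, ruling out $q^*=\tfrac12$; and $w_+'(0^+)=+\infty$ (a property of the inverse S-shaped weighting used throughout) gives $g'(0^+)=+\infty>0$, ruling out $q^*=0$. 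Concavity then yields a unique interior $q^*_T\in(0,\tfrac12)$ with $g(q^*_T)>g(0)$, and $r_T:=1-2q^*_T\in(0,1)$ delivers $V(S_{\tau'})>V(S_\tau)$.

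Part (b) is structurally parallel but with a decisive sign reversal. The loss-side change is governed by the function in \eqref{-qstar}, call it $h(q)$, and because the loss term enters \eqref{eq:mainvalueinxy} with the factor $-\lambda<0$, maximising $V(S_{\tau''})$ amounts to \emph{minimising} $h$ over $[0,\tfrac12]$. For large $T$ the arguments again fall in the concave stretch of $w_-$, so $h$ is concave; consequently its minimum over the interval is attained at an endpoint, $q=0$ (``stop'') or $q=\tfrac12$ (``continue''), never at an interior randomisation. I would then compare the two endpoints, $h(\tfrac12)=w_-(p_{-T}/2)\,[u_-(T{+}1)-u_-(T{-}1)]$ and $h(0)=w_-(p_{-T})\,[u_-(T)-u_-(T{-}1)]$, and factor the difference so that its sign is controlled by $\tfrac{u_-(T+1)-u_-(T-1)}{u_-(T)-u_-(T-1)}-\tfrac{w_-(p_{-T})}{w_-(p_{-T}/2)}$. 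Using the parametric form \eqref{eq:distortion} one has $w_-(p_{-T})/w_-(p_{-T}/2)\to 2^{\delta_-}$, while diminishing marginal disutility ($u_-'(x)\to0$) gives the marginal-utility ratio $\to 2$; since $\delta_-<1$ implies $2^{\delta_-}<2$, the bracket is eventually positive and $h(\tfrac12)>h(0)$. Finally, concavity of $h$ together with $h(\tfrac12)>h(0)$ forces $h(q)\ge(1-2q)h(0)+2q\,h(\tfrac12)>h(0)$ for every $q\in(0,\tfrac12]$, which is exactly $V(S_{\tau''})<V(S_\tau)$ for all $r_{-T}\in[0,1)$.

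The main obstacle is the endpoint comparison in part (b): the entire asymmetry between the two statements rests on the competition between $2^{\delta_-}$ and $2$, so I would take care to make both asymptotics rigorous and uniform enough to conclude for all large $T$ simultaneously, and to confirm that it is the \emph{minimisation} (rather than maximisation) of a concave function that flips the optimal action from genuine randomisation in the gain region to a clean ``stop'' in the loss region. A secondary, more routine, point to verify in part (a) is that $p_T$ is small enough that \emph{both} $q^*_T p_T$ and $(1-q^*_T)p_T$ sit below the inflection point of $w_+$, and that $w_+'(0^+)=+\infty$ may indeed be invoked under the standing inverse S-shaped hypothesis.
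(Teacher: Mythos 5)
Your proposal is correct and follows essentially the same route as the paper's own proof: the same reduction to the one-variable functions in \eqref{qstar}--\eqref{-qstar}, the same first-order-condition argument with $w_+'(0^+)=+\infty$ and strict concavity of $u_+$ forcing an interior maximiser in part (a), and the same endpoint comparison driven by $2^{\delta_-}<2$ in part (b). Your additions (the bound $p_{\pm T}\le 2^{-T}$ to justify ``sufficiently large $T$'' uniformly, and the chord inequality $h(q)\ge(1-2q)h(0)+2q\,h(\tfrac12)>h(0)$ making the strict inequality for all $r_{-T}\in[0,1)$ explicit) are welcome tightenings of steps the paper leaves implicit, not a different method.
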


For general utility and weighting functions, the above results are valid for sufficiently large $T$; 
but for the utility function \eqref{eq:utility} and probability weighting function \eqref{eq:distortion}
with \cite{TverskyKahneman1992:CPT}'s estimates, $T$ does not need to be excessively large. For example, it follows from the proof of (a) that all we need is to ensure
$p_T=P(S_{\tau} = T)$ falls into the concave domain of $w_+(\cdot)$. For $\delta_+=0.61$, this requires $p_T<0.3$ (refer to the right panel of Figure \ref{fig:distortion_utility}) which is satisfied when $T=2$. Similarly, by the proof of (b), for $\alpha_- = 0.88$ and $\delta_- = 0.69$, a straightforward calculation yields that when $T=2$, $p_{-T}$ falls into the concave domain of $w_-(\cdot)$ and $r_{-T} = 1$ dominates the other choices.

In the preceding discussions we assume that an original  (i.e., before the horizon is extended) strategy has resulted in the maximum possible gain or loss. 
We now investigate the situations when the strategy ends up with an intermediate state with a mild accumulated gain or loss.
Specifically, let $\tau \in T_{T}$ be a given exiting strategy and $n < T$ be a gain state.
Assume $p_n: = \prob(S_{\tau} = n) > 0$ and $\bar p_{n+1}: = \prob(S_{\tau} \ge n+1) > 0$ and consider the scenario in which the gambler has reached
the node $(T, n)$ under $\tau$, namely $\tau = T$ and $S_\tau = n$. Now, with an additional round of play granted, we
denote by $\tau'$
the strategy modifying the original $\tau$ by, given
$\tau = T$ and $S_\tau = n$, playing one more round with probability $1 - r_n$ at time $T$, where $r_n \in  [0, 1]$. Let $q_n =\frac{1-r_n}{2}
\in [0,\frac{1}{2}]$.
An argument similar to the case of $n=T$ yields that the extra CPT value due to the possible additional round of play, as a function of $q_n$, is
\begin{equation}\label{der}
w_+(q_n p_n + \bar p_{n+1}) [u_+(n+1) - u_+(n)] + w_+ \big((1-q_n)p_n + \bar p_{n+1} \big) [u_+(n) - u_+(n-1)],
\end{equation}
whose first-order derivative is
\begin{equation}\label{inequality}
\begin{array}{ll}
& \quad \left\{ w_+'(q_n p_n + \bar p_{n+1}) [u_+(n+1) - u_+(n)] - w_+' \big((1-q_n)p_n + \bar p_{n+1} \big) [u_+(n) - u_+(n-1)] \right\} p_n \\
& = [u_+(n+1) - u_+(n)]  w_+'((1-q_n)p_n + \bar p_{n+1}) p_n \left[ \frac{w_+'(q_n p_n + \bar p_{n+1})}{w_+'((1-q_n)p_n + \bar p_{n+1})} - \frac{u_+(n) - u_+(n-1)}{u_+(n+1) - u_+(n)} \right].
\end{array}
\end{equation}
The necessary condition for a maximum $q_n^*$ is thus
\begin{equation}\label{qnstar}
\frac{w_+'(q_n^* p_n + \bar p_{n+1})}{w_+' \big((1-q_n^*)p_n + \bar p_{n+1} \big)} =\frac{u_+(n) - u_+(n-1)}{u_+(n+1) - u_+(n)} \;.
\end{equation}
%
%

Assume $n$ is sufficiently large so that $p_n + \bar p_{n+1}\equiv \prob(S_{\tau} \ge n)$ falls into the concave region of $w_+(\cdot)$. Because $\frac{u_+(n) - u_+(n-1)}{u_+(n+1) - u_+(n)} > 1$ due to the strict concavity of $u_+(\cdot)$, $q_n^* = \frac{1}{2}$ will never satisfy (\ref{qnstar}); hence $r_n \ne 0$ or the gambler will not  continue decisively.  
Moreover, if $\frac{w_+'(\bar p_{n+1})}{w_+'(p_n + \bar p_{n+1})} >\frac{u_+(n) - u_+(n-1)}{u_+(n+1) - u_+(n)}$, then there is $q_n^* \in (0, \frac{1}{2})$ such that $(\ref{qnstar})$ holds, in which case $r_n\in(0,1)$ indicating that the gambler will randomize. On the other hand, if $\frac{w_+'(\bar p_{n+1})}{w_+'(p_n + \bar p_{n+1})} \leq \frac{u_+(n) - u_+(n-1)}{u_+(n+1) - u_+(n)}$
then it follows from (\ref{inequality}) that (\ref{der}) is a non-increasing function of $q_n$; so its maximal value achieves at $q_n=0$ (and hence
$r_n = 1$). This is in stark contrast to the case when $n=T$: at some intermediate gain state $n$, the gambler may indeed choose to stop even if the time horizon is extended.\footnote{This is examplified by the black node (9,1) in the left panel of Figure \ref{tree_10_l2}.}

Finally, at an intermediate loss state $-n > -T$, a similar analysis  yields that randomization with $q_n \in (0,\frac{1}{2})$ is again being dominated. It is possible that $q^*_{-n} = \frac{1}{2}$ (resp, $r_{-n}^* = 0$), in which case the gambler will continue for sure  if the time horizon is extended. This is different from the case of maximal loss state $-n=-T$.

\subsection{Na\"ive gamblers}

While a precommitted gambler follows the optimal strategy determined at time 0, a na\"ive gambler constantly deviates from it. We have shown in Subsection \ref{numerical} that, under the parameter specification therein,  the naivet\'e's actual behavior changes from the originally planned loss-exit strategy to an eventual gain-exit one.

Numerically, the naivit\'e's strategy can be obtained by computing each time-$t$
precommitted strategy, carrying it out for just one period, and then pasting them together; see Subsection \ref{numerical} for details. We apply this scheme to the first three groups of parameters studied in Subsection \ref{adl}, and draw the na\"ive strategies  in the right panels of Figure \ref{tree_10_l} -- \ref{tree_10_l2}.

The problem in Figure \ref{tree_10_l} has the same parameter values as that in Figure \ref{Optimalstrategy} but a longer horizon. The changes from the left panels to the right ones in the two figures are qualitatively the same, namely the naivit\'e
turns a loss-exit strategy to a gain-exit one eventually. 
The same happens to Figure \ref{tree_10_l2}.\footnote{In the right panel of Figure \ref{tree_10_l2}, all the nodes with state $x=1$  are black, which ``block" the gambler from accessing the nodes beyond  state 1. This is why the  nodes above state 1 are also all black. }
In Figure \ref{tree_10_g}, the two panels are almost identical -- both are gain-exit -- except the two lowest nodes at $t=8,9$.
This is because the difference in behaviors of the precommitter and the naivit\'e emanates from time-inconsistency, which in turn stems from probability weighting. In this case, the strength of probability weighting is very low with $\delta_\pm=0.95$,
leading to a low level of time-inconsistency than the other two cases and hence the high similarity between the precommitted and na\"ive  strategies.

It is very interesting to note that, in {\it all} the cases, the na\"ive gambler's behavior is {\it consistent}, irrespective of the underlying parameter specifications: once he enters the casino he {\it always} takes gain-exit strategies, reminiscent of the disposition effect in security trading.
In particular, he never stops loss and gambles ``until the bitter end" \citep{EbertStrack2012:UntilTheBitterEnd}.\footnote{We reiterate that the result of
\citet{EbertStrack2012:UntilTheBitterEnd} depends critically on the assumption that  the gambler can  construct arbitrarily small random payoffs, which is possibly valid only in a continuous-time model. The finding that  ``gamble-until-bitter-end" is also present in the discrete-time casino model suggests that the behavior  is probably  more prevalent characterizing broadly a naivit\'e (be it a gambler or an investor).}

We now provide a theory that explains such a phenomenon.
Suppose a na\"ive gambler has accumulated a gain equal to $x > 0$ at time $T-1$, the date just before the terminal one. Then his decision problem regarding whether he should quit at $T-1$ can be formulated as
\begin{align*}
\max_{q \in [0,1/2] } g(q) := \big( u_+(x+1) - u_+(x) \big) w_+(q) - \big( u_+(x) - u_+(x-1) \big) (1- w_+(1-q)),
\end{align*}
where, as before, $q=\frac{1-r}{2}$ and $r\in[0,1]$ is the probability to stop.
Suppose $w_+$ satisfies the so-called subcertainty, i.e., $1-w_+(1-p) \ge w_+(p)$ for $p \in [0,1/2]$, a property that is proposed by \citet{KahnemanDTverskyA:79pt} and shared by many probability weighting functions including \eqref{eq:distortion}. Then
\begin{align*}
g(q) \le \Big( \big( u_+(x+1) - u_+(x) \big) - \big( u_+(x) - u_+(x-1) \big) \Big) w_+(q) \le 0,
\end{align*}
where the second inequality follows from the concavity of $u_+$, while the equality is achieved when $q = 0$, corresponding to the decision of ``stop''. We have established the following result.
\begin{proposition}\label{prop:naiveg}
Assume that $w_+$ satisfies subcertainty. Then it is optimal for a na\"ive gambler to stop in gain at $T-1$.
\end{proposition}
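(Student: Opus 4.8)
The plan is to show that the naïve gambler's one-step objective $g(q)$ on $[0,1/2]$ attains its maximum value $0$ at $q=0$, which is exactly the decision to stop ($r=1$). First I would pin down what $g(q)$ represents: continuing one more round with stopping probability $r=1-2q$ produces the three-point gain distribution placing mass $q$ on $x+1$, mass $1-2q$ on $x$, and mass $q$ on $x-1$. Evaluating the gain part of \eqref{prob:InfiniteDimProgramming:original} on this law, the relevant decumulative weights are $w_+(\prob(S_T\ge x+1))=w_+(q)$ and $w_+(\prob(S_T\ge x))=w_+(1-q)$; subtracting the value $u_+(x)$ of stopping for sure reproduces precisely the displayed $g(q)$. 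Hence the naïve gambler's problem at $T-1$ is $\max_{q\in[0,1/2]} g(q)$, and the proposition reduces to a pointwise bound on $g$ together with a boundary check.

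The heart of the argument is to sandwich $g$ between a manifestly nonpositive expression and $0$ using two inputs in sequence. The first input is subcertainty, $1-w_+(1-q)\ge w_+(q)$ for $q\in[0,1/2]$. Since $u_+(x)-u_+(x-1)>0$ (as $u_+$ is increasing), multiplying the subcertainty inequality by this positive increment and negating gives
\begin{align*}
-\big(u_+(x)-u_+(x-1)\big)\big(1-w_+(1-q)\big)\le -\big(u_+(x)-u_+(x-1)\big) w_+(q),
\end{align*}
which, substituted into $g$, collapses both terms onto the common factor $w_+(q)$:
\begin{align*}
g(q)\le \Big[\big(u_+(x+1)-u_+(x)\big)-\big(u_+(x)-u_+(x-1)\big)\Big] w_+(q).
\end{align*}
The second input is concavity of $u_+$, which makes the successive increments nonincreasing, so the bracketed second difference is $\le 0$; since $w_+(q)\ge 0$, this forces $g(q)\le 0$ throughout $[0,1/2]$.

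It then remains to exhibit a maximizer attaining the bound. Evaluating at the feasible point $q=0$ uses only $w_+(0)=0$ and $w_+(1)=1$, so that $1-w_+(1-0)=0$ and hence $g(0)=0$; the upper bound $0$ is attained, and by the previous paragraph it is the global maximum. As $q=0$ is equivalent to $r=1$, stopping is optimal. I do not anticipate a genuine obstacle: the substantive content, namely the rank-dependent (Choquet) reduction that delivers the closed form of $g(q)$, is already carried out in the text preceding the statement, and the remaining work is simply to combine subcertainty with concavity in the correct order. The only points warranting care are preserving the inequality's direction, which relies on the implicit monotonicity $u_+(x)-u_+(x-1)>0$, and verifying tightness at the admissible boundary point $q=0$.
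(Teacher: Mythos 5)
Your proposal is correct and follows essentially the same route as the paper: bound $g(q)$ above by the second difference of $u_+$ times $w_+(q)$ via subcertainty, conclude $g(q)\le 0$ from concavity of $u_+$, and note that $q=0$ (i.e., stop) attains the value $0$. The only additions are your explicit Choquet computation deriving $g(q)$ from the three-point law and the explicit check $g(0)=0$, both of which the paper leaves implicit.
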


Next, suppose the naivit\'e' has accumulated a loss $-x < 0$ at $T-1$. His decision problem to continue or stop at $T-1$ is
\begin{align*}
\min_{q \in [0,1/2] } l(q) := \big( u_-(x+1) - u_-(x) \big) w_-(q) - \big( u_-(x) - u_-(x-1) \big) (1- w_-(1-q)).
\end{align*}
Suppose probability weighting function $w_-$ is differentiable and $w_-'(1-p)/w_-'(p) \ge 1$ for $p \in [0,1/2]$, with the left hand side in the sense of limit  for $p = 0$. A straightforward calculation verifies  that this condition is satisfied by the Tversky--Kahneman weighting function \eqref{eq:distortion}. Then
\begin{align*}
l'(q) & = \big( u_-(x+1) - u_-(x) \big) w_-'(q) - \big( u_-(x) - u_-(x-1) \big) w_-'(1-q) \\
& = \big( u_-(x) - u_-(x-1) \big) w_-'(q) \left( \frac{ u_-(x+1) - u_-(x) }{ u_-(x) - u_-(x-1) } - \frac{ w_-'(1-q) }{w_-'(q)} \right) \\
& \le \big( u_-(x) - u_-(x-1) \big) w_-'(q) \left( \frac{ u_-(x+1) - u_-(x) }{ u_-(x) - u_-(x-1) } - 1 \right) \leq  0,
\end{align*}
where the last inequality comes from the concavity of $u_-$. As a result, $l(q)$ is non-increasing in $q\in [0,1/2]$ and the minimum is achieved when $q = 1/2$, corresponding to the ``continue'' decision.
\begin{proposition}\label{prop:naivel}
Assume that $w_-$ is differentiable and $w_-'(1-p)/w_-'(p) \ge 1$ for $p \in [0,1/2]$. Then it is optimal for a na\"ive gambler to continue in loss at $T-1$.
\end{proposition}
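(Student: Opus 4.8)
The plan is to treat this as a one–dimensional optimization of the scalar function $l$ over the interval $[0,1/2]$ and to show that $l$ is non-increasing there, so that its minimizer sits at the right endpoint $q=1/2$; by the correspondence $q=\tfrac{1-r}{2}$ recorded just above, $q=1/2$ is exactly $r=0$, i.e. the decision to ``continue''. Because $w_-$ is assumed differentiable and $u_-$ enters $l$ only through fixed increments, $l$ is differentiable, so the cleanest route is a sign analysis of $l'(q)$ rather than a direct comparison of endpoint values.

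First I would differentiate, obtaining
\[
l'(q) = \big(u_-(x+1)-u_-(x)\big)\,w_-'(q) - \big(u_-(x)-u_-(x-1)\big)\,w_-'(1-q),
\]
where the chain rule turns the $-\big(1-w_-(1-q)\big)$ term into $-w_-'(1-q)$. The decisive step is then to factor out the positive quantity $\big(u_-(x)-u_-(x-1)\big)\,w_-'(q)$ so as to expose the two competing ratios,
\[
l'(q) = \big(u_-(x)-u_-(x-1)\big)\,w_-'(q)\left(\frac{u_-(x+1)-u_-(x)}{u_-(x)-u_-(x-1)} - \frac{w_-'(1-q)}{w_-'(q)}\right).
\]
This rewriting is the heart of the argument: it cleanly separates the utility–curvature effect and the weighting effect into a single bracketed difference.

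Next I would bound each piece using one hypothesis apiece. Concavity of $u_-$ gives $u_-(x+1)-u_-(x)\le u_-(x)-u_-(x-1)$, so the first ratio is at most $1$; the standing assumption $w_-'(1-p)/w_-'(p)\ge 1$ on $[0,1/2]$ forces the second ratio to be at least $1$. Hence the bracket is $\le 0$. Since $u_-$ and $w_-$ are both increasing, the prefactor $\big(u_-(x)-u_-(x-1)\big)\,w_-'(q)$ is positive, so $l'(q)\le 0$ throughout $[0,1/2]$. Monotonicity then delivers the minimum at $q=1/2$, establishing the proposition.

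I expect the main step—more a matter of finding the right bookkeeping than a genuine obstacle—to be the factorization that collapses both hypotheses into the sign of one bracket; once that form is in hand the conclusion is immediate. The only genuine care-point is the boundary $q=0$, where $w_-'(q)$ may be infinite for an inverse-S weighting, so the factorization and the weighting hypothesis must both be read in a limiting sense. Verifying that $w_-'(1-p)/w_-'(p)\ge 1$ actually holds for the parametric family \eqref{eq:distortion} is a separate elementary computation that I would relegate to a remark, since the proposition needs the inequality only as a blanket assumption.
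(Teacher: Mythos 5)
Your proposal is correct and follows essentially the same route as the paper: the paper also differentiates $l$, factors out $\big(u_-(x)-u_-(x-1)\big)w_-'(q)$ to expose the bracketed difference of the utility-increment ratio and the weighting-derivative ratio, and concludes $l'(q)\le 0$ on $[0,1/2]$ so the minimum sits at $q=1/2$ (``continue''). Your remark about reading the hypothesis in a limiting sense at $p=0$ matches the paper's own caveat, so there is nothing to add.
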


A corollary of
Proposition \ref{prop:naivel} is that the naivit\'e will definitely continue even if there is only one round of play left as long as he is in loss, let alone when a longer horizon is allowed. As a consequence, he will not stop loss in any case, until the bitter end.

\subsection{Sophisticated gamblers}

A sophisticated gambler is unable to precommit
and realizes that her future selves will deviate from whatever plans she makes now. Her resolution
is to compromise and choose {\it consistent planning} in the sense that she optimizes taking the
future disobedience as a constraint. Consequently,
strategies of sophisticated gamblers  can be obtained  using backward deduction as in
dynamic programming.

To start, we note that at $T-1$, a sophisticated gambler and a na\"ive one face the same problem; hence we have the following immediate result.

\begin{proposition}\label{prop:soph}
Propositions \ref{prop:naiveg} and \ref{prop:naivel} hold true for a sophisticated gambler.
\end{proposition}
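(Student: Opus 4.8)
The plan is to exploit the observation, already made in the text immediately preceding the statement, that at the penultimate date $T-1$ the sophisticated gambler faces exactly the same one-period decision problem as the na\"ive gambler. First I would recall the mechanics of consistent planning: the sophisticated gambler proceeds by backward induction, at each date optimizing her current action while treating the known behavior of all future selves as a binding constraint. The crucial reduction is that at the terminal date $T$ there is no genuine choice at all — the gambler must quit and leave — so the action of the time-$T$ self is forced and coincides across every type of gambler.

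Second, I would argue that this terminal constraint collapses the time-$(T-1)$ subproblem to a single common optimization. At $T-1$ with an accumulated gain $x>0$ (respectively a loss $-x<0$), the only available options are to stop immediately or to play one more bet and then stop at $T$, since stopping at $T$ is mandatory. Writing the incremental CPT value as a function of the stopping probability, parametrized by $q=\frac{1-r}{2}\in[0,1/2]$ exactly as in Subsection \ref{omr}, produces precisely the objective $g(q)$ appearing in Proposition \ref{prop:naiveg} in the gain region and $l(q)$ appearing in Proposition \ref{prop:naivel} in the loss region. Because the sophisticated gambler's consistent-planning constraint at $T-1$ — namely that the $T$-self will stop — is automatically satisfied and imposes nothing beyond what the na\"ive formulation already encodes, the two maximization (respectively minimization) problems are literally identical.

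Third, I would simply invoke Propositions \ref{prop:naiveg} and \ref{prop:naivel}: under subcertainty of $w_+$ the maximizer of $g$ is $q=0$, corresponding to stopping in gain, and under the condition $w_-'(1-p)/w_-'(p)\ge 1$ the minimizer of $l$ is $q=\frac12$, corresponding to continuing in loss. Hence the sophisticated gambler stops in gain and continues in loss at $T-1$, which is the assertion. I do not anticipate a substantive obstacle, as the result is essentially the observation that the forced stop at $T$ erases any distinction between gambler types at $T-1$. The only point requiring care is to state cleanly that consistent planning adds no effective constraint at $T-1$, so that the computations already carried out for the na\"ive gambler transfer verbatim without any modification.
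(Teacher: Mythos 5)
Your proposal is correct and is essentially the paper's own argument: the paper justifies the proposition in one line by noting that at $T-1$ the sophisticated and na\"ive gamblers face the same problem (since the time-$T$ self is forced to stop, consistent planning imposes no effective constraint), and then invokes Propositions \ref{prop:naiveg} and \ref{prop:naivel} exactly as you do. Your write-up merely makes explicit the backward-induction reduction that the paper leaves implicit, with no substantive difference in approach.
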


Next, we derive a sophisticated gambler's stopping strategies for the four cases studied in Subsection \ref{adl}, where $T=10$. It turns out that, of the four cases,  she will enter the casino {\it only} in the case when $(\alpha_\pm, \delta_\pm ,\lambda) = (0.5, 0.95, 1.5)$, corresponding to Figure \ref{tree_10_g}. Moreover, her strategy is identical to the one depicted in the right panel of Figure \ref{tree_10_g}, which is  the actual strategy of the na\"ive gambler and close to the precommitted strategy. This is because when $\delta_\pm$ is close to 1, the level of probability weighting is low, hence so is that of time-inconsistency, leading to similar strategies of all the three types of gamblers.

Note that in the case above, the sophisticated gambler takes the gain-exit type of strategy.
Indeed, so long as she enters the casino,  she essentially stops in gain under some mild conditions. This follows from the following argument: by Proposition \ref{prop:soph}, the sophisticated gambler will stop in gain at $T-1$. Knowing this, she will also stop in gain at $T-2$ by virtue of exactly the same reason. Inductively, this leads to an overall gain-exit type of strategy.

On the other hand, the sophisticated gambler always stops no later than her na\"ive counterpart does. This is because while the latter solves an optimal stopping problem at every node, the former  solves the {\it same}  problem but with constraints from her future selves' decisions. Hence, if the latter finds that stopping immediately is optimal at a current node, so will the former because the strategy of an immediate stop automatically satisfies the aforementioned constraints.

\begin{proposition}
Under any specification of parameters, a sophisticated gambler stops no later than a na\"ive gambler does.
\end{proposition}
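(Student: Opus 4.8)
The plan is to compare, node by node, the stop-versus-continue decisions of the two gamblers and to show that the sophisticated gambler's stopping region contains the na\"ive gambler's, which immediately yields that the sophisticated stopping time is pathwise no larger. First I would fix a node $(t,x)$ and write down the decision problem faced there by each gambler, recalling that both evaluate the CPT value $V(\cdot)$ of the \emph{final} stopped wealth with the original time-$0$ wealth as the common reference point. Stopping immediately at $(t,x)$ yields the deterministic payoff $x$ and hence the value $u(x)$ for both gamblers, since a degenerate terminal distribution is unaffected by probability weighting.

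Next I would make precise the distinction between the two feasible sets. The na\"ive gambler, by definition, treats $(t,x)$ as the origin of a fresh optimal stopping problem and computes $V^{\mathrm{naive}}(t,x):=\max_{\sigma} V_{(t,x)}(S_\sigma)$, the maximum being taken over \emph{all} admissible strategies $\sigma \in \cT_T$ with $\sigma \ge t$, where $V_{(t,x)}$ denotes the CPT functional evaluated over the law of $x+(S_\sigma-S_t)$. He stops at $(t,x)$ precisely when immediate stopping attains this maximum, i.e.\ when $u(x)\ge C^{\mathrm{naive}}(t,x)$, where $C^{\mathrm{naive}}(t,x)$ is the best value achievable among strategies that play at least one further round. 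The sophisticated gambler, by contrast, takes her future selves' decisions as fixed constraints: having already determined (by backward induction from $T-1$, with Proposition~\ref{prop:soph} as base case) the sophisticated continuation strategy $\tau_{\mathrm{soph}}$ on nodes at times exceeding $t$, she compares only $u(x)$ against the single continuation value $C^{\mathrm{soph}}(t,x):=V_{(t,x)}(S_{\tau_{\mathrm{soph}}})$ obtained by playing one more round and thereafter obeying $\tau_{\mathrm{soph}}$.

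The crux is then the observation that $\tau_{\mathrm{soph}}$, viewed from $(t,x)$, is itself one particular admissible continuation strategy, so it belongs to the set over which the na\"ive gambler optimizes; consequently $C^{\mathrm{soph}}(t,x)\le C^{\mathrm{naive}}(t,x)$. Hence, if the na\"ive gambler stops at $(t,x)$, i.e.\ $u(x)\ge C^{\mathrm{naive}}(t,x)$, then a fortiori $u(x)\ge C^{\mathrm{soph}}(t,x)$, so immediate stopping is weakly optimal for the sophisticated gambler as well and she also stops. This establishes the containment of stopping regions, and since stopping at more nodes can only bring forward the first stopping instant along every trajectory, I would conclude that the sophisticated stopping time is no larger than the na\"ive one.

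I expect the main obstacle to be the bookkeeping around randomization and ties rather than the core comparison. When $u(x)=C^{\mathrm{naive}}(t,x)$ the optimal action is an indifference that may be resolved by a randomized stop, so ``stops no later'' must be read at the level of the stopping \emph{probability} at each node: I would argue that the sophisticated stopping probability is always at least the na\"ive one, using that the sophisticated continuation value is dominated by the na\"ive one and adopting a consistent tie-breaking convention (stop when indifferent). Care is also needed to confirm that $\tau_{\mathrm{soph}}$ is genuinely admissible in $\cT_T$ once restricted to $\{\sigma\ge t\}$ --- which it is, being built from $(\cF_s)$-measurable one-step decisions --- so that the inequality $C^{\mathrm{soph}}\le C^{\mathrm{naive}}$ is legitimate.
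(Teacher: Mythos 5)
Your proposal is correct and follows essentially the same argument as the paper: the sophisticated gambler solves the na\"ive gambler's problem subject to the constraint of obeying her future selves, so her continuation value is dominated by the na\"ive one, and since immediate stopping is feasible for both, optimality of stopping for the na\"ive gambler implies it for the sophisticated one. Your additional care about randomized/tied decisions is a reasonable refinement of what the paper states only informally, but it does not change the substance of the argument.
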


An implication  of this result is that the naivit\'e is at least as risk-taking as the sophisticated, if not more.

\subsection{Finite horizon versus infinite horizon}

This section explores connection between the finite horizon and infinite horizon casino models.

Define
\begin{align*}
{\mathcal{T}}_\infty := \left\{ \tau\in[0,\infty): \tau \text{ is an } (\cF_t)_{t\in \bN}\text{-stopping time}\right\},
\end{align*}
which is the set of admissible stopping strategies (allowing randomization) in the infinite time horizon.
Suppose  $\tau
\in \cT_\infty$ is optimal for the infinite horizon model and achieves a finite CPT value.  Then we have
$$ V(S_{\tau\wedge T})\leq \underset{\sigma \in {\mathcal{T}}_T}{\sup} V(S_\sigma) \leq V(S_\tau).$$
We see immediately that the value of the finite horizon model converges to that  of the infinite horizon one as the horizon approaches infinity. The following makes this
formal.

%
\begin{theorem}\label{thm:convergence}
Assume $\tau^*$ achieves the optimal value of the gambling model in the infinite time horizon with $\tau^*<\infty$ a.s., $V(S_{\tau^*}) = v^* < \infty$, and $S_{\tau^*}$ is lower-bounded a.s. Then
\[
\lim_{T\to\infty} \underset{\tau \in {\mathcal{T}}_T}{\sup} V(S_\tau) =\lim_{T\to\infty} V(S_{\tau^* \wedge T})=v^*.
\]
\end{theorem}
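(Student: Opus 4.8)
The plan is to lean on the sandwich inequality displayed just before the statement, together with monotonicity, so as to reduce the whole assertion to a single one-sided limit. Write $\phi(T):=\sup_{\tau\in\cT_T}V(S_\tau)$. Since $\cT_T\subseteq\cT_{T+1}\subseteq\cT_\infty$, the quantity $\phi(T)$ is nondecreasing in $T$ and bounded above by $v^*$, so $\phi(T)\uparrow L$ for some $L\le v^*$. Taking $\tau=\tau^*$ in the sandwich gives $V(S_{\tau^*\wedge T})\le\phi(T)\le v^*$, so $\limsup_T V(S_{\tau^*\wedge T})\le v^*$ comes for free. It thus suffices to establish the reverse bound $\liminf_T V(S_{\tau^*\wedge T})\ge v^*$: this forces $V(S_{\tau^*\wedge T})\to v^*$, and then the squeeze $V(S_{\tau^*\wedge T})\le\phi(T)\le v^*$ also yields $L=v^*$, which is exactly the two equalities claimed.

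For the remaining lower bound I would first record the pointwise input. As $\tau^*<\infty$ a.s., for almost every $\omega$ one has $\tau^*\wedge T=\tau^*$ for all large $T$, hence $S_{\tau^*\wedge T}\to S_{\tau^*}$ a.s.; since the walk is integer valued, bounded convergence gives, for each fixed $n$, $\prob(S_{\tau^*\wedge T}\ge n)\to\prob(S_{\tau^*}\ge n)$ and $\prob(S_{\tau^*\wedge T}\le -n)\to\prob(S_{\tau^*}\le -n)$. With the continuity of $w_+$ and $w_-$, every summand in \eqref{prob:InfiniteDimProgramming:original} converges. I would then split $V=V_+-V_-$ into gain and loss parts, using the Abel-summed representation $V_+(S_\tau)=\sum_{n\ge1}[u_+(n)-u_+(n-1)]\,w_+(\prob(S_\tau\ge n))$ and the analogous form for $V_-$, in which every term is nonnegative because $u_\pm$ are increasing and $w_\pm\ge0$.

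The gain part is then immediate from Fatou's lemma for series: nonnegativity and termwise convergence give $\liminf_T V_+(S_{\tau^*\wedge T})\ge V_+(S_{\tau^*})$, with no extra hypotheses needed. The loss part is where lower-boundedness enters. If $S_{\tau^*}\ge -L$ a.s., then $\prob(S_{\tau^*}\le -n)=0$ for $n>L$, so $V_-(S_{\tau^*})$ is a \emph{finite} sum, whose $L$ terms are matched in the limit by the first $L$ (termwise-convergent) terms of $V_-(S_{\tau^*\wedge T})$. Hence it remains only to show that the \emph{excess} loss contribution vanishes, namely
\[
\lambda\sum_{n>L}[u_-(n)-u_-(n-1)]\,w_-\!\big(\prob(S_{\tau^*\wedge T}\le -n)\big)\ \xrightarrow[T\to\infty]{}\ 0 .
\]
On $\{\tau^*\le T\}$ one has $S_{\tau^*\wedge T}=S_{\tau^*}\ge -L$, so for $n>L$ the event $\{S_{\tau^*\wedge T}\le -n\}$ sits inside $\{\tau^*>T\}$, whose probability tends to $0$; thus each excess term individually tends to $0$. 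Granting that the whole excess sum vanishes, one combines $\liminf_T V_+\ge V_+(S_{\tau^*})$ with $\limsup_T V_-\le V_-(S_{\tau^*})$ to obtain $\liminf_T V(S_{\tau^*\wedge T})\ge v^*$, completing the proof.

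The main obstacle is precisely the interchange of the limit with that \emph{growing} loss sum: as $T$ increases, the stopped walk $S_{\tau^*\wedge T}$ can dip below $-L$ on $\{\tau^*>T\}$, and the number of nonzero excess terms grows with $T$, so termwise decay alone is not enough. To force the display to $0$ I would dominate the escaping loss mass uniformly in $T$ and apply dominated convergence — for instance bounding $\prob(S_{\tau^*\wedge T}\le -n)\le\prob(\min_{t\le\tau^*}S_t\le -n)$ by a $T$-free running-minimum tail, or, equivalently, using that the lower-bounded martingale $(S_{\tau^*\wedge T})_T$ converges to $S_{\tau^*}$ in $L^1$ so that the total escaping mass $\sum_{n>L}\prob(S_{\tau^*\wedge T}\le -n)\to0$. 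Controlling this \emph{weighted} tail uniformly in $T$, which is where the finiteness $v^*<\infty$ and the a.s.\ lower bound on $S_{\tau^*}$ must be combined, is the delicate step; everything else is bookkeeping around the sandwich and Fatou.
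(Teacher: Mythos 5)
Your overall skeleton --- the sandwich inequality, the reduction to $\liminf_{T} V(S_{\tau^*\wedge T})\ge v^*$, termwise convergence of the tail probabilities, Fatou for the gain series, and the observation that lower-boundedness makes the loss part of $v^*$ a finite sum --- is sound and is essentially the paper's own strategy (the paper truncates the gain sum at a large $M$ and the loss sum at the lower bound $N$, then invokes convergence in distribution of $S_{\tau^*\wedge T}$). The genuine gap is the step you yourself flag and then only ``grant'': that the excess loss $\lambda\sum_{n>L}[u_-(n)-u_-(n-1)]\,w_-\!\left(\prob(S_{\tau^*\wedge T}\le -n)\right)$ vanishes. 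Neither of your proposed repairs works as stated. First, dominated convergence with the $T$-free bound $\prob(S_{\tau^*\wedge T}\le -n)\le \prob(\min_{t\le\tau^*}S_t\le -n)$ needs the dominating series to be finite, and nothing in the hypotheses delivers this: they constrain $S_{\tau^*}$, not the running minimum. A walk can make deep excursions before settling above $-L$; for instance, for the first hitting time of $+1$ (which satisfies a.s.\ finiteness and lower-boundedness) one has $\prob(\min_{t\le\tau^*}S_t\le -n)=1/(n+1)$, so with $u_-(n)=n^{\alpha_-}$ and $w_-(p)\asymp p^{\delta_-}$ near $0$ the dominating terms are of order $n^{\alpha_--1-\delta_-}$, and the series diverges whenever $\alpha_-\ge\delta_-$ --- including the Tversky--Kahneman values $(\alpha_-,\delta_-)=(0.88,0.69)$ used throughout the paper. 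Second, the claim that ``the lower-bounded martingale $(S_{\tau^*\wedge T})_T$ converges in $L^1$'' is false twice over: the stopped process is \emph{not} lower-bounded (on $\{\tau^*>T\}$ it can sit as low as $-T$; only its a.s.\ limit is bounded below), and even a genuinely lower-bounded martingale need not converge in $L^1$ --- negate the walk stopped at the first hitting time of $+1$: it is $\ge -1$ and converges a.s.\ to $-1$, yet has mean $0$ for every $T$. Indeed, $L^1$ convergence would force $\expect[S_{\tau^*}]=0$, which is not among the stated hypotheses.

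In fairness, you have located exactly the point that the paper's own proof passes over: its key display asserts that $V(S_{\tau^*\wedge T})$ is at least the gain sum truncated at $M$ minus $\lambda$ times the loss sum truncated at $N$, but since all summands are nonnegative, truncating the loss sum \emph{raises} the quantity $-\lambda\times(\text{loss sum})$; that inequality therefore silently discards precisely the nonnegative excess-loss terms you worry about, and convergence in distribution only controls each term, not the growing sum. So your diagnosis is correct and the difficulty is real; what is missing --- in your write-up and, arguably, in the paper's --- is an argument, presumably exploiting the optimality of $\tau^*$ within the infinite-horizon model (for example, that admissible infinite-horizon strategies are uniformly integrable embeddings, which would limit how much mass can sit far below $-L$ at intermediate times), showing that the escaping mass carries asymptotically negligible weighted loss. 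As it stands, your proposal is an honest reduction to that unresolved step, not a complete proof.
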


We stress that this result only reveals the relationship between the two models in terms of the optimal values. It does not offer a solution to the finite horizon problem (which is harder) from a solution to the infinite one (which is comparatively  easier), nor does it tell the error in the optimal values when $T$ is given and fixed. That said, the result suggests that the optimal value of the infinite horizon model is an upper bound of that of the finite horizon one, and it is a {\it tight} upper bound if $T$ is sufficiently large. Moreover, while the truncation method mentioned earlier does not provide an {\it exact} optimal solution to the finite horizon model, it does  nevertheless offer a {\it good} solution when $T$ is large enough.

\section{Conclusion}\label{se:concludingremark}

In this paper we develop a systematic approach to studying the stopping behaviors of CPT gamblers in a finite time horizon. We hope that this work opens an avenue of
thoroughly understanding \citet{Barberis2012:Casino}'s model and beyond.
Indeed, as \citet{Barberis2012:Casino} points out, casino gambling is not an isolated model requiring a unique treatment; rather it is just one of the many examples, including ones in financial markets, that share a common feature of the probability weighting.

%

\bibliography{LongTitles,BibFile}

\bigskip

\appendix
\noindent{\Large \bf Appendix}

\section{Proof of Theorem \ref{coro:Rootst}}

We prove this theorem through a series of results. We start by recalling some properties of the potential and its link to the first exit times.

\begin{proposition}\label{prop:st}
Let $\tau$ be an $(\cF_t)_{t\in \bN}$-stopping time such that $\{S_{\tau\wedge t}:t\in \bN\}$ is uniformly integrable. Then
\begin{enumerate}
\item[{\rm (i)}] For any $t \in \mathbb{N}$, $U_{S_{\tau \wedge t}}$ is a convex function, $U_{S_\tau}(x)\geq U_{S_{\tau \wedge t}}(x) \geq |x|$ $\forall x\in\bR$, with $U_{S_{\tau \wedge t}}(x) = |x|$ $\forall x \notin (-t,t)$.
\item[{\rm (ii)}] For any two integers $a<b$ and $\rho:= \inf\{u\geq \tau : S_u \notin (a,b)\}$, $U_{S_\rho}(x)=U_{S_\tau}(x)$ $\forall x\notin (a,b)$, and $U_{S_\rho}$ is linear on $[a,b]$.
\item[{\rm (iii)}] Fix $t\geq 1$ and  let $\mathcal{K} := \{k \in \mathbb{Z} | k = t - 1 + 2j, \; j \in \mathbb{Z} \}$. Then
\begin{align}\label{eq:stevolution}
U_{S_{\tau \wedge t}}(x) = U_{S_{\tau \wedge (t-1)}}(x) + \mathbb{P}(S_{t-1} = x, \tau \geq t) {\bf 1}_{x \in \mathcal{K}} \;\; \forall x \in \mathbb{Z}.
\end{align}
In particular, if $t$ is odd, then $U_{S_{\tau \wedge t}}(x) = U_{S_{\tau \wedge (t-1)}}(x)$ for any odd $x$; and  if $t$ is even, then $U_{S_{\tau \wedge t}}(x) = U_{S_{\tau \wedge (t-1)}}(x)$ for any even $x$.
\end{enumerate}
\end{proposition}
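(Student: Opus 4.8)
The unifying idea is that every assertion about the potential $U_{S_\tau}(x)=\bE[\,|x-S_\tau|\,]$ reduces to an elementary martingale computation, exploiting that $y\mapsto|x-y|$ is convex and is affine on each of the two half-lines $\{y\le x\}$, $\{y\ge x\}$. I would prove the three parts in the order (i), (iii), (ii), since (iii) isolates the single step computation that drives everything.

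For part (i), convexity of $U_{S_{\tau\wedge t}}$ is immediate because it is an expectation (hence an average) of the convex maps $x\mapsto|x-S_{\tau\wedge t}(\omega)|$. The lower bound $U_{S_{\tau\wedge t}}(x)\ge|x|$ follows from Jensen's inequality together with the optional sampling identity $\bE[S_{\tau\wedge t}]=0$ for the martingale $S$: $\bE[\,|x-S_{\tau\wedge t}|\,]\ge|x-\bE[S_{\tau\wedge t}]|=|x|$. For the upper bound $U_{S_{\tau\wedge t}}(x)\le U_{S_\tau}(x)$, I would note that $u\mapsto|x-S_u|$ is a submartingale (a convex function of a martingale), so $t\mapsto\bE[\,|x-S_{\tau\wedge t}|\,]$ is nondecreasing; uniform integrability of $\{S_{\tau\wedge t}\}$ gives $S_{\tau\wedge t}\to S_\tau$ in $L^1$ and hence $U_{S_{\tau\wedge t}}(x)\uparrow U_{S_\tau}(x)$, which yields the bound. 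Finally, since $|S_{\tau\wedge t}|\le t$ always, for $x\ge t$ the quantity $x-S_{\tau\wedge t}$ is nonnegative, so $|x-S_{\tau\wedge t}|=x-S_{\tau\wedge t}$ and $U_{S_{\tau\wedge t}}(x)=x-\bE[S_{\tau\wedge t}]=x=|x|$; the case $x\le-t$ is symmetric.

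For part (iii), I would condition on $\cF_{t-1}$. Because $S_{\tau\wedge t}$ and $S_{\tau\wedge(t-1)}$ differ only on the event $\{\tau\ge t\}\in\cF_{t-1}$, one gets $U_{S_{\tau\wedge t}}(x)-U_{S_{\tau\wedge(t-1)}}(x)=\bE\big[(|x-S_t|-|x-S_{t-1}|)\ind_{\{\tau\ge t\}}\big]$. Writing $z:=x-S_{t-1}\in\bZ$ and using $\prob(S_t=S_{t-1}\pm1\mid\cF_{t-1})=\tfrac12$, the crux is the elementary identity $\tfrac12|z-1|+\tfrac12|z+1|-|z|=\ind_{\{z=0\}}$ for integer $z$, which collapses the conditional increment to $\ind_{\{S_{t-1}=x\}}$. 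Taking expectations gives exactly $\prob(S_{t-1}=x,\ \tau\ge t)$, and the factor $\ind_{x\in\mathcal{K}}$ is automatic since $S_{t-1}\equiv t-1\pmod 2$. The two ``in particular'' statements then read off from the parity of $\mathcal{K}$.

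For part (ii), the target is the post-$\tau$ exit time $\rho$ from $(a,b)$. Fix $x\ge b$ and condition on $\cF_\tau$: on $\{S_\tau\notin(a,b)\}$ one has $\rho=\tau$ and nothing changes, while on $\{S_\tau\in(a,b)\}$ the path stays in $[a,b]\subseteq(-\infty,x]$ on $[\tau,\rho]$, so $|x-S_u|=x-S_u$ is affine in the bounded martingale $S$, and optional sampling gives $\bE[\,|x-S_\rho|\mid\cF_\tau]=x-S_\tau=|x-S_\tau|$; the case $x\le a$ is symmetric, proving $U_{S_\rho}=U_{S_\tau}$ off $(a,b)$. Linearity on $[a,b]$ holds because $S_\rho$ places no mass in the open interval $(a,b)$ (it exits onto $\{a,b\}$, or was already outside), so the second-difference identity \eqref{eq:potential_rep} forces $U_{S_\rho}$ to be affine there. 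The step requiring the most care — and the main obstacle — is the measure-theoretic bookkeeping: one must verify that $\{S_{\rho\wedge t}\}$ is itself uniformly integrable so that $U_{S_\rho}$ is well defined. I would obtain this from the pointwise domination $|S_{\rho\wedge t}|\le|S_{\tau\wedge t}|+|a|+|b|$ (the confinement of the walk to $[a,b]$ after $\tau$) combined with the hypothesized uniform integrability of $\{S_{\tau\wedge t}\}$; once this is in place the remaining computations are routine.
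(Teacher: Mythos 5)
Your proof is correct, and it is in fact more self-contained than the paper's: the paper treats (i) and (ii) as standard facts, citing Section 2 of the survey of Ob{\l}{\'o}j on the Skorokhod embedding, and only writes out a proof of (iii), whereas you supply complete arguments for all three parts. Your arguments for (i) (Jensen plus optional sampling for the bound $\geq |x|$, the stopped-submartingale property of $|x-S_u|$ plus $L^1$-convergence for $U_{S_{\tau\wedge t}}\leq U_{S_\tau}$) and for (ii) (optional sampling on the confined segment $[\tau,\rho]$, the second-difference identity \eqref{eq:potential_rep} for linearity, and the UI domination $|S_{\rho\wedge t}|\le|S_{\tau\wedge t}|+|a|+|b|$) are sound and are precisely the bookkeeping the paper's citation hides. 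For (iii), your route genuinely differs from the paper's. The paper first proves $U_{S_{\tau\wedge t}}(x)=U_{S_{\tau\wedge(t-1)}}(x)$ for $x\notin\mathcal{K}$, using that the events $\{S_{\tau\wedge t}\ge x\}$ and $\{S_{\tau\wedge(t-1)}\ge x\}$ coincide for such $x$ together with the martingale property; then, for $x\in\mathcal{K}$, it extracts the increment from the point-mass identity \eqref{eq:potential_rep} applied at $x\pm1\notin\mathcal{K}$, closing with the parity observation $\prob(S_t=x,\,\tau\ge t)=0$. You instead compute the increment in one stroke: since $\{\tau\ge t\}\in\cF_{t-1}$,
\begin{align*}
U_{S_{\tau\wedge t}}(x)-U_{S_{\tau\wedge(t-1)}}(x)
=\bE\bigl[\ind_{\{\tau\ge t\}}\,\bE\bigl[\,|x-S_t|-|x-S_{t-1}|\;\big|\;\cF_{t-1}\bigr]\bigr],
\end{align*}
and the inner conditional expectation collapses to $\ind_{\{S_{t-1}=x\}}$ via the elementary identity $\tfrac12|z-1|+\tfrac12|z+1|-|z|=\ind_{\{z=0\}}$ for integer $z$, with the factor $\ind_{x\in\mathcal{K}}$ coming for free from the parity of $S_{t-1}$. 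Your computation is more elementary and handles all $x\in\bZ$ uniformly, whereas the paper's two-case argument makes explicit the invariance of the potential at off-parity points and reuses the representation \eqref{eq:potential_rep} it has already established; both yield \eqref{eq:stevolution} and its parity corollaries correctly.
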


\begin{proof}
The first two properties are standard; see \citet[Section 2]{Obloj2004:SkorokhodEmbedding}. So we only establish (iii). Note that $S_{t-1}$ is supported on $\mathcal{K}$. We have $|S_{\tau\wedge t}-S_{\tau\wedge (t-1)}|\leq 1$ so that $\{S_{\tau\wedge t}\geq x\}=\{S_{\tau\wedge (t-1)}\geq x\}$  $\forall x\notin \mathcal{K}$. In particular, since $S$ is a martingale, we have $U_{S_{\tau \wedge t}}(x) = U_{S_{\tau \wedge (t-1)}}(x)$  $\forall x\notin \mathcal{K}$. Now take $x\in \mathcal{K}$. Since $x+1,x-1\notin \mathcal{K}$, using \eqref{eq:potential_rep}, we have
\begin{equation}
 \begin{split}
 \prob(S_{\tau\wedge t}=x) &= \frac{U_{S_{\tau\wedge t}}(x+1)+ U_{S_{\tau\wedge t}}(x-1)}{2}- U_{S_{\tau\wedge t}}(x)\\
 & =\frac{U_{S_{\tau\wedge (t-1)}}(x+1)+ U_{S_{\tau\wedge (t-1)}}(x-1)}{2}- U_{S_{\tau\wedge t}}(x)\\
 &=  \prob(S_{\tau\wedge (t-1)}=x) + U_{S_{\tau\wedge (t-1)}}(x) - U_{S_{\tau\wedge t}}(x).
\end{split}
\end{equation}
Rearranging and observing that $\prob(S_t=x,\tau\geq t)=0$ the thesis follows.
\end{proof}

The following proposition provides some useful properties of $U_t$.

\begin{proposition}\label{prop:Rootst}
Let $\mu\in \cP_0(\bZ)$ and $U_t=U_t^\mu$. Then
\begin{enumerate}
\item[{\rm (i)}] $U_0(x)\leq U_t(x) \leq U_{\mu}(x)\wedge U_{S_t}(x)$ $\forall  x \in \mathbb{Z}$, $t \in \bN$.
\item[{\rm (ii)}] $U_t(x) = U_{t+1}(x)$ when $t$ is odd and $x$ is even, or when $t$ is even and $x$ is odd.
\item[{\rm (iii)}] $U_t(x)$ is convex in $x\in \bR$ and non-decreasing in $t \in \bN$.
\end{enumerate}
\end{proposition}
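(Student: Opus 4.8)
The plan is to prove the three claims by induction on $t$, treating the convexity/monotonicity statement (iii) first, deducing the bounds (i) from it, and settling the parity relation (ii) last. Everything rests on two elementary preliminaries. Since $\mu$ is centered, Jensen's inequality gives $U_\mu(x)=\int|x-y|\mu(dy)\ge\big|\int(x-y)\mu(dy)\big|=|x|$ for all $x$. And conditioning on the last increment of the walk yields $U_{S_t}(x)=\tfrac{1}{2}\mathbb{E}|x-1-S_{t-1}|+\tfrac{1}{2}\mathbb{E}|x+1-S_{t-1}|=\frac{U_{S_{t-1}}(x-1)+U_{S_{t-1}}(x+1)}{2}$ with $U_{S_0}(x)=|x|$; that is, $U_{S_t}$ obeys exactly the recursion \eqref{eq:differenceequation} but without the cap $U_\mu$. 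I also record that $U_t\le U_\mu$ holds trivially from the definition, since $U_t$ is a minimum with $U_\mu$.

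For (iii) I would run a single induction establishing simultaneously that $U_t$ is convex on $\bR$ and that $U_t\le U_{t+1}$. The base case $U_0=|\cdot|$ is convex and satisfies $U_0\le U_\mu$. For the step, convexity of $U_t$ gives $\frac{U_t(x-1)+U_t(x+1)}{2}\ge U_t(x)$; combined with $U_\mu\ge U_t$ this yields $U_{t+1}(x)=\frac{U_t(x-1)+U_t(x+1)}{2}\wedge U_\mu(x)\ge U_t(x)$, which is the monotonicity. Convexity of $U_{t+1}$ then follows from the three-point (lattice) inequality
\begin{equation*}
U_{t+1}(x-1)+U_{t+1}(x+1)\ \ge\ U_t(x-1)+U_t(x+1)\ =\ 2\cdot\frac{U_t(x-1)+U_t(x+1)}{2}\ \ge\ 2\,U_{t+1}(x),
\end{equation*}
where the first inequality is the monotonicity just obtained and the last uses $U_{t+1}(x)\le\frac{U_t(x-1)+U_t(x+1)}{2}$; since $U_{t+1}$ is the linear interpolant of its integer values, these lattice inequalities give convexity on all of $\bR$. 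I expect \emph{this} to be the only real obstacle: $U_{t+1}$ is a pointwise minimum of two convex functions, which is in general \emph{not} convex, so a naive case analysis on where the cap $U_\mu$ is active threatens to break down. The device that rescues the argument is precisely the monotonicity $U_{t+1}\ge U_t$ proved in the same breath, which lets me lower-bound the neighbouring values $U_{t+1}(x\pm1)$ by $U_t(x\pm1)$ and collapse the three-point inequality with no casework at all.

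Granting (iii), claim (i) is immediate: $U_t\le U_\mu$ is definitional; $U_0\le U_t$ is just the monotonicity $U_0\le U_1\le\cdots\le U_t$; and $U_t\le U_{S_t}$ follows by a parallel induction, with equality at $t=0$ and, for the step, $U_{t+1}=\frac{U_t(\cdot-1)+U_t(\cdot+1)}{2}\wedge U_\mu\le\frac{U_t(\cdot-1)+U_t(\cdot+1)}{2}\le\frac{U_{S_t}(\cdot-1)+U_{S_t}(\cdot+1)}{2}=U_{S_{t+1}}$, using the inductive bound $U_t\le U_{S_t}$, monotonicity of the averaging operator, and the cap-free recursion for $U_{S_t}$ noted above.

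Finally, for the parity statement (ii) I would prove the equivalent claim $(C_t)$: $U_t(x)=U_{t-1}(x)$ whenever $x+t$ is even, for $t\ge1$. The base case $(C_1)$ is direct, since for odd $x$ one has $\frac{|x-1|+|x+1|}{2}=|x|\le U_\mu(x)$, whence $U_1(x)=|x|=U_0(x)$. For the inductive step, fix $x$ with $x+t$ even; then $x\pm1$ satisfy $(x\pm1)+(t-1)$ even, so applying $(C_{t-1})$ at $x\pm1$ gives $U_{t-1}(x\pm1)=U_{t-2}(x\pm1)$. Hence the two one-step averages coincide, and
\begin{equation*}
U_t(x)=\frac{U_{t-1}(x-1)+U_{t-1}(x+1)}{2}\wedge U_\mu(x)=\frac{U_{t-2}(x-1)+U_{t-2}(x+1)}{2}\wedge U_\mu(x)=U_{t-1}(x).
\end{equation*}
The proposition's statement is then $(C_{t+1})$, read for the two parity cases ($t$ odd with $x$ even, and $t$ even with $x$ odd). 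The key point making this work despite the cap is that $U_\mu(x)$ appears identically in both expressions, so it never needs to be analyzed; the remainder is routine bookkeeping on parities.
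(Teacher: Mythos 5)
Your proof is correct, and for parts (i) and (ii) it follows essentially the same route as the paper: part (i) by induction using the cap-free recursion $U_{S_{t+1}}(x)=\frac{U_{S_t}(x-1)+U_{S_t}(x+1)}{2}$ (which you derive more directly, by conditioning on the last increment, than the paper's route through the mass recursion and \eqref{eq:potential_rep}), and part (ii) by the same parity induction. Where you genuinely diverge is in the convexity step of (iii), which is the real crux since a pointwise minimum of convex functions need not be convex. The paper handles it by a ``single-point lift'' lemma: raising the value of a convex piecewise-linear function at one integer $m$ to any level in $\left[U_t(m),\tfrac{1}{2}\bigl(U_t(m-1)+U_t(m+1)\bigr)\right]$ preserves convexity, and $U_{t+1}$ is obtained by performing this lift at every $m$; monotonicity in $t$ is then read off afterwards. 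You instead reverse the logical order inside the induction step: first deduce $U_{t+1}\geq U_t$ from convexity of $U_t$ together with $U_t\leq U_\mu$, and then get the second-difference inequality $U_{t+1}(x-1)+U_{t+1}(x+1)\geq U_t(x-1)+U_t(x+1)\geq 2U_{t+1}(x)$ with no case analysis on where the cap is active. Your version buys something concrete: the paper's ``repeat the procedure for all $m\in\bZ$'' is stated informally (simultaneous lifts at all points require a word of justification), whereas your three-point argument is self-contained and fully rigorous at no extra cost; the paper's version, in exchange, isolates a reusable geometric fact about convex piecewise-linear functions.
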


\begin{proof}
(i) By the construction of $U_t$ we have $U_0(x)\leq U_t(x) \leq U_{\mu}(x)$.
On other hand, by \eqref{eq:potential_rep} and the structure of SSRW that $\prob(S_t = x) = \big( \prob(S_{t-1} = x-1) + \prob(S_{t-1} = x+1) \big)/2$, one can show easily that $U_{S_t}(x) = \big( U_{S_{t-1}}(x-1) + U_{S_{t+1}}(x+1) \big)/2$, $x \in \mathbb{Z}$.
Then by induction, we have $U_t(x) \leq U_{S_t}(x)$.

(ii) Again,  by construction we have $U_0(x) = U_1(x)$ for all odd $x$ and $U_1(x) = U_2(x)$ for all even $x$. The conclusions follow immediately from induction.

(iii) Clearly $U_0$ is convex. Suppose $U_t$ is convex and fix $m\in \bZ$. If we put $\tilde U(x)=U_t(x)$ for $x\in \bZ\setminus \{m\}$, pick any
$$\tilde U(m)\in \left[U_t(m), \frac{1}{2}(U_t(m-1)+U_t(m+1))\right],$$
and finally define $\tilde U$ by a linear interpolation for $x\in \bR$, then $\tilde U$ is convex.
Observe that $U_{t+1}$ is obtained exactly by repeating this procedure for all $m\in \bZ$ and, hence, is also convex. Moreover, it now follows, by its definition, that $U_t(x)$ is non-decreasing in $t$.
\end{proof}

\begin{proposition}\label{thm:rootstoppingtime}
Let $T\geq 1$, $\mu \in \cP_0(\bZ)$ such that $\mu([-T,T]) = 1$ and $U_t=U_t^\mu$ be  defined in \eqref{eq:differenceequation}. Then, the following are equivalent:
\begin{enumerate}
\item[{\rm (i)}] $U_T(x) = U_{\mu}(x)$ $\forall x \in \mathbb{Z} $.
\item[{\rm (ii)}] There exists a randomized Root stopping time $\tau_R({\bf b}, {\bf r})$ such that $\tau_R({\bf b}, {\bf r}) \leq T$ and $U_{S_{\tau_R({\bf b}, {\bf r})\wedge t}}=U_t$ $\forall t\leq T$; in particular $S_{\tau_R({\bf b}, {\bf r})} \sim \mu$.
\item[{\rm (iii)}] There exists $\tau\in {\cal T}_T$ such that $S_\tau \sim \mu$.
\end{enumerate}
Furthermore, for any $\tau\in {\cal T}_T$ such that $S_\tau \sim \mu$ we have $U_{S_{\tau \wedge t}}(x) \leq U_{t}(x)$ $\forall x\in \bR$, $t\leq T$.
\end{proposition}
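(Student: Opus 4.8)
The plan is to prove the concluding inequality $U_{S_{\tau\wedge t}}\le U_t$ \emph{first}, as a standalone fact valid for every $\tau\in\cT_T$ embedding $\mu$, and then to use it to close the equivalences in the economical order (ii)$\Rightarrow$(iii)$\Rightarrow$(i)$\Rightarrow$(ii). The inequality is the discrete analogue of Rost's extremality property of Root's embedding, and it is precisely what reduces the ``only if'' direction to a statement about $U_T$ alone.

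For the concluding inequality I would fix $\tau\in\cT_T$ with $S_\tau\sim\mu$ and write $M_t:=S_{\tau\wedge t}$; since $\tau\le T$ is bounded, $\{M_t\}$ is a bounded, hence uniformly integrable, martingale, so Proposition \ref{prop:st} applies. I argue by induction on $t$ that $U_{M_t}(x)\le U_t(x)$ for all $x$, the base case $t=0$ being $U_{M_0}=|\cdot|=U_0$. The one-step lemma I need is
\[
U_{M_t}(x)\le \tfrac12\big(U_{M_{t-1}}(x-1)+U_{M_{t-1}}(x+1)\big),\qquad x\in\bZ .
\]
I would prove it by conditioning on $\cF_{t-1}$: on $\{\tau\le t-1\}$ we have $M_t=M_{t-1}$, while on $\{\tau\ge t\}$ the increment $M_t-M_{t-1}=\pm1$ is equiprobable and independent of $\cF_{t-1}$. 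Writing both sides as expectations and subtracting leaves
\[
\tfrac12\big(U_{M_{t-1}}(x-1)+U_{M_{t-1}}(x+1)\big)-U_{M_t}(x)=\expect\!\left[\Big(\tfrac{|M_{t-1}-x+1|+|M_{t-1}-x-1|}{2}-|M_{t-1}-x|\Big)\ind_{\tau\le t-1}\right]\ge 0,
\]
the non-negativity being convexity of $|\cdot|$. Combining the lemma with the induction hypothesis gives $U_{M_t}(x)\le\frac12(U_{t-1}(x-1)+U_{t-1}(x+1))$; together with $U_{M_t}(x)\le U_{S_\tau}(x)=U_\mu(x)$ from Proposition \ref{prop:st}(i) and the recursion \eqref{eq:differenceequation}, this yields $U_{M_t}(x)\le U_t(x)$, closing the induction.

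The two easy implications follow quickly. For (ii)$\Rightarrow$(iii), a randomized Root stopping time with $\tau_R(\mathbf b,\mathbf r)\le T$ is by definition an element of $\cT_T$, and (ii) already records $S_{\tau_R(\mathbf b,\mathbf r)}\sim\mu$. For (iii)$\Rightarrow$(i), take $\tau\in\cT_T$ with $S_\tau\sim\mu$ and apply the concluding inequality at $t=T$: since $\tau\wedge T=\tau$, we get $U_\mu=U_{S_\tau}=U_{S_{\tau\wedge T}}\le U_T$, while Proposition \ref{prop:Rootst}(i) gives $U_T\le U_\mu$; hence $U_T=U_\mu$, which is (i).

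The remaining implication (i)$\Rightarrow$(ii) is the constructive core and the step I expect to be the main obstacle. Assuming $U_T=U_\mu$, I would build $(\mathbf b,\mathbf r)$ from the evolutional functions: set $b(x)$ to be the first $t$ at which the cap becomes binding in \eqref{eq:differenceequation}, i.e. $\frac12(U_{t-1}(x-1)+U_{t-1}(x+1))\ge U_\mu(x)$, and choose $r(x)\in[0,1]$ so that the mass absorbed on the barrier lowers the diffused potential exactly onto $U_\mu(x)$. The verification is another induction on $t\le T$ establishing $U_{S_{\tau_R(\mathbf b,\mathbf r)\wedge t}}=U_t$: one matches the potential-evolution identity of Proposition \ref{prop:st}(iii), which increments the stopped potential at $x$ by the not-yet-stopped mass there, against the recursion for $U_t$, with $r(x)$ calibrated precisely to reproduce the $\wedge\,U_\mu(x)$ truncation. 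The delicate points are checking that this $r(x)$ indeed lies in $[0,1]$ and that the denominator in its defining ratio does not vanish, handling the extreme states with $|x|$ near $T$ where $U_\mu(x)=|x|$ forces $b(x)=|x|$, and confirming that $U_T=U_\mu$ guarantees $b(x)\le T$ for every reachable $x$, so that $\tau_R(\mathbf b,\mathbf r)\le T$ and, at $t=T$, $U_{S_{\tau_R(\mathbf b,\mathbf r)}}=U_T=U_\mu$, i.e. $S_{\tau_R(\mathbf b,\mathbf r)}\sim\mu$.
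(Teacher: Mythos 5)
The parts you actually prove are correct, and your route to the concluding inequality $U_{S_{\tau\wedge t}}\le U_t$ is in fact cleaner than the paper's: you obtain the one-step bound
\[
U_{S_{\tau\wedge t}}(x)\le\tfrac12\bigl(U_{S_{\tau\wedge (t-1)}}(x-1)+U_{S_{\tau\wedge (t-1)}}(x+1)\bigr),\qquad x\in\bZ,
\]
directly by conditioning on $\cF_{t-1}$, whereas the paper argues through the parity identity \eqref{eq:stevolution} of Proposition \ref{prop:st}(iii) and convexity of $U_{S_{\tau\wedge t}}$, with a three-way case split. Your bound, combined with $U_{S_{\tau\wedge t}}\le U_{S_\tau}=U_\mu$ from Proposition \ref{prop:st}(i) and the recursion \eqref{eq:differenceequation}, closes the induction in one line and makes (iii)$\Rightarrow$(i) an immediate corollary, exactly as in the paper; (ii)$\Rightarrow$(iii) is trivial in both.

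However, (i)$\Rightarrow$(ii) --- which you rightly identify as the constructive core --- is left as a plan, and this is a genuine gap: the ``delicate points'' you defer (that $r(x)\in[0,1]$, that the induction gives $U_{S_{\tau_R\wedge t}}=U_t$, that $b(x)\le T$) are not routine checks but constitute essentially the entire proof of this implication in the paper. Moreover, your barrier as literally defined sits one step too late. A coin toss at a node $(t,x)$ does not change the law of $S_{\tau\wedge t}$; by \eqref{eq:stevolution} it only changes the law of $S_{\tau\wedge(t+1)}$, through the surviving mass $\prob(S_t=x,\tau\ge t+1)$. So if $s$ denotes the first $t$ with $\frac12(U_{t-1}(x-1)+U_{t-1}(x+1))\ge U_\mu(x)$ (your $b(x)$), the randomization must occur at time $s-1$: with your convention, all mass at $(s-1,x)$ survives, whence $U_{S_{\tau_R\wedge s}}(x)=\frac12(U_{s-1}(x-1)+U_{s-1}(x+1))>U_\mu(x)$ whenever randomization is genuinely needed, and this overshoot is irreversible because $t\mapsto U_{S_{\tau_R\wedge t}}(x)$ is non-decreasing, contradicting $S_{\tau_R}\sim\mu$. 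This is why the paper sets $b(x)=\inf\{t\ge|x|:U_{t+1}(x)=U_\mu(x)\}$ and calibrates
\[
r(x)=\frac{U_{b(x)}(x-1)+U_{b(x)}(x+1)-2U_\mu(x)}{U_{b(x)}(x-1)+U_{b(x)}(x+1)-2U_{b(x)}(x)},
\]
so that the surviving mass at $(b(x),x)$ lifts the potential to exactly $U_\mu(x)$ at time $b(x)+1$. With that convention, $r(x)\in[0,1]$ follows from $U_{b(x)}(x)\le U_\mu(x)\le\frac12\bigl(U_{b(x)}(x-1)+U_{b(x)}(x+1)\bigr)$, and condition (i) together with the monotonicity of $t\mapsto U_t$ gives $b(x)\le T-1$ on $\bZ\cap(-T,T)$, hence $\tau_R\le T$. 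These verifications, together with the induction matching Proposition \ref{prop:st}(iii) against \eqref{eq:differenceequation} case by case (before, at, and past the barrier), still need to be written out for the proposition to be proved.
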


\begin{proof}
Proof of (i) $\to$ (ii).
To show the existence of a randomized Root stopping time embedding $\mu$ we first construct its stopping barrier ${\bf b}$.
For $x \in \mathbb{Z}$, define
\begin{align}\label{eq:Rootstboundary}
b(x) := \inf\{t \geq |x|: U_{t+1}(x) = U_{\mu}(x)\}.
\end{align}
It follows from Proposition \ref{prop:Rootst} that $b(x)= x + 2k$ for some $k \in \mathbb{Z}$.
Next define the probabilities of the binary random variables $\{\xi_{t,x}\}$, $\mathbb{P}(\xi_{t,x}  = 0) = 1- \mathbb{P}(\xi_{t,x} = 1)$. For each $x \in \mathbb{Z}$,
\begin{align}\label{eq:Rootstrandomization}
\begin{cases}
\mathbb{P}(\xi_{t,x}  = 0) = 0 & \text{ for } t < b(x) \;, \\
\mathbb{P}(\xi_{t,x}  = 0) = r(x) := \frac{ U_{t}(x-1) + U_{t}(x+1) - 2 U_{\mu}(x)}{ U_{t}(x-1) + U_{t}(x+1) - 2 U_{t}(x)} & \text{ for } t = b(x) \;, \\
\mathbb{P}(\xi_{t,x}  = 0) = 1 & \text{ for } t > b(x) \;.
\end{cases}
\end{align}
Note that $r(x)=1$ is only possible if $U_\mu(x)=|x|$ which happens for $x$ outside of the support of $\mu$. For other $x$ we have $U_\mu(x)> |x|$ and a randomization, i.e., $0<r(x)<1$, happens at a node $(t,x)$ when $t = b(x)$ and
\begin{align*}
\frac{U_{t}(x-1) + U_{t}(x+1)}{2} > U_{\mu}(x)> U_t(x) \;.
\end{align*}
Let $\tau=\tau_R({\bf b}, {\bf r})$ be the randomized Root stopping time in \eqref{eq:Rootdef}. By (i), $U_T \geq U_{\mu}$ and hence $b(x) \leq T-1$ $\forall x \in \mathbb{Z} \cap (-T,T)$. It follows that $\tau \leq T$ as required.

To show $S_\tau \sim \mu$, we need only to establish $U_{S_{\tau}} (x) = U_{\mu}(x)$ $\forall x \in \mathbb{Z}$. Note $U_{S_0}(x) = U_{S_{\tau \wedge 0}}(x) = U_0(x)$. Suppose we have $U_{S_{\tau \wedge t}}(x) = U_{t}(x)$ for $t \leq n-1$. It follows from  \eqref{eq:potential_rep} that
\begin{align*}
\mathbb{P}(S_{\tau \wedge t} = x) = \frac{U_{t}(x+1) + U_{t}(x-1)}{2} - U_{t}(x),\;\;t \leq n-1.
\end{align*}
On the other hand, by Proposition \ref{prop:st}, we have
\begin{align*}
U_{S_{\tau \wedge n}}(x) = U_{S_{\tau \wedge (n-1)}}(x) + \mathbb{P}(S_{n-1} = x, \tau \geq n) {\bf 1}_{x \in \mathcal{K}} \;\forall x \in \mathbb{Z},
\end{align*}
where $\mathcal{K} = \{k \in \mathbb{Z} | k = n - 1 + 2j, j \in \mathbb{Z} \}$.

If $U_{n-1}(x) = U_{\mu}(x) = U_{n}(x)$, then $b(x) < n-1$ and $\mathbb{P}(S_{n-1} = x, \tau \geq n) = 0$; hence
\begin{align*}
U_{S_{\tau \wedge n}}(x) = U_{S_{\tau \wedge (n-1)}}(x) = U_{n-1}(x) = U_{\mu}(x) = U_{n}(x) \;.
\end{align*}
If $U_{n-1}(x) < U_{\mu}(x) = U_{n}(x)$, then $b(x) = n-1$ and necessarily $x \in \mathcal{K}$. We have, by definition,
\begin{align*}
\mathbb{P}(S_{n-1} = x, \tau \geq n) & = \mathbb{P}(S_{\tau \wedge (n-1)} = x) \mathbb{P}(\xi_{n-1,x} = 1) \\
& = \left(\frac{U_{n-1}(x+1) + U_{n-1}(x-1)}{2} - U_{n-1}(x) \right) \mathbb{P}(\xi_{n-1,x} = 1).
\end{align*}
It then follows that
\begin{align*}
U_{S_{\tau \wedge n}}(x) & = U_{S_{\tau \wedge (n-1)}}(x) + \mathbb{P}(S_{n-1} = x, \tau \geq n) {\bf 1}_{x \in \mathcal{K}} \\
& = U_{n-1}(x) \mathbb{P}(\xi_{n-1,x} = 0) + \frac{U_{n-1}(x+1) + U_{n-1}(x-1)}{2} \mathbb{P}(\xi_{n-1,x} = 1) \\
& = U_{n-1}(x) \frac{ U_{n-1}(x-1) + U_{n-1}(x+1) - 2 U_{\mu}(x)}{ U_{n-1}(x-1) + U_{n-1}(x+1) - 2 U_{n-1}(x)} \\
& \quad + \frac{U_{n-1}(x+1) + U_{n-1}(x-1)}{2} \frac{2 U_{\mu}(x) - 2 U_{n-1}(x)}{ U_{n-1}(x-1) + U_{n-1}(x+1) - 2 U_{n-1}(x)} \\
& = U_{\mu}(x) = U_n(x) \;.
\end{align*}
Finally, if $U_n(x) < U_{\mu}(x)$, then $b(x) > n-1$. By definition, we have $U_n(x) = \frac{U_{n-1}(x+1) + U_{n-1}(x-1)}{2}$ and $\prob(S_\tau=x)=\prob(S_\tau=x, \tau\geq n)$. Consequently,
\begin{align*}
\mathbb{P}(S_{n-1} = x, \tau \geq n) & = \mathbb{P}(S_{\tau \wedge (n-1)} = x) = \frac{U_{n-1}(x+1) + U_{n-1}(x-1)}{2} - U_{n-1}(x) \;.
\end{align*}
Thus, if $x \in \mathcal{K}$, then
\begin{align*}
U_{S_{\tau \wedge n}}(x)  & = U_{S_{\tau \wedge (n-1)}}(x) + \mathbb{P}(S_{n-1} = x, \tau \geq n) {\bf 1}_{x \in \mathcal{K}} \\
& = U_{n-1}(x) + \frac{U_{n-1}(x+1) + U_{n-1}(x-1)}{2} - U_{n-1}(x) \\
& = \frac{U_{n-1}(x+1) + U_{n-1}(x-1)}{2} = U_n(x).
\end{align*}
If $x \notin \mathcal{K}$, then, noting that $\prob(S_\tau=x,\tau<n)=0$, we have $\mathbb{P}(S_{\tau \wedge (n-1)} = x) = 0$. As a result, $\frac{U_{n-1}(x+1) + U_{n-1}(x-1)}{2} = U_{n-1}(x)$ and
\begin{align*}
U_{S_{\tau \wedge n}}(x)  & = U_{S_{\tau \wedge (n-1)}}(x) + \mathbb{P}(S_{n-1} = x, \tau \geq n) {\bf 1}_{x \in \mathcal{K}} \\
& = U_{n-1}(x) = \frac{U_{n-1}(x+1) + U_{n-1}(x-1)}{2} = U_n(x) \;.
\end{align*}

In summary, $U_{S_{\tau \wedge n}} (x) = U_{n}(x)$ $\forall n \in \mathbb{Z}^{+}$. As a result, $U_{S_{\tau}}(x) = U_{S_{\tau \wedge T}}(x) = U_{T}(x)  = U_{\mu}(x)$ $\forall x \in \mathbb{Z}$, namely,  $S_\tau \sim \mu$.

Proof of  (ii) $\to$ (iii). This is trivial.

Proofs of  (iii) $\to$ (i)  and the last assertion of the theorem. We start with the latter assuming (iii) holds.
Let $\tau\in {\cT}_T$ such that $S_\tau\sim \mu$. Note that $U_{S_{\tau \wedge 0}} \equiv U_{0}$. Suppose $U_{S_{\tau \wedge t}}(x) \leq U_{t}(x)$ $\forall x$, for some $t < T$. Let $\tilde S_t = |S_{\tau \wedge t} - x|$, then $(\tilde S_t: t \geq 0)$ is a submartingale. Hence, $U_{S_{\tau \wedge 0}}(x) \leq ... \leq U_{S_{\tau \wedge (t-1)}}(x) \leq U_{S_{\tau \wedge t}}(x) \leq ... \leq U_{S_{\tau \wedge T}}(x) = U_{\mu}(x)$ $\forall x$. By \eqref{eq:stevolution}, if $x \notin \mathcal{K}$, then $U_{S_{\tau \wedge t}}(x) = U_{S_{\tau \wedge (t-1)}}(x) \leq U_{t-1}(x) \leq U_t(x)$; if $x \in \mathcal{K}$ and $U_t(x) = U_{\mu}(x)$, then $U_{S_{\tau \wedge t}}(x) \leq U_{\mu}(x) = U_t(x)$; and if $x \in \mathcal{K}$ and $U_t(x) < U_{\mu}(x)$, then
\begin{align*}
U_{S_{\tau \wedge t}}(x) & \leq \frac{U_{S_{\tau \wedge t}}(x-1) + U_{S_{\tau \wedge t}}(x+1)}{2} \\
& = \frac{U_{S_{\tau \wedge (t-1)}}(x-1) + U_{S_{\tau \wedge (t-1)}}(x+1)}{2} \leq \frac{U_{t-1}(x-1) + U_{t-1}(x+1)}{2} = U_t(x) \;,
\end{align*}
where the first inequality is due to the convexity of $U_{S_{\tau \wedge t}}(\cdot)$, and the second equality is due to $x-1, x+1 \notin \mathcal{K}$. This proves the last assertion of the theorem. Next,
taking $t=T$ and noting that $\tau\leq T$ we have $U_\mu=U_{S_{\tau\wedge T}}\leq U_T$ which shows (iii) $\to$ (i).
\end{proof}

We are now ready to prove Theorem \ref{coro:Rootst}.
The ``only if" part follows immediately from Proposition \ref{thm:rootstoppingtime}-(i) and the construction of $U_T(x)$. To prove the ``if" part, supposed (\ref{3}) holds. First, we have $U_T(x)=U_\mu(x)=|x|$ for $|x|\geq T$. For $x = -(T-2),-(T-4),...,T-4,T-2$, it follows from (\ref{3}) that $U_T(x) = \frac{U_{T-1}(x+1)+U_{T-1}(x-1)}{2} \wedge U_\mu(x) = U_\mu(x)$. Next, by Proposition \ref{prop:Rootst}, $U_T(x) = U_{T-1}(x)$ for all $x$ with $x = T+2j$ for some $j \in \mathbb{Z}$. As a result, for $x = -(T-1),-(T-3),...,T-3,T-1$, we have $U_\mu(x+1) = U_T(x+1) = U_{T-1}(x+1)$, $U_\mu(x-1) = U_T(x-1) = U_{T-1}(x-1)$, and, hence, $U_{\mu}(x) \leq \frac{U_\mu(x+1)+U_\mu(x-1)}{2} = \frac{U_{T-1}(x+1)+U_{T-1}(x-1)}{2}$, where the first inequality is due to the convexity of $U_\mu$, and it follows that there exists the randomized Root stopping time that embeds $\mu$ in the random walk with finite time $T$. We conclude that $U_T(x)\geq U_\mu(x)$  $\forall x\in \bZ$ and, hence, Proposition  \ref{thm:rootstoppingtime} yields the desired result.

\vspace{2ex}
\section{Proof of Proposition \ref{prop:onehrand}}

Suppose at time 0, the gambler takes a randomized strategy with probability $r$ of ``stop" and probability $1-r$ of ``continue",  where $r \in [0,1]$. Let $q = (1-r)/2 \in [0,1/2]$. 
With utility function $u(x) = u_+(x) {\bf 1}_{x \ge 0} - \lambda u_-(-x) {\bf 1}_{x < 0}$, the CPT value of this strategy is given by $u_+(1) w_+(q) - \lambda u_-(1) w_-(q)$, whose derivative in $q$ is $u_+(1) w_+'(q) - \lambda u_-(1) w_-'(q)$.
If follows from the assumption  $\lim_{p \to 0} [w_+'(p)/w_-'(p)] > \lambda [u_-(1)/u_+(1)]$ that $u_+(1) w_+(q) - \lambda u_-(1) w_-(q)$ is {\it strictly} increasing in $q \in [0,\tilde q]$ for some $\tilde q \in (0,1/2]$. Hence, there exists  $\bar q > 0$ such that $u_+(1) w_+(\bar q) - \lambda u_-(1) w_-(\bar q) > 0$.

\vspace{2ex}
\section{Proof of Theorem \ref{thm:convergence}}

For any $T$, we have
\begin{align*}
V(S_{\tau^* \wedge T})\leq \underset{\tau \in {\mathcal{T}}_T}{\sup} V(S_\tau) \leq V(S_{\tau^*}) = v^*.
\end{align*}
Since $S_{\tau^*}$ is lower-bounded a.s., there exists $N>0$ such that $S_{\tau^*} > -N$ a.s. For any $\epsilon > 0$, we can choose $M$ large enough such that
\begin{align*}
& \sum_{n=1}^M u_+(n)\left(w_+\left(\mathbb{P}(S_{\tau^*} \geq n)\right)-w_+\left(\mathbb{P}(S_{\tau^*} \geq n+1)\right)\right)\\
    & \quad - \lambda \sum_{n=1}^N u_-(n)\left(w_-\left(\mathbb{P}(S_{\tau^*} \leq -n)\right)-w_-\left(\mathbb{P}(S_{\tau^*} \leq -n-1)\right)\right) =: \tilde v > v^* - \epsilon/2.
\end{align*}
On the other hand, since $\tau^*$ is finite a.s., the distribution of $S_{\tau^* \wedge T}$ converges to that of $S_{\tau^*}$. Then there is sufficiently large $T$ such that
\begin{align*}
V(S_{\tau^* \wedge T}) \ge & \sum_{n=1}^M u_+(n)\left(w_+\left(\mathbb{P}(S_{\tau^* \wedge T} \geq n)\right)-w_+\left(\mathbb{P}(S_{\tau^* \wedge T} \geq n+1)\right)\right)\\
    & \quad - \lambda \sum_{n=1}^N u_-(n)\left(w_-\left(\mathbb{P}(S_{\tau^* \wedge T} \leq -n)\right)-w_-\left(\mathbb{P}(S_{\tau^* \wedge T} \leq -n-1)\right)\right) \\
    & > \tilde v - \epsilon/2
    > v^* - \epsilon.
\end{align*}
This establishes the desired result.

\end{document}